\pgfplotsset{compat=1.14} 
\newcolumntype{?}{!{\vrule width 1pt}}
\newcolumntype{+}{!{\vrule width 0.15mm}}
\newtheorem{theorem}{Theorem}
\newtheorem{lemma}[theorem]{Lemma}
\newtheorem{corollary}[theorem]{Corollary}
\theoremstyle{definition}
\newtheorem{example}{Example}[section]
\theoremstyle{definition}
\definecolor{shadecolor}{gray}{.95}%
\definecolor{red2}{HTML}{DBA8B1}%
\newcommand{\x}{\Asterisk}
\newcommand{\tRnd}{\text{RandBC}}
\newcommand{\nix}[1]{}
\newenvironment{sysmatrix}[1]
 {\left(\begin{array}{@{}#1@{}}}
 {\end{array}\right)}
\newlength{\rowidth}%
\begin{document}
\title{Lifting Constructions of PDAs for Coded Caching with Linear Subpacketization}
\author{Aravind~V~R,
        Pradeep~Kiran~Sarvepalli
        and~Andrew~Thangaraj,~\IEEEmembership{Senior Member,~IEEE}%
\thanks{All authors are with the Department
of Electrical Engineering, Indian Institute of Technology Madras,
Chennai 600036, India. 
Email: \{ee13d205, pradeep, andrew\}@ee.iitm.ac.in}%
}
\maketitle

\begin{abstract}
Coded caching is a technique where multicasting and coding opportunities are utilized to achieve better rate-memory tradeoff in cached networks. A crucial parameter in coded caching is subpacketization, which is the number of parts a file is to be split into for coding purposes. The original Maddah-Ali-Niesen scheme has order-optimal rate at a subpacketization growing exponentially with the number of users. In contrast, placement and delivery schemes in coded caching, designed using placement delivery arrays (PDAs), can have linear subpacketization with a penalty in rate. In this work, we propose several constructions of efficient PDAs through \textit{lifting}, where a base PDA is expanded by replacing each entry by another PDA. By proposing and using the notion of Blackburn-compatibility of PDAs, we provide multiple lifting constructions with increasing coding gains. We compare the constructed coded caching schemes with other existing schemes for moderately high number of users and show that the proposed constructions are versatile and achieve a good rate-memory tradeoff at low subpacketizations. 
\end{abstract}

\renewcommand{\arraystretch}{0.75}
\section{Introduction}
Communication networks get overburdened with data traffic during peak hours and underutilized in off-peak hours. Caching is a technique to alleviate the high transmission load of a server in a communication network during peak hours, and it involves prefetching popular content and storing it nearer to or at the user's device during off-peak hours. Depending on the limitations on memory, a part of these files would be prefetched and once the user makes a demand, the rest of the requested file will be transmitted. The fundamental problem in caching is the optimal trade-off between the cache memory with each user versus the transmission load.

Maddah-Ali and Niesen had shown in their seminal paper that coding can achieve significant gain over uncoded caching by making use of multicast opportunities \cite{maddah2014fundamental}.
They showed their scheme to be order optimal with an information-theoretic lower bound on the number of files needed to be transmitted (known as \textit{rate}).
This scheme achieves a \textit{coding gain} (also known as \textit{global caching gain}) in addition to the \textit{caching gain}.
Asymptotically, its coding gain is proportional to the number of users and that results in a rate independent of the number of users.
A version of Maddah-Ali-Niesen (M-N) scheme, optimal for uncoded prefetching, was presented in \cite{yu2017exact}. Though the exact capacity expression for rate of coded caching is still an open problem, several bounds have been presented in  \cite{sengupta2015improved,ghasemi2017improved,wang2016new,yu2018characterizing,wan2016optimality}. The problem has been studied in several settings like decentralized caching \cite{maddah2015decentralized}, non-uniform demands \cite{niesen2016coded}, hierarchical caching \cite{karamchandani2016hierarchical}, coded prefetching \cite{chen2016fundamental, tian2018caching, gomez2018fundamental}, content security \cite{sengupta2014fundamental,ravindrakumar2016fundamental}, demand privacy \cite{wan2020coded,aravind2020subpacketization,kamath2020demand} to name a few.

An important parameter of interest in coded caching is \textit{subpacketization}. It is the number of parts a file will be split into, for the purpose of coding.
In the standard Maddah-Ali-Niesen scheme \cite{maddah2014fundamental}, the subpacketization, denoted $f$, grows exponentially with $K$ {and is given by $f= \binom{K}{KM/N} \approx 2^{K H\left(M/N\right)}$}.
This limits the utility of the scheme in practical scenarios where there may be a large number of users (large $K$). Hence, reducing subpacketization is important in coded caching schemes.
It was shown in \cite{shangguan2018centralized} that for rate independent of the number of users, the subpacketization should be superlinear in the number of users.
A few other bounds relating subpacketization with other parameters were proposed in \cite{cheng2017coded,chittoor2020subexponential}.
The pursuit towards lower subpacketization had lead to formulating coded caching in combinatorial frameworks.  
Under certain constraints, the coded caching problem is equivalent to the design of {placement delivery arrays} \cite{yan2017placement}, caching matrices \cite{agrawal2019coded}, partial Latin {rectangles} with Blackburn property \cite{shangguan2018centralized,wanless2004partial}, certain 3-uniform 3-partite hypergraphs \cite{shangguan2018centralized}, resolvable designs from linear block codes \cite{tang2018coded}, induced matchings of a Ruzsa-Szeméredi graph \cite{shanmugam2017coded,shanmugam2017unified}, strong edge coloring of the bipartite graph \cite{yan2017bipartite} or a clique cover for the complement of the square of the associated line graph \cite{krishnan2018coded}.
These frameworks require the cache contents to be uncoded and symmetric with respect to all files.
When the number of users is very large, rate of $O(K^\delta)$ for small $\delta$ is achievable with linear subpacketization (in $K$) from schemes based on dense Ruzsa-Szeméredi graphs \cite{alon2012nearly,shanmugam2017coded}.
For reducing subpacketization in practical scenarios, a few families of constructions have been built based on the combinatorial frameworks we have discussed.
Some of these schemes are summarised in Table~\ref{tab:schemes}.

\begin{table}[t]
\caption{Parameters of some known coded caching schemes.}
\label{tab:schemes}
\centering
\adjustbox{max width=\textwidth}{
\begin{tabular}{|c|c|c|c|c|c|}
\hline
\textbf{Construction}                                                                & $\boldsymbol{K}$    & $\boldsymbol{M/N}$                             & $\boldsymbol{R}$                          & $\boldsymbol{f}$           & \textbf{Constraints}                                       \\ \hline
Maddah-Ali-Niesen \cite{maddah2014fundamental}                                                                   & $K$    & $\frac{t}{K}$                          & $\approx \frac{N}{M}$                 & $\approx 2^{K H(M/N)}$ &    $K,t \in \mathcal{Z}^+, K>t$                                                        \\ \hline
    Grouping by Shanmugam \textit{et al.} \cite{shanmugam2016finite}    & $ck$     & $\frac{t}{k}$                          & $\approx c\frac{N}{M}$                 & $\approx 2^{k H(M/N)}$ &    $c,k,t \in \mathcal{Z}^+, k>t$                                                        \\ \hline
Tang-Ramamoorthy   \cite{tang2018coded}                                                                  & $qn$            & $\frac{1}{q}$                                      & $\frac{(q-1)n}{k}+1$                  & $q^k$                  &                                                            \\ \hline
Shangguan \textit{et al.}   \cite{shangguan2018centralized}                                        & $\binom{n}{b}$  & $\frac{\binom{n}{a}-\binom{n-b}{a}}{\binom{n}{a}}$ & $\frac{\binom{n}{a+b}}{\binom{n}{a}}$ & $\binom{n}{a}$         &                                                            \\ \hline
\multirow{2}{*}{Agrawal \textit{et al.} \cite{agrawal2019coded}                }                           & $v$  & $1-\frac{k}{v}$ & $\frac{k(k-1)}{v-1}$ & $\frac{v(v-1)}{k(k-1)}$         & $\exists$ a $(v,k,1)$-BIBD.  
     \\ \cline{2-6}
& $v$  & $1-\frac{k-1}{v}$ & $1$ & $kv$         &       $\exists$ a simple symmetric $(v,k,2)$-BIBD.
               \\ \hline
    Alon \textit{et al.} \cite{alon2012nearly}, Shanmugam \textit{et al.} \cite{shanmugam2017coded}
 & $\ge K(\delta)$ & $\ge K^{-\epsilon(\delta)}$ & $K^\delta$ & $K$ & $\epsilon(\delta) \rightarrow 0$ and $K(\delta) \rightarrow \infty$ as $\delta \rightarrow 0$.
 \\ \hline
\end{tabular}}
\end{table}

In this work, we present a few construction schemes for placement delivery arrays.
\textit{Placement delivery array} or PDA developed by Yan \textit{et al.} captures the placement and delivery schemes as non-integer and integer entries in an array that satisfies some conditions \cite{yan2017placement}.
The number of columns of a PDA indicates the number of users while the number of rows indicates the subpacketization.
We focus on PDAs where the number of rows is linear with the number of columns.
Drawing inspiration from the lifting constructions for low-density parity-check codes \cite{thorpe2003low}, we propose lifting constructions for PDAs, where we use PDAs of small size to obtain larger PDAs.
We introduce a new technical notion of Blackburn-compatibility for PDAs that enable these lifting constructions. We propose a variety of constructions for PDAs which satisfy this constraint. This includes algebraic and randomized constructions. With the lifting constructions using these PDAs, we obtain good memory-rate tradeoffs with linear subpacketization. In particular, when the number of users has many divisors or is a power of $2$, the memory-rate tradeoffs are close to that of the standard Maddah-Ali and Niesen scheme. The random construction ensures PDAs satisfying Blackburn compatibility with good performance for arbitrary parameters. Our methods perform well both in terms of obtaining good coding gains and versatility.
Our specific contributions are as follows:
\begin{compactenum}[i)]
\item We propose construction schemes for $2$-regular placement delivery arrays for a wide range of parameters. The coded caching schemes from these PDAs are efficient in low memory regimes. This also gives a family of base PDAs for the subsequent lifting constructions.
\item We present a basic lifting construction that takes any two PDAs to construct larger PDAs with higher coding gain.
\item We propose the notion of Blackburn compatibility between PDAs and present a general lifting construction that uses a base PDA and a set of Blackburn compatible PDAs.
\item We present several constructions of Blackburn compatible PDAs for the lifting constructions which includes both algebraic and randomized constructions.
\item Using a combination of the constructions we propose, we demonstrate that significant coding gain and good memory-rate tradeoffs can be achieved with linear subpacketization.
\end{compactenum}
{The coded caching schemes that are constructed target the moderate (non-asymptotic) regime of parameters and are shown to be competitive with other existing schemes, particularly in terms of subpacketization.} We recently became aware of some existing works \cite{zhong2020placement,michel2019placement} that use the idea of combining PDAs to obtain new PDAs. Our constructions are more general in nature and follow a different approach.

The rest of the paper is organized as follows.
In Section~\ref{sec:ps}, we describe the system setup and the problem statement. 
We present the constructions for $2$-PDAs for a range of parameters in Section~\ref{sec:construction}.
In Section~\ref{sec:lifting}, we propose the lifting constructions for placement delivery arrays  from which we can construct coded caching schemes with low subpacketization. In Section~\ref{sec:blackburn}, we propose several constructions for Blackburn compatible PDAs which are the building blocks for the lifting constructions.
We present our results and compare them with existing works in Section~\ref{sec:results} and conclude with a brief summary of our work in Section~\ref{sec:conc}. An earlier version of this paper also contained some additional material on demand privacy and other algebraic constructions. Interested readers may refer to \cite{aravind2020coded}.

\section{Preliminaries}\label{sec:ps}
\subsection{Coded caching}
Consider a server, holding $N$ files $W_i$, $i\in[N]=\{0,1,\ldots, N-1\}$, of $F$ bits each, connected to $K$ users via a multicast link. 
User $k$, $k\in [K]$, has a cache $Z_k$ of size $MF$ bits.
Coded caching works in two phases. In the first phase, called the \textit{placement phase}, the cache $Z_k$ of User $k$ is populated with content by the server, while being unaware of the files demanded by the users. 
In the second phase, called the \textit{delivery phase}, User $k$ demands file $D_k \in [N]$ from the server.
Let $\boldsymbol{D}= (D_0, D_1, \ldots, D_{K-1})$. 
Based on the demands and stored cache contents, the server multicasts packets of the same size.
The entire multicast transmission from the server is denoted $X^{\boldsymbol{D}}$ for a demand vector $\boldsymbol{D}$, and we suppose that the length of $X^{\boldsymbol{D}}$ is $RF$ bits. 
The quantities $M$ and $R$ are measures of cache size and rate of transmission, respectively. 

The main requirement in a coded caching scheme is that User~$k$ should be able to decode the file $W_{D_k}$ using $Z_k$ and $X^{\boldsymbol{D}}$. 
We denote a coded caching scheme with $K$ users, $N$ files, local cache size $M$, and rate $R$ as a $(K,N;M,R)$ coded caching scheme, or as a $(K,N)$ scheme.

\subsection{Coded caching schemes from PDAs}
We use the framework of placement delivery arrays for centralized coded caching schemes \cite{yan2017placement}. 
For positive integers $K$, $f$, $Z$ and a set of integers $\mathcal{S}$, a $(K, f, Z, \mathcal{S})$ \textit{placement delivery array} is an $f\times K $ matrix  $P=[p_{j,k}]$, $j\in[f],k\in[K]$, containing either a ``$\x$'' or integers from $\mathcal{S}$ in each array cell such that they satisfy the following conditions.
\begin{enumerate}[label=C\arabic*., ref=C\arabic*]
    \item The symbol $\x$ appears $Z$ times in each column.\label{cond:equalZ}
    \item Each integer $s\in \mathcal{S}$ occurs at least once in the array.\label{cond:everysOnce}
    \item (Blackburn property) If the entries in two distinct cells $p_{j_1,k_1}$ and $p_{j_2,k_2}$ are the same integer $s\in\mathcal{S}$, then $p_{j_{1},k_{2}}=p_{j_{2},k_{1}}=\Asterisk$.\label{cond:blackburn}
\end{enumerate}
If there is no ambiguity, we will use the notation $(K,f,Z,S)$, where $S$ is a positive integer, for a PDA implicitly assuming $\mathcal{S}=\{0,1,\ldots,S-1\}$. The construction of a coded caching scheme from a PDA was proved by Yan \textit{et al.} \cite{yan2017placement}, and is reproduced below for reference.
\begin{theorem}[Coded caching schemes from PDAs \cite{yan2017placement}]\label{th:yanpda}
For a given $(K,f,Z,\mathcal{S})$ PDA, $P = [p_{j,k}]_{f\times K}$,
there exists a corresponding  $(K,N;M,R)$ caching system with subpacketization $f$, $M/N = Z/f$ and $R = |\mathcal{S}|/f$.
\end{theorem}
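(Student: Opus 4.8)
The plan is to exhibit a fully explicit caching scheme read directly off the array $P$, and then verify the three claimed quantities---subpacketization $f$, memory ratio $M/N = Z/f$, and rate $R = |\mathcal{S}|/f$---together with the essential property of decodability. First I would index the $f$ subpackets of each file by the rows of $P$, writing $W_i = (W_{i,0},\ldots,W_{i,f-1})$ with each $W_{i,j}$ of size $F/f$ bits, and let the $K$ columns of $P$ correspond to the $K$ users. The $\x$ entries will dictate the placement and the integer entries will dictate the delivery.

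For the placement phase I would set the cache of User $k$ to be $Z_k = \{\,W_{i,j} : i \in [N],\ p_{j,k} = \x\,\}$, so User $k$ stores subpacket $j$ of every file exactly when the $(j,k)$ entry is a star. By condition \ref{cond:equalZ} each column contains exactly $Z$ stars, so User $k$ stores $Z$ of the $f$ subpackets of each of the $N$ files, giving a cache of $NZ(F/f)$ bits; hence $M = NZ/f$, i.e.\ $M/N = Z/f$, and the subpacketization is $f$ by construction. For the delivery phase, given a demand vector $\boldsymbol{D}$, I would have the server transmit, for each integer $s \in \mathcal{S}$, the single coded packet $X_s = \bigoplus_{(j,k):\,p_{j,k}=s} W_{D_k,\,j}$, the XOR ranging over all cells whose entry equals $s$. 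By condition \ref{cond:everysOnce} there are exactly $|\mathcal{S}|$ such packets, each of size $F/f$ bits, so the total load is $|\mathcal{S}|(F/f)$ bits and $R = |\mathcal{S}|/f$, as required.

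The substantive step, and the one I expect to be the main obstacle, is decodability, and this is precisely where the Blackburn property \ref{cond:blackburn} enters. User $k$ already holds every subpacket $W_{D_k,j}$ with $p_{j,k}=\x$, so it remains to recover each $W_{D_k,j}$ for which $p_{j,k}=s$ is an integer. I would fix such a cell $(j,k)$ and show that $W_{D_k,j}$ can be extracted from $X_s$ by cancelling every other term. A generic other term comes from a distinct cell $(j',k')$ with $p_{j',k'}=s$; applying \ref{cond:blackburn} to the equal entries $p_{j,k}=p_{j',k'}=s$ forces $p_{j',k}=\x$, which means User $k$ has cached subpacket $j'$ of \emph{every} file and in particular holds $W_{D_{k'},j'}$. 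Thus User $k$ can locally reconstruct and XOR out all interfering terms of $X_s$, leaving only $W_{D_k,j}$.

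A small preliminary observation I would record first is that the same use of \ref{cond:blackburn} rules out two equal integers in a common row or column: if $p_{j,k}=p_{j',k}=s$ with $j'\neq j$, the property would force $p_{j,k}=\x$, a contradiction. Consequently $X_s$ contains at most one unknown for User $k$, so the cancellation is unambiguous and each desired subpacket is recovered from exactly one transmission. Combining the placement analysis, the delivery count, and this decoding argument yields the claimed $(K,N;M,R)$ coded caching scheme, completing the proof.
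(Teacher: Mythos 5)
Your proposal is correct and follows essentially the same construction as the paper: identical placement rule, identical per-integer XOR delivery, and the same use of the Blackburn property \ref{cond:blackburn} to show that every interfering term in $X_s$ is cached by the decoding user. The only addition is your explicit remark that an integer cannot repeat within a row or column, which the paper leaves implicit but which follows by the same argument.
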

In the caching scheme corresponding to a PDA, each file $W_i$ is split into $f$ subfiles $W_{i,j}$, $i\in[N]$, $j\in[f]$. Each row of the PDA corresponds to a subfile label and if $p_{j,k}=\Asterisk$, then User~$k$'s cache is loaded with the $j$-th subfile of every file. So the cache $Z_k$ of User $k$ is given by
\begin{align}
    Z_k = \{W_{i,j}:\forall i\in [N], p_{j,k}=\Asterisk \}.
\end{align}
For the demand vector $\boldsymbol{D}=\{D_k: k \in [K] \}$, the server transmits
\begin{align}
    X^{\boldsymbol{D}}=\left\{\bigoplus_{j\in [f], k\in [K], p_{j,k}=s} W_{D_k,j}: s\in \mathcal{S}\right\}.
\end{align}
In the packet of $X^{\boldsymbol{D}}$ corresponding to $s$, User~$k_0$ has all the subfiles occurring in the XOR in its cache except $W_{D_{k_0},j_0}$, where  $j_0$ is such that $p_{j_0,k_0}=s$. This is because, if $p_{j,k}=s$ for $k\ne k_0$, then $j\ne j_0$ and $p_{j,k_0}=\Asterisk$ by \ref{cond:blackburn}. So, User $k_0$ recovers $W_{D_{k_0},j}$ whenever $p_{j,k_0}\ne\Asterisk$.
In this manner, each user can recover the entire file demanded using cache contents and transmissions.

A placement delivery array P is said to be a $g$-regular $(K,f,Z,\mathcal{S})$ PDA or $g$-$(K,f,Z,\mathcal{S})$ PDA or $g$-PDA for short, if each integer in $\mathcal{S}$ appears $g$ times in $P$. This implies that $|\mathcal{S}|g=K(f-Z)=$ number of integer cells in the PDA, or $|\mathcal{S}|=K(f-Z)/g$. The constant $g$, called the \textit{coding gain} \cite{yan2017placement} (or \textit{global caching gain} in \cite{maddah2014fundamental}), indicates the number of users that recover a subfile from a single transmission. 
The rate of the coded caching scheme obtained from a $g$-regular PDA is
\begin{align}
    R &= \frac{|\mathcal{S}|(F/f)}{F} = \frac{K(f-Z)}{fg}. \label{eq:regularPdaRate}
\end{align}
In terms of parameters, our focus is on the regime where subpacketization $f$ is linear in $K$, i.e. $f=mK$ for a positive integer $m$, and the number of users $K$ is \emph{moderately} high. In this regime, we provide explicit constructions for $(K,f,Z,\mathcal{S})$ PDAs that achieve different trade-offs between the per-file cache size $Z/f$ and the rate $|\mathcal{S}|/f$. 

\section{2-regular PDAs}\label{sec:construction}
In this section, we will present some methods for constructing 2-regular PDAs.
Most of these constructions are direct and, in some cases, they are special cases of other known constructions. We include them here for reference as we use them later as ingredients in lifting constructions.

We will consider $(n,n,Z,\mathcal{S})$-PDAs that are $2$-regular. The corresponding $(K,N;M,R)$ coded caching scheme can have arbitrary number of files $N$, $K=n$ users, $M=(NZ/n)$, and the rate $R=(n-Z)/2$ (using Eq.~\eqref{eq:regularPdaRate}). 
We begin with a direct construction of a $2$-regular PDA corresponding to the Maddah-Ali-Niesen scheme \cite{niesen2016coded} for $t=\frac{KM}{N}=1$. 
\begin{lemma}[Dense 2-regular PDAs]\label{lm:diagPDA}
For an integer $n \geq 2$, there exists a $2$-regular $(n,n,1,n(n-1)/2)$ PDA.
\end{lemma}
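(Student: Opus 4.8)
The plan is to give a direct, explicit construction and then verify the three PDA conditions together with $2$-regularity. Since we want an $n \times n$ array with exactly one $\x$ per column ($Z=1$), the cleanest choice is to place the stars on the main diagonal, setting $p_{k,k}=\x$ for every $k \in [n]$; this immediately satisfies \ref{cond:equalZ}. A quick counting check confirms the parameters are consistent: the diagonal accounts for $n$ stars, leaving $n^2-n = n(n-1)$ integer cells. Asking the array to be $2$-regular forces the number of distinct integers to be exactly $n(n-1)/2$, which matches the claimed alphabet size $|\mathcal{S}| = n(n-1)/2$. So the arithmetic is exactly balanced, and the only freedom left is how to assign integers to the off-diagonal cells.

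Next I would fix a bijection $\phi$ between the $\binom{n}{2}$ unordered pairs $\{j,k\}$ with $j \neq k$ and the integer set $\mathcal{S}=\{0,1,\ldots,n(n-1)/2-1\}$, and fill the array \emph{symmetrically}: for $j \neq k$, set $p_{j,k}=p_{k,j}=\phi(\{j,k\})$. By construction each value of $\mathcal{S}$ is assigned to exactly one pair and therefore appears in precisely the two transpose cells $(j,k)$ and $(k,j)$. This gives \ref{cond:everysOnce} (every integer occurs, since $\phi$ is onto) and shows the array is $2$-regular (every integer occurs exactly twice).

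The one condition that genuinely needs to be checked is the Blackburn property \ref{cond:blackburn}, and this is where the diagonal placement pays off. Suppose two distinct cells $p_{j_1,k_1}$ and $p_{j_2,k_2}$ carry the same integer $s$. Because each value occurs only at a transpose pair, these two cells must be $(j,k)$ and $(k,j)$ for the unique pair with $\phi(\{j,k\})=s$, say $(j_1,k_1)=(j,k)$ and $(j_2,k_2)=(k,j)$. The ``cross'' cells required by \ref{cond:blackburn} are then $p_{j_1,k_2}=p_{j,j}$ and $p_{j_2,k_1}=p_{k,k}$, both of which lie on the diagonal and hence equal $\x$. Thus \ref{cond:blackburn} holds automatically.

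I expect no real obstacle here: the construction is self-enforcing, in the sense that placing the two copies of each symbol at transpose positions sends every Blackburn cross-check onto the starred diagonal. If anything, the only point to state carefully is the counting that pins down $|\mathcal{S}| = n(n-1)/2$ and the observation that the two occurrences of a common integer can only be a transpose pair, since that is precisely what makes the Blackburn verification a one-line check rather than a case analysis.
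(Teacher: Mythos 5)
Your construction is exactly the paper's: stars on the main diagonal and each integer of $\mathcal{S}$ placed at a transpose pair of off-diagonal cells (the paper phrases this as filling the cells below the diagonal with $0,\ldots,n(n-1)/2-1$ and mirroring). Your verification of the Blackburn property, which the paper leaves implicit, is correct, so the proposal is sound and follows the same approach.
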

\begin{proof}
Take an $n \times n$ array. Set all diagonal (or anti-diagonal) entries to $\x$. There are $S=n(n-1)/2$ cells below the diagonal. Fill them with the integers from 0 to $S-1$. Symmetrically fill the cells above the diagonal. 
\end{proof}
For a set $\mathcal{S}=\{s_1\,\ldots,s_{n(n-1)/2}\}$ of $n(n-1)/2$ integers, we let $G_n(\mathcal{S})$ and $H_n(\mathcal{S)}$ denote the $(n,n,1,\mathcal{S})$ PDAs obtained using Lemma \ref{lm:diagPDA} with anti-diagonal and diagonal cells set as $\x$, respectively. The integers in $\mathcal{S}$ are arranged row-wise in the specified order above the anti-diagonal in $G_n(\mathcal{S})$ or below the diagonal in $H_n(\mathcal{S})$. For example,
    \begin{align}
    G_3(\{4,6,5\})=\begin{pmatrix}
    4  & 6  & \x \\
    5  & \x & 6  \\
    \x & 5  & 4
    \end{pmatrix}, G_3(\{6,4,5\})=\begin{pmatrix}
    6  & 4  & \x \\
    5  & \x & 4  \\
    \x & 5  & 6
    \end{pmatrix},
    H_3(\{3,2,1\})=\begin{pmatrix}
    \x & 3 & 2 \\
    3 & \x & 1 \\
    2 & 1 & \x
    \end{pmatrix}.\label{eq:GS}
    \end{align}

We now provide methods to modify a 2-regular PDA $P$ from Lemma \ref{lm:diagPDA} to lower $|\mathcal{S}|$, while retaining regularity of 2. The basic idea is to replace integers in $P$ with $\x$ without affecting regularity. 
For this purpose, given a 2-regular $(n,n,Z,\mathcal{S})$ PDA $P$ with rows/columns indexed from 0 to $n-1$ and $s\in \mathcal{S}$ at locations $(i_1(s),j_1(s))$ and $(i_2(s),j_2(s))$, we associate a graph $G(P)=(V,E)$ with vertex set $V=\{0,1,\ldots,n{-}1\}$ representing the columns of $P$ and edge set $E=\{(j_2(s),j_1(s)):s\in S\}$. We refer the reader to \cite{Harary1969} for definitions and basic results in graph theory. The edge $e(s)=(j_2(s),j_1(s))\in E$ is labelled with the triple $(i_1(s),i_2(s),s)$. For a symmetric PDA $P$ (such as the one from Lemma \ref{lm:diagPDA}), since $i_2(s)=j_1(s)$ and $j_2(s)=i_1(s)$, we have $e(s)=(i_1(s),j_1(s))$ and the edge label is shortened to $s$.
\begin{lemma}
Let $P$ be the 2-regular $(n,n,1,n(n-1)/2)$ PDA from Lemma \ref{lm:diagPDA}. Then, the associated graph $G(P)$ is equal to $K_n$, the complete graph on $n$ vertices.
\label{lem:completegraphPDA}
\end{lemma}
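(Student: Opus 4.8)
The plan is to show that $G(P)$ and $K_n$ have the same vertex set and the same edge set. Since both graphs have vertex set $\{0,1,\ldots,n-1\}$ by definition, the whole argument reduces to proving that the $n(n-1)/2$ edges $e(s)$, $s\in\mathcal{S}$, are pairwise distinct and together realize all $\binom{n}{2}$ two-element subsets of $\{0,1,\ldots,n-1\}$. The cleanest way to do this is to exhibit an explicit bijection between $\mathcal{S}$ (equivalently, the integer cells below the diagonal) and the edges of $K_n$.

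First I would pin down where the integers sit. By the construction in Lemma~\ref{lm:diagPDA}, the diagonal cells carry $\x$, each integer $s\in\mathcal{S}$ is placed in exactly one cell strictly below the diagonal, and its mirror copy is placed at the transposed cell above the diagonal. Hence $P$ is symmetric, the two occurrences of $s$ are at $(a,b)$ and $(b,a)$ for a unique pair with $a>b$, and the strictly-below-diagonal cells $\{(a,b):0\le b<a\le n-1\}$ are filled by $0,1,\ldots,n(n-1)/2-1$ in a one-to-one manner.

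Next I would compute $e(s)$ using the simplification available for symmetric PDAs, namely $e(s)=(i_1(s),j_1(s))$. Taking the below-diagonal occurrence of $s$ as $(a,b)$ with $a>b$, the edge $e(s)$ connects the column indices $a$ and $b$; since $a\neq b$, this is a genuine edge of $K_n$. As $s$ ranges over $\mathcal{S}$ the cell $(a,b)$ ranges over all pairs with $0\le b<a\le n-1$, so the unordered pairs $\{a,b\}$ range over all $\binom{n}{2}$ two-element subsets of $\{0,\ldots,n-1\}$. Because distinct integers occupy distinct below-diagonal cells they yield distinct edges, and there are exactly $n(n-1)/2=\binom{n}{2}$ of them. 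Therefore $E$ coincides with the edge set of $K_n$, giving $G(P)=K_n$.

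I do not anticipate a genuine obstacle here, as the argument is essentially bookkeeping built on the bijection between below-diagonal cells and edges. The only point requiring care is the edge-label convention: I must check that the general definition $e(s)=(j_2(s),j_1(s))$, which joins the two column indices of the two occurrences of $s$, produces the unordered pair $\{a,b\}$ irrespective of how the two occurrences are labelled $1$ and $2$. This holds because the occurrences lie in the transposed cells $(a,b)$ and $(b,a)$, whose column indices are $b$ and $a$, so the induced edge is $\{a,b\}$ under either labelling.
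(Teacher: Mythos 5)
Your proof is correct, but it takes a different route from the paper's. The paper argues iteratively: by symmetry the first column of $P$ equals the first row (or the reverse of the last row in the anti-diagonal variant), so the first vertex is adjacent to all others; one then deletes that column and its symmetric row and repeats, peeling off one vertex at a time. You instead set up an explicit bijection between the below-diagonal cells and the $\binom{n}{2}$ two-element subsets of the vertex set, observing that the integer in cell $(a,b)$ with $a>b$ reappears at $(b,a)$ and therefore contributes exactly the edge $\{a,b\}$; a counting argument then finishes the proof. Your version is more self-contained and makes the edge-labelling bookkeeping fully explicit, which is useful since the graph $G(P)$ and the labels $e(s)$ are reused in Theorem~\ref{lem:rregular} and Corollaries~\ref{lem:modify_even} and \ref{lem:modify_odd}; the paper's peeling argument is shorter but leaves the bijection implicit. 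One small caveat: your write-up only treats the diagonal variant $H_n$, where the two occurrences of $s$ sit at transposed cells $(a,b)$ and $(b,a)$. For the anti-diagonal variant $G_n$ the mirroring is about the anti-diagonal, so the two occurrences are at $(a,b)$ and $(n-1-b,\,n-1-a)$ and the induced edge is $\{b,\,n-1-a\}$; the substitution $u=n-1-a$, $v=b$ shows these pairs again range over all two-element subsets, so the same bijection argument goes through, but you should state this case (the paper's proof handles both explicitly).
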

\begin{proof}
By symmetry, the first column of $P$ is equal to the first row, if diagonal is set to $\x$, or {the reverse of} the last row, if anti-diagonal is set to $\x$. So, in $G(P)$, vertex 1 is connected to all other vertices $\ge 2$.
Now, delete the first column and its corresponding first or last symmetric row, and see that vertex 2 is connected to vertices $\ge 3$. Proceed iteratively to complete the proof.
\end{proof}
A spanning subgraph has the same vertex set as the original graph and a subset of its edges with no isolated vertices. A graph or subgraph is said to be $r$-regular if every vertex has degree equal to $r$. An $r$-regular spanning subgraph is very useful for modifying PDAs as shown in the following lemma. 
\begin{theorem}
Consider a 2-regular $(n,n,z,\mathcal{S})$ PDA $P$ with associated graph $G(P)$. Suppose $G(P)$ has an $r$-regular spanning subgraph with its $nr/2$ edges being $\{e(s):s\in \mathcal{S}_r\}$, where $\mathcal{S}_r\subset\mathcal{S}$. The array obtained by setting $s\in\mathcal{S}_r$ as $\x$ in $P$ is a 2-regular $(n,n,z+r,\mathcal{S}\setminus\mathcal{S}_r)$ PDA.
\label{lem:rregular}
\end{theorem}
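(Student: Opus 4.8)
The plan is to verify that the modified array $P'$, obtained by replacing every $s\in\mathcal{S}_r$ with $\x$, satisfies the three defining PDA conditions \ref{cond:equalZ}--\ref{cond:blackburn} and is $2$-regular with the stated parameters. Since the operation only turns integer cells into $\x$ cells and leaves every other entry (and the $n\times n$ shape) untouched, most of the verification is immediate; the single quantitative claim that actually invokes the spanning-subgraph hypothesis is the new column weight $z+r$ in \ref{cond:equalZ}.

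I would first establish \ref{cond:equalZ}. Fix a column $k$; in $P$ it contains exactly $z$ copies of $\x$. Each $s\in\mathcal{S}_r$ occupies the two cells $(i_1(s),j_1(s))$ and $(i_2(s),j_2(s))$, so setting $s$ to $\x$ deposits one new star in column $j_1(s)$ and one in column $j_2(s)$---exactly the two endpoints of the edge $e(s)$ in the associated graph. Hence the number of new stars appearing in column $k$ equals the number of spanning-subgraph edges incident to vertex $k$, which is $r$ by $r$-regularity. These $r$ stars occupy $r$ distinct cells (distinct symbols live in distinct cells) that previously held integers rather than $\x$, so there is no collision with old stars or among themselves, and column $k$ ends up with exactly $z+r$ stars.

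For \ref{cond:everysOnce} and $2$-regularity, observe that each $s\in\mathcal{S}\setminus\mathcal{S}_r$ is untouched and therefore still appears exactly twice in $P'$; this yields both that every surviving symbol occurs and that $P'$ is $2$-regular. For the Blackburn property \ref{cond:blackburn}, suppose two distinct cells of $P'$ carry the same integer $s'\in\mathcal{S}\setminus\mathcal{S}_r$. These cells are unchanged from $P$, so \ref{cond:blackburn} for $P$ already forces the two cross cells to be $\x$; since the modification never converts a $\x$ back into an integer, those cross cells remain $\x$ in $P'$. Thus \ref{cond:blackburn} is inherited.

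The only point deserving care---and the step I would treat as the crux---is the counting in \ref{cond:equalZ}: one must confirm that the $r$ stars injected into each column by the incident spanning-subgraph edges are genuinely distinct new stars, not collisions with existing stars or with one another. This holds because distinct symbols occupy distinct cells and every replaced cell previously held an integer, making the correspondence between edges incident to a vertex and new star positions in the corresponding column injective. Everything else follows by the general principle that replacing integers with $\x$ cannot create a violation of \ref{cond:blackburn}, since it neither introduces new repeated integers nor removes any pre-existing $\x$.
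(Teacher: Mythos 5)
Your proof is correct and follows essentially the same approach as the paper's (much terser) argument: the $r$-regularity of the spanning subgraph guarantees exactly $r$ integers of $\mathcal{S}_r$ per column, so each column gains exactly $r$ new $\x$s, while the Blackburn property and $2$-regularity of the surviving integers are trivially inherited. Your additional care in checking that the $r$ injected stars are distinct new cells is a worthwhile elaboration but does not change the substance of the argument.
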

\begin{proof}
Since the subgraph is spanning and $r$-regular, exactly $r$ integers in $\mathcal{S}_r$ are present in each column of $P$. Setting the integers in $\mathcal{S}_r$ to $\x$ results in the modified PDA as claimed.
\end{proof}
To find regular spanning subgraphs, the notions of 1-factors and 1-factorization are useful \cite{wallis2013one}. A matching in a graph is a set of non-intersecting (or parallel) edges. A matching is said to be a 1-factor if it covers all vertices. A 1-factor is clearly a 1-regular spanning subgraph.

A complete graph $K_n$, for $n$ even, has multiple 1-factors each with $n/2$ edges \cite{wallis2013one}. A 1-factorization of $K_n$, $n$ even, is a partition of its $n(n{-}1)/2$ edges into $n{-}1$ edge-disjoint 1-factors. It is well known that 1-factorizations exist for $K_n$ when $n$ is even \cite{factorizationsurvey85}. The union of $r$ different 1-factors in a 1-factorization is clearly an $r$-regular spanning subgraph of $K_n$, which can be used in Lemma \ref{lem:rregular} as follows.
\begin{corollary}
Let $P$ be the 2-regular $(n,n,1,n(n{-}1)/2)$ PDA from Lemma \ref{lm:diagPDA} with associated graph $K_n$, $n$ even. Let $\{M_1,M_2,\ldots,M_{n-1}\}$ be a 1-factorization of $K_n$ with the $i$-th 1-factor $M_i=\{e(s_{ij}): j\in\{1,2,\ldots,n/2\}\}$. For $z\in\{1,2,\ldots,n{-}2\}$, let $P_z$ be the array obtained by setting $s_{ij}$ to $\x$ in $P$ for $1\le i\le z$ and $j\in\{1,2,\ldots,n/2\}$. Then, $P_z$ is a 2-regular $(n,n,z+1,n(n{-}z{-}1)/2)$ PDA. 
\label{lem:modify_even}
\end{corollary}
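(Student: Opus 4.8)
The plan is to realize $P_z$ as the output of a single application of Lemma~\ref{lem:rregular}, with that lemma's regularity parameter $r$ set equal to $z$ and its initial PDA being $P$ (which has a single $\x$ per column). By Lemma~\ref{lem:completegraphPDA}, the associated graph $G(P)$ is exactly $K_n$, so a 1-factorization $\{M_1,\ldots,M_{n-1}\}$ is available; each $M_i$ is a 1-factor, i.e.\ a $1$-regular spanning subgraph whose $n/2$ edges are indexed by the integers $\{s_{ij}:j\in\{1,\ldots,n/2\}\}$. The whole task is then to supply Lemma~\ref{lem:rregular} with an appropriate $z$-regular spanning subgraph.

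First I would form the subgraph $H_z=M_1\cup M_2\cup\cdots\cup M_z$ and argue it is $z$-regular and spanning. Since the $M_i$ are the parts of a 1-factorization, they are pairwise edge-disjoint, so every vertex of $K_n$ is incident to exactly one edge of each $M_i$ and hence to exactly $z$ edges of $H_z$; moreover $H_z$ covers all vertices, so it is spanning. Its edge set is $\{e(s):s\in\mathcal{S}_z\}$ with $\mathcal{S}_z=\{s_{ij}:1\le i\le z,\ 1\le j\le n/2\}\subset\mathcal{S}$ of size $zn/2$; these are precisely the integers that the construction sets to $\x$ in order to obtain $P_z$.

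With $H_z$ identified as a $z$-regular spanning subgraph, Lemma~\ref{lem:rregular} applies directly: setting each $s\in\mathcal{S}_z$ to $\x$ in $P$ yields a $2$-regular $(n,n,1+z,\mathcal{S}\setminus\mathcal{S}_z)$ PDA. It then remains only to evaluate the parameters. The number of $\x$ symbols per column becomes $1+z=z+1$, and the surviving integer set has size $|\mathcal{S}|-|\mathcal{S}_z|=n(n-1)/2-zn/2=n(n-z-1)/2$, which matches the claimed parameters.

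Because this is essentially a specialization of Lemma~\ref{lem:rregular}, there is no genuine analytic obstacle; the only point demanding a little care is the admissible range of $z$. The argument requires $\mathcal{S}\setminus\mathcal{S}_z$ to remain nonempty so that the result is a bona fide PDA satisfying condition \ref{cond:everysOnce}, and $n(n-z-1)/2>0$ forces $z\le n-2$, exactly the stated range. (At $z=n-1$ one would consume the entire 1-factorization and set every integer to $\x$, leaving an all-$\x$ array, which is why that value is excluded.)
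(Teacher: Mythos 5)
Your proof is correct and follows essentially the same route as the paper: the paper's proof likewise observes that $M_1\cup\cdots\cup M_z$ is a $z$-regular spanning subgraph of $K_n$ and invokes Lemma~\ref{lem:rregular}. Your additional verification of the parameter count and the admissible range of $z$ is a fuller writeup of the same argument.
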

\begin{proof}
Since $M_1\cup M_2\cup\cdots\cup M_z$ is a $z$-regular spanning subgraph of $K_n$, the result follows by the use of Lemma \ref{lem:rregular}.
\end{proof}
For $n=4$, consider the $(4,4,1,6)$ PDA $H_4([6])$. A 1-factorization for the associated graph is $M_1=\{e(0),e(5)\}$, $M_2=\{e(1),e(4)\}$, $M_3=\{e(2),e(3)\}$ as shown in Fig.~\ref{fig:pda_graph}. The modified PDAs obtained using this 1-factorization in Lemma \ref{lem:modify_even} are easy to write down.
\begin{figure}
    \centering
    \includegraphics{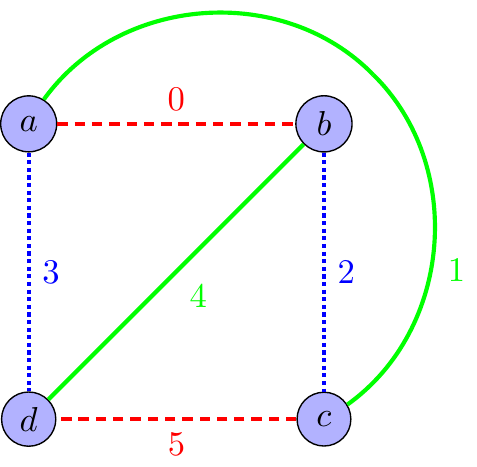}
    \caption{A 1-factorization of the graph associated with the $(4,4,1,6)$ PDA $H_4([6])$.}
    \label{fig:pda_graph}
\end{figure}
For $n$ odd, there are no 1-factors in $K_n$, and the smallest regular spanning subgraph is a Hamiltonian cycle, which is a cycle with $n$ edges passing through all $n$ vertices \cite{Harary1969}. It is well-known that $K_n$ for $n$ odd can be decomposed into $(n-1)/2$ edge-disjoint Hamiltonian cycles \cite{wallis2013one,AlspachWalecki08}. This decomposition leads to the following lemma.
\begin{corollary}
Let $P$ be the 2-regular $(n,n,1,n(n{-}1)/2)$ PDA from Lemma \ref{lm:diagPDA} with associated graph $K_n$, $n$ odd. Let $\{H_1, H_2,\allowbreak \ldots, H_{(n-1)/2}\}$ be a set of edge-disjoint Hamiltonian cycles of $K_n$ with $H_i=\{e(s_{ij}): j\in\{1,2,\ldots,n\}\}$. For $z\in\{1,2,\ldots,(n{-}3)/2\}$, let $P_z$ be the array obtained by setting $s_{ij}$ to $\x$ in $P$ for $1\le i\le z$ and $j\in\{1,2,\ldots,n\}$. Then, $P_z$ is a 2-regular $(n,n,2z+1,n(n{-}2z{-}1)/2)$ PDA.
\label{lem:modify_odd}
\end{corollary}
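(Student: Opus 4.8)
The plan is to mirror the proof of the even case (Corollary~\ref{lem:modify_even}), replacing $1$-factors by Hamiltonian cycles as the basic regular spanning subgraphs, since $K_n$ admits no $1$-factor when $n$ is odd. The whole argument reduces to producing a $2z$-regular spanning subgraph of $K_n$ whose edges correspond precisely to the integers $s_{ij}$ that we intend to overwrite with $\x$, and then invoking Theorem~\ref{lem:rregular} with $r=2z$ on the base PDA $P$, whose associated graph is $K_n$ and whose initial parameter is $Z=1$.

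First I would observe that a single Hamiltonian cycle $H_i$ is itself a $2$-regular spanning subgraph of $K_n$: it visits every one of the $n$ vertices, and at each vertex exactly two of its edges are incident. Next, because the cycles $H_1,\dots,H_z$ are edge-disjoint, their union $H_1\cup\cdots\cup H_z$ accumulates degrees without overlap, so every vertex has degree exactly $2z$; thus the union is a $2z$-regular spanning subgraph with $zn$ edges, and (using that $P$ is symmetric, so edges carry single-integer labels) its edge set is exactly $\{e(s_{ij}): 1\le i\le z,\ j\in\{1,\dots,n\}\}$. Writing $\mathcal{S}_z=\{s_{ij}: 1\le i\le z,\ j\in\{1,\dots,n\}\}$, this is precisely the hypothesis required by Theorem~\ref{lem:rregular} with $r=2z$.

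Applying Theorem~\ref{lem:rregular} then gives that $P_z$, obtained by setting every $s_{ij}$ with $1\le i\le z$ to $\x$, is a $2$-regular $(n,n,1+2z,\mathcal{S}\setminus\mathcal{S}_z)$ PDA. It remains only to unwind the parameters: the new number of $\x$ symbols per column is $1+2z=2z+1$, and since $|\mathcal{S}_z|=zn$ integers are removed from the original $n(n-1)/2$, the surviving alphabet has size $n(n-1)/2-zn=n(n-2z-1)/2$, as claimed.

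The content here is essentially routine, and I expect no genuine obstacle beyond bookkeeping. The one point that deserves a line of care is why the stated range $z\in\{1,\dots,(n-3)/2\}$ is the correct one: on the one hand $z\le(n-1)/2$ is forced because only that many edge-disjoint Hamiltonian cycles are available in the decomposition, and on the other hand the strict bound $z\le(n-3)/2$ guarantees $n-2z-1\ge 2$, so that $\mathcal{S}\setminus\mathcal{S}_z$ is nonempty and condition~\ref{cond:everysOnce} still holds, leaving a legitimate PDA rather than a degenerate all-$\x$ array.
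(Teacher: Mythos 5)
Your proposal is correct and follows exactly the paper's own argument: the union of the $z$ edge-disjoint Hamiltonian cycles is a $2z$-regular spanning subgraph of $K_n$, and Theorem~\ref{lem:rregular} with $r=2z$ yields the stated parameters. The extra bookkeeping on the alphabet size and the range of $z$ is a harmless elaboration of what the paper leaves implicit.
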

\begin{proof}
Since $H_1\cup H_2\cup\cdots\cup H_z$ is a $2z$-regular spanning subgraph of $K_n$, the result follows by the use of Lemma \ref{lem:rregular}.
\end{proof}
For $n=5$, consider the $(5,5,1,10)$ PDA in $H_5([10])$. A Hamiltonian cycle decomposition for the associated graph is $H_1=\{e(0),e(1),e(5),e(6),e(7)\}$,  $H_2=\{e(2),e(8),e(9),e(4),e(3)\}$. The modified PDAs obtained using this decomposition in Cor.~\ref{lem:modify_odd} are easy to write down.

We remark that the cache memory versus rate trade-off obtained by the 2-PDAs of Corollaries \ref{lem:modify_even} and \ref{lem:modify_odd} are the same as those obtained by memory-sharing between the full-storage scheme corresponding to the all-$\x$ $(n,n,n,0)$ PDA and the $(n,n,1,n(n-1)/2)$ PDA of Lemma \ref{lm:diagPDA}. However, explicit 2-PDAs that achieve the same trade-off without memory-sharing have not appeared earlier in the literature. In addition, the connection between PDAs and graphs appears to be new as well, and this connection could possibly lead to some interesting generalizations.
\section{Lifting or protograph-type constructions}\label{sec:lifting}
Constructions of PDAs with coding gain 2 and low subpacketization were briefly described in the previous section. To increase coding gain and obtain $g$-regular PDAs for $g>2$ without a significant increase in subpacketization, we employ the idea of lifting or protograph construction. Similar to the popular notion of protograph or lifted LDPC codes, we start with a base PDA, and replace each entry with another PDA. The PDAs that replace an integer or a $\x$ in the base PDA are called as \emph{constituent} PDAs. 

An important requirement when lifting PDAs is that we have to ensure that the Blackburn property is preserved during the lifting.
For this purpose, we define a constraint called \emph{Blackburn compatibility} which needs to be satisfied by the constituent PDAs for the lifting to be valid. 

We propose several deterministic and a randomized construction for lifting of PDAs and compare them with other existing PDAs in terms of their memory-rate tradeoff and subpacketization.  

\subsection{Notation for PDAs}\label{sec:notation}
The following PDAs are used repeatedly in lifting constructions.
\begin{enumerate}
    \item For an integer $t$, $I_n(t)$ denotes the $(n,n,n-1,1)$ PDA with the integer $t$ on the main diagonal and $\x$ in all other cells. $\tilde{I}_n(t)$ denotes the $(n,n,n-1,1)$ PDA with the integer $t$ on the main anti-diagonal and $\x$ in all other cells. For example, 
    \begin{align}
    I_3(1)=\begin{pmatrix}
    1  & \x & \x \\
    \x & 1  & \x \\
    \x & \x & 1
    \end{pmatrix},\quad \tilde{I}_3(0)=\begin{pmatrix}
    \x & \x & 0  \\
    \x & 0  & \x \\
    0  & \x & \x
    \end{pmatrix}.\label{eq:It}
    \end{align}
    \item For a set $\mathcal{S}=\{s_1,\ldots,s_{n(n-1)/2}\}$ of $n(n-1)/2$ integers, the PDAs $G_n(\mathcal{S})$ and $H_n(\mathcal{S})$ are as defined in Lemma \ref{lm:diagPDA}.
    \item For a set $\mathcal{S}=\{s_1,\ldots,s_{n^2}\}$ of $n^2$ integers, $J_n(\mathcal{S})$ denotes the $(n,n,0,\mathcal{S})$ PDA obtained by filling all the cells in the array with distinct integers from $\mathcal{S}$ row-wise in the specified order. For example,
    \begin{align}
    J_3([9])=\begin{pmatrix}
    0  & 1  & 2 \\
    3  & 4  & 5 \\
    6  & 7  & 8
    \end{pmatrix}.\label{eq:JS}
    \end{align}
\end{enumerate}
\subsection{Basic lifting}
In the first lifting method, we start with a base PDA and replace $\x$'s with an all-$\x$ array, and replace integers with PDAs that contain disjoint sets of integers. 
\begin{theorem}[Basic lifting]\label{th:basicTiling}
Let $P_b$ be a $(K,f,Z,\mathcal{S}_b)$ PDA. Let $\mathcal{P}=\{P_i:i\in\mathcal{S}_b\}$, where $P_i$ is an $(m,n,e,\mathcal{S}_i)$ PDA and $\mathcal{S}_i$, $\mathcal{S}_j$ are disjoint if $i\ne j$. Let an array $B_{\mathcal{P}}(P_b)$ be defined as follows:
\begin{enumerate}
    \item Each $\x$ in $P_b$ is replaced by a $n\times m$ all-$\x$ array.
    \item Each integer $i\in\mathcal{S}_b$ is replaced by $P_i\in\mathcal{P}$.
\end{enumerate}
Then, $B_{\mathcal{P}}(P_b)$ is a $(Km,fn,Zn+(f-Z)e,\mathcal{S})$ PDA, where $\mathcal{S}=\bigcup_{i\in\mathcal{S}_b}\mathcal{S}_i$. 
\end{theorem}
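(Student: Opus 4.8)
The plan is to index each cell of $B_{\mathcal{P}}(P_b)$ by a pair of pairs $((j,a),(k,b))$, where $(j,k)$ with $j\in[f]$, $k\in[K]$ selects the $n\times m$ block replacing the cell $p^b_{j,k}$ of the base PDA, and $(a,b)$ with $a\in[n]$, $b\in[m]$ selects the position inside that block. Under this indexing the lifted array has $fn$ rows and $Km$ columns, so its parameters are of the form $(Km,fn,\,\cdot\,,\mathcal{S})$ as claimed, and it remains to verify conditions \ref{cond:equalZ}--\ref{cond:blackburn}.

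For \ref{cond:equalZ}, I would fix a lifted column, determined by a base column $k$ and an inner column $b$, and sum the $\x$'s block by block as $j$ ranges over $[f]$. A base cell $p^b_{j,k}=\x$ contributes an all-$\x$ block and hence $n$ stars to this column, while a base cell $p^b_{j,k}=i\in\mathcal{S}_b$ contributes the $b$-th column of $P_i$, which carries exactly $e$ stars by \ref{cond:equalZ} applied to $P_i$. Since $P_b$ has $Z$ stars and $f-Z$ integers in column $k$, the total is $Zn+(f-Z)e$, independent of $k$ and $b$. Condition \ref{cond:everysOnce} is then immediate: any $s\in\mathcal{S}$ lies in a unique $\mathcal{S}_i$ by disjointness, the integer $i$ occurs somewhere in $P_b$, its block is a copy of $P_i$, and $s$ occurs in $P_i$, so $s$ occurs in $B_{\mathcal{P}}(P_b)$.

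The crux is the Blackburn condition \ref{cond:blackburn}, where the disjointness of the $\mathcal{S}_i$ does the real work: an integer $s\in\mathcal{S}_i$ can appear only inside blocks that are copies of $P_i$, never in an $\x$-block nor in a copy of $P_{i'}$ for $i'\neq i$. Suppose $s$ occupies two distinct lifted cells $((j_1,a_1),(k_1,b_1))$ and $((j_2,a_2),(k_2,b_2))$; then $p^b_{j_1,k_1}=p^b_{j_2,k_2}=i$. If the two cells lie in the same block, so $(j_1,k_1)=(j_2,k_2)$ but $(a_1,b_1)\neq(a_2,b_2)$, I would invoke \ref{cond:blackburn} for $P_i$ directly: the inner completion cells $(a_1,b_2)$ and $(a_2,b_1)$ are $\x$, and these are exactly the two lifted completion cells. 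If instead the cells lie in different blocks, then $(j_1,k_1)\neq(j_2,k_2)$; applying \ref{cond:blackburn} for $P_b$ to these two occurrences of $i$ forces $p^b_{j_1,k_2}=p^b_{j_2,k_1}=\x$ (in particular ruling out $j_1=j_2$ or $k_1=k_2$, since an integer cannot repeat in a base row or column), so the two completion blocks at base positions $(j_1,k_2)$ and $(j_2,k_1)$ are entirely $\x$. The lifted completion cells $((j_1,a_1),(k_2,b_2))$ and $((j_2,a_2),(k_1,b_1))$ sit inside these all-$\x$ blocks and are therefore $\x$, as required.

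The only step demanding care is this case split for \ref{cond:blackburn}: one must observe that a repeated integer pushes the completion cells either into blocks that are guaranteed all-$\x$ (different-block case) or into starred positions dictated by the constituent's own Blackburn property (same-block case), and that disjointness of the integer sets is precisely what confines each integer to a single family of identical blocks, so that no spurious cross-block coincidence can arise. The counting for \ref{cond:equalZ} and the existence argument for \ref{cond:everysOnce} are then routine.
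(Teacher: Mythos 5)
Your proof is correct and follows essentially the same route as the paper, which establishes this result as the special case $P_*=$ all-$\x$ of the general lifting theorem (Theorem~\ref{th:generalTiling}): disjointness of the $\mathcal{S}_i$ forces two occurrences of a lifted integer to sit over the same base integer, the base Blackburn property then places all-$\x$ blocks at the mirrored base positions, and the column count is the same block-by-block tally. If anything, your explicit split between the same-block case (handled by the constituent $P_i$'s own Blackburn property) and the different-block case (handled by the base PDA's Blackburn property) is slightly more careful than the paper's write-up, which invokes the base Blackburn property without remarking that it only applies to \emph{distinct} base cells and that the within-block case rests on each $P_i$ being a valid PDA.
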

\begin{proof}
See the proof of a more general version in Theorem \ref{th:generalTiling}.
\end{proof}
 
\begin{example}
Two PDAs obtained using Theorem~\ref{th:basicTiling} are shown in Table \ref{tab:basicTiling}. The first one is with ${P_{b_1}}=\begin{pmatrix}
0&1&2\\
3&4&5\\
6&7&8
\end{pmatrix}$ and integer $i$ replaced with $I_3(i)$ resulting in a 3-regular $(9,9,6,9)$ PDA. The second one uses ${P_{b_2}}=\begin{pmatrix}
\x&0&1\\
0&\x&2\\
1&2&\x
\end{pmatrix}$ with 0 replaced by $G_3(\{0,1,2\})$, 1 replaced by $H_3(\{3,4,5\})$ and 2 replaced by $G_3(\{6,7,8\})$ resulting in a 4-regular $(9,9,5,9)$ PDA.
\begin{table}[htb]
    \centering
    \caption{Examples of lifting using Lemma \ref{th:basicTiling}. {The PDAs shown below are obtained by lifting $P_{b_1}$ and $P_{b_2}$ respectively.}}
    \label{tab:basicTiling}
    \begin{tabular}{?c+c+c?c+c+c?c+c+c?}
\Xhline{2\arrayrulewidth}
0 & $\x$ & $\x$ & \cellcolor{gray!50} 1 & \cellcolor{gray!50} $\x$ & \cellcolor{gray!50} $\x$ & 2 & $\x$ & $\x$ \\ \hline
$\x$ & 0 & $\x$ & \cellcolor{gray!50} $\x$ & \cellcolor{gray!50} 1 & \cellcolor{gray!50} $\x$ & $\x$ & 2 & $\x$ \\ \hline
$\x$ & $\x$ & 0 & \cellcolor{gray!50} $\x$ & \cellcolor{gray!50} $\x$ & \cellcolor{gray!50} 1 & $\x$ & $\x$ & 2 \\ \Xhline{2\arrayrulewidth}
\cellcolor{gray!50} 3 & \cellcolor{gray!50} $\x$ & \cellcolor{gray!50} $\x$ & 4 & $\x$ & $\x$ & \cellcolor{gray!50} 5 & \cellcolor{gray!50} $\x$ & \cellcolor{gray!50} $\x$ \\ \hline
\cellcolor{gray!50} $\x$ & \cellcolor{gray!50} 3 & \cellcolor{gray!50} $\x$ & $\x$ & 4 & $\x$ & \cellcolor{gray!50} $\x$ & \cellcolor{gray!50} 5 & \cellcolor{gray!50} $\x$ \\ \hline
\cellcolor{gray!50} $\x$ & \cellcolor{gray!50} $\x$ & \cellcolor{gray!50} 3 & $\x$ & $\x$ & 4 & \cellcolor{gray!50} $\x$ & \cellcolor{gray!50} $\x$ & \cellcolor{gray!50} 5 \\ \Xhline{2\arrayrulewidth}
6 & $\x$ & $\x$ & \cellcolor{gray!50} 7 & \cellcolor{gray!50} $\x$ & \cellcolor{gray!50} $\x$ & 8 & $\x$ & $\x$ \\ \hline
$\x$ & 6 & $\x$ & \cellcolor{gray!50} $\x$ & \cellcolor{gray!50} 7 & \cellcolor{gray!50} $\x$ & $\x$ & 8 & $\x$ \\ \hline
$\x$ & $\x$ & 6 & \cellcolor{gray!50} $\x$ & \cellcolor{gray!50} $\x$ & \cellcolor{gray!50} 7 & $\x$ & $\x$ & 8
\\ \Xhline{2\arrayrulewidth}
\end{tabular}
\hspace{0.5cm}
\begin{tabular}{?c+c+c?c+c+c?c+c+c?}
\Xhline{2\arrayrulewidth}
$\x$ & $\x$ & $\x$ & \cellcolor{gray!50} 0 & \cellcolor{gray!50} 1 & \cellcolor{gray!50} $\x$ & $\x$ & 3 & 4 \\ \hline
$\x$ & $\x$ & $\x$ & \cellcolor{gray!50} 2 & \cellcolor{gray!50} $\x$ & \cellcolor{gray!50} 1 & 3 & $\x$ & 5 \\ \hline
$\x$ & $\x$ & $\x$ & \cellcolor{gray!50} $\x$ & \cellcolor{gray!50} 2 & \cellcolor{gray!50} 0 & 4 & 5 & $\x$ \\ \Xhline{2\arrayrulewidth}
\cellcolor{gray!50} 0 & \cellcolor{gray!50} 1 & \cellcolor{gray!50} $\x$ & $\x$ & $\x$ & $\x$ & \cellcolor{gray!50} 6 & \cellcolor{gray!50} 7 & \cellcolor{gray!50} $\x$ \\ \hline
\cellcolor{gray!50} 2 & \cellcolor{gray!50} $\x$ & \cellcolor{gray!50} 1 & $\x$ & $\x$ & $\x$ & \cellcolor{gray!50} 8 & \cellcolor{gray!50} $\x$ & \cellcolor{gray!50} 7 \\ \hline
\cellcolor{gray!50} $\x$ & \cellcolor{gray!50} 2 & \cellcolor{gray!50} 0 & $\x$ & $\x$ & $\x$ & \cellcolor{gray!50} $\x$ & \cellcolor{gray!50} 8 & \cellcolor{gray!50} 6 \\ \Xhline{2\arrayrulewidth}
$\x$ & 3 & 4 & \cellcolor{gray!50} 6 & \cellcolor{gray!50} 7 & \cellcolor{gray!50} $\x$ & $\x$ & $\x$ & $\x$ \\ \hline
3 & $\x$ & 5 & \cellcolor{gray!50} 8 & \cellcolor{gray!50} $\x$ & \cellcolor{gray!50} 7 & $\x$ & $\x$ & $\x$ \\ \hline
4 & 5 & $\x$ & \cellcolor{gray!50} $\x$ & \cellcolor{gray!50} 8 & \cellcolor{gray!50} 6 & $\x$ & $\x$ & $\x$
\\ \Xhline{2\arrayrulewidth}
\end{tabular}
\end{table}
As a comparison, a $9\times 9$ PDA constructed using Corollary \ref{lem:modify_odd} has parameters $(9,9,5,18)$ and is 2-regular. The $(9,9,5,9)$ PDA using basic lifting is 4-regular and provides a lower rate coded caching scheme at the same memory when compared to the $(9,9,5,18)$ PDA.
\label{ex:basic}
\end{example}
The requirements imposed by Theorem \ref{th:basicTiling} on $P_i$ are strictly not necessary for the lifting to result in a valid PDA. Firstly, $P_i$ need not be valid PDAs by themselves. Secondly, the $P_i$ need not contain disjoint integers. A simple example is the lifting of $P_b=\begin{pmatrix}
    0\\
    1
\end{pmatrix}$ with $\mathcal{P}=\{P_0=\begin{pmatrix}
    0&\x\\
    \x&\x
\end{pmatrix}, P_1=\begin{pmatrix}
    \x&\x\\
    \x&0
\end{pmatrix}\}$, which results in a trivial 2-regular $(2,4,3,1)$ PDA. While for simplicity of general constructions and for dense base PDAs, constituent PDAs with disjoint sets of integers appear to be a good choice, further optimizations of the lifted PDAs will be possible for improving parameters and trade-offs. 
\subsubsection{Regular basic lifting}
The basic lifting construction is simple, and provides PDAs of various sizes with higher coding gains in a direct manner. The simplest $g$-regular construction by basic lifting is captured in the following corollary to Theorem~\ref{th:basicTiling}.
\begin{corollary}
Let $P_b$ be a $g_b$-regular $(K_b,f_b,Z_b,\frac{K_b(f_b-Z_b)}{g_b})$ PDA. Let $P_l$ be a $g_c$-regular $(m,n,e,\frac{m(n-e)}{g_c})$ PDA, and $\mathcal{P}=\{P_i:i\in[\frac{K_b(f_b-Z_b)}{g_b}]\}$, where $P_i$ are copies of $P_l$ with its integers replaced by another disjoint set of integers. Then, $B_{\mathcal{P}}(P_b)$, which is denoted simply as $B_{P_l}(P_b)$ in this case, is a $g_bg_c$-regular $(K_bm,f_bn,Z_bn+(f_b-Z_b)e,\frac{K_b(f_b-Z_b)}{g_b}\frac{m(n-e)}{g_c})$ PDA.
\label{cor:basicreg}
\end{corollary}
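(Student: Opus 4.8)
The plan is to derive this result directly from Theorem~\ref{th:basicTiling}, using that theorem as a black box for all the structural PDA verification, and then to add a short counting argument for regularity. First I would check that the hypotheses of Theorem~\ref{th:basicTiling} are met under the stronger regularity assumptions here. Since $P_b$ is $g_b$-regular, the relation $|\mathcal{S}_b|\,g_b = K_b(f_b-Z_b)$ forces $|\mathcal{S}_b| = \frac{K_b(f_b-Z_b)}{g_b}$, so $\mathcal{S}_b = [\frac{K_b(f_b-Z_b)}{g_b}]$ as stated. Each constituent $P_i$ is a relabeled copy of $P_l$ and hence an $(m,n,e,\mathcal{S}_i)$ PDA, with the $\mathcal{S}_i$ pairwise disjoint by construction. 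Thus Theorem~\ref{th:basicTiling} applies verbatim.

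The theorem then yields immediately that $B_{\mathcal{P}}(P_b)$ is a $(K_b m,\, f_b n,\, Z_b n + (f_b-Z_b)e,\, \mathcal{S})$ PDA with $\mathcal{S} = \bigcup_{i\in\mathcal{S}_b}\mathcal{S}_i$. To pin down the integer-set size, I would use disjointness: each copy of $P_l$ contributes $|\mathcal{S}_l| = \frac{m(n-e)}{g_c}$ distinct integers (again from $g_c$-regularity of $P_l$), and there are $|\mathcal{S}_b| = \frac{K_b(f_b-Z_b)}{g_b}$ such copies, so $|\mathcal{S}| = \frac{K_b(f_b-Z_b)}{g_b}\cdot\frac{m(n-e)}{g_c}$, matching the claimed parameter.

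The remaining and central step is to verify $g_b g_c$-regularity. I would fix an arbitrary integer $s\in\mathcal{S}$; by disjointness it lies in exactly one $\mathcal{S}_i$, hence originates from a single constituent block $P_i$ (a relabeled copy of $P_l$) and corresponds to some integer of $P_l$. Within one copy of $P_l$, the $g_c$-regularity of $P_l$ forces that integer to occur exactly $g_c$ times. Now the base integer $i$ appears exactly $g_b$ times in $P_b$ by $g_b$-regularity, and every one of those occurrences is replaced by the \emph{same} constituent PDA $P_i$; therefore $s$ occurs $g_c$ times in each of the $g_b$ blocks, for a total of $g_b g_c$ occurrences in $B_{\mathcal{P}}(P_b)$. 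Since $s$ was arbitrary, every integer appears exactly $g_b g_c$ times.

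I do not expect a genuine obstacle, since Theorem~\ref{th:basicTiling} discharges conditions \ref{cond:equalZ}--\ref{cond:blackburn} and the $Z$-parameter computation. The only point requiring care is the observation that all $g_b$ copies of a given base integer are filled with the identical constituent block, so the per-block count $g_c$ multiplies cleanly by $g_b$; this is what makes the product structure $g_b g_c$ exact rather than a mere upper bound.
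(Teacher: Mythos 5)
Your proof is correct and follows essentially the same route as the paper: the paper likewise discharges the structural PDA conditions via the lifting theorem and establishes the parameters by the same occurrence count (each base integer appears $g_b$ times, each occurrence is replaced by the same integer-disjoint copy of $P_l$ in which each integer appears $g_c$ times, giving $g_bg_c$). Your explicit remark that all $g_b$ occurrences of a base integer receive the identical constituent block is exactly the point the paper's argument rests on.
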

\begin{proof}
See proof of a more general case in Corollary \ref{cor:genreg}.
\end{proof}
As stated earlier, 1-PDAs and 2-PDAs are the easiest to construct and use as base PDAs and constituent PDAs in lifting. Let 1-PDA$(n,z)$ denote a $(n,n,z,n(n-z))$ 1-PDA, and let 2-PDA$(n,z)$ denote the $(n,n,z,n(n-z)/2)$ 2-PDA obtained using Corollaries \ref{lem:modify_even} and \ref{lem:modify_odd} for applicable values of $n,z$.

In Corollary \ref{cor:basicreg}, using a 1-PDA as $P_b$ or $P_l$, we do not obtain an increase in coding gain. An increase in coding gain is obtained if we consider 2-PDAs as $P_b$ and $P_l$. 
\begin{example}[2-PDA to 4-PDA]
In Corollary \ref{cor:basicreg}, let $P_b$ be 2-PDA$(K_b,Z_b)$ and $P_l$ be 2-PDA$(n,e)$, where $z$ and $e$ are chosen so that the 2-PDAs exist (see Corollaries \ref{lem:modify_even} and \ref{lem:modify_odd}). $B_{P_l}(P_b)$ is a $(K_bn,K_bn,Z_bn+(K_b-Z_b)e,K_bn(K_b-Z_b)(n-e)/4)$ 4-PDA. 
\end{example}
While each $P_i$ in $\mathcal{P}$ being regular is sufficient for the lifted PDA to be regular, it is not strictly necessary. To see this through an example, let $\mathcal{P}_3=\{P_0,P_1\}$ be the set of PDAs defined as
\begin{align}
    P_0 = \left(
        \begin{array}{ccc}
            \x & 0 & 2 \\
            0 & \x & 1 \\
            3 & 1 & \x
        \end{array}
        \right),\quad
    P_1 = \left(
        \begin{array}{ccc}
            1 & 2 & \x \\
            3 & \x & 2 \\
            \x & 3 & 0
        \end{array}
        \right).
        \label{eq:A3ex}
\end{align}
For $P_b=\begin{pmatrix}
0&\x\\
\x&1
\end{pmatrix}
$, $B_{\mathcal{P}_3}(P_b)$ is 3-regular, while neither $P_0$ nor $P_1$ are individually regular. 
\subsubsection{Recursive basic lifting}
Recursive application of basic lifting is useful in obtaining multiple lifted PDAs with different coding gains. This procedure is particularly effective if the final target number of users $K$ can be factored into a product of $r$ numbers as $K=K_1K_2\cdots K_r$, and is captured in the following lemma for reference (proof is skipped).
\begin{lemma}[Recursive basic lifting for $K=K_1K_2\cdots K_r$ users]\label{lm:recTiling}
Let $P_b^{(1)}$ be a $K_1\times K_1$ $g_1$-PDA, and let $P_l^{(i)}$ for $i=2,\ldots,r$ be $K_i\times K_i$ $g_i$-PDAs. Consider the recursion $P_b^{(i)}=B_{P_{l}^{(i)}}(P_b^{(i-1)})$ for $i=2,\ldots,r$. The result of the recursion $P_b^{(r)}$ is a $(g_1\cdots g_r)$-regular $K\times K$ PDA. 
\end{lemma}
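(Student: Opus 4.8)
The plan is to prove the statement by induction on the recursion index $i$, with the single inductive step being an application of Corollary \ref{cor:basicreg}. To carry the induction cleanly I would prove the slightly more informative claim that for every $i \in \{1, \ldots, r\}$ the array $P_b^{(i)}$ is a $(g_1 \cdots g_i)$-regular \emph{square} PDA having exactly $K_1 \cdots K_i$ rows and $K_1 \cdots K_i$ columns. The desired conclusion is then the instance $i = r$, since $K_1 \cdots K_r = K$ and the square shape guarantees it is a $K \times K$ PDA.

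The base case $i = 1$ is immediate from the hypothesis: $P_b^{(1)}$ is a $g_1$-regular $K_1 \times K_1$ PDA, which I would record in the form $(K_1, K_1, Z_1, K_1(K_1 - Z_1)/g_1)$ for a suitable cache parameter $Z_1$ so that it matches the template of Corollary \ref{cor:basicreg}. For the inductive step, I would assume $P_b^{(i-1)}$ is a $(g_1 \cdots g_{i-1})$-regular $(K_1 \cdots K_{i-1}) \times (K_1 \cdots K_{i-1})$ PDA and then invoke Corollary \ref{cor:basicreg} with base PDA $P_b = P_b^{(i-1)}$ (so that $g_b = g_1 \cdots g_{i-1}$ and $K_b = f_b = K_1 \cdots K_{i-1}$) and constituent PDA $P_l = P_l^{(i)}$ (so that $g_c = g_i$ and $m = n = K_i$). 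The corollary then returns that $P_b^{(i)} = B_{P_l^{(i)}}(P_b^{(i-1)})$ is $(g_1 \cdots g_{i-1}) g_i = (g_1 \cdots g_i)$-regular with $K_b m = K_1 \cdots K_i$ columns and $f_b n = K_1 \cdots K_i$ rows; as these two counts are equal, the lifted array is again square, which closes the induction.

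I do not expect a genuine obstacle once Corollary \ref{cor:basicreg} is available, since the recursion is literally a chain of the liftings governed by that corollary. The only point that warrants a word of care is the bookkeeping needed to satisfy its hypothesis: the corollary requires the constituent family to consist of $K_b(f_b - Z_b)/g_b$ disjointly-labeled copies of $P_l^{(i)}$, one for each integer appearing in $P_b^{(i-1)}$. This is automatically met, because that count is exactly the number of distinct integers of the $(g_1 \cdots g_{i-1})$-regular PDA $P_b^{(i-1)}$, and the integer sets of the copies can always be made pairwise disjoint by relabeling. Hence the hypothesis holds at every stage and the induction runs through without further work.
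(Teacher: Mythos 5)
Your proof is correct and is exactly the argument the paper has in mind: the paper explicitly skips the proof of this lemma because it is an immediate induction on Corollary~\ref{cor:basicreg}, which is precisely what you carry out, including the right bookkeeping on the square shape and the integer-disjoint copies of the constituent PDAs.
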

We illustrate the above lemma with 1-PDAs and/or 2-PDAs used as constituent PDAs.
\begin{example}[1-PDA and 2-PDA recursive basic lifting  for $K=K_1K_2\cdots K_r$ users]
Let $g_i\in\{1,2\}$ for $i=1,\ldots,r$. If $g_i=1$, let $z_i\in[K_i]$. If $g_i=2$, let $z_i\in\{1,2,\ldots,K_i-1\}$ if $K_i$ is even, or let $z_i\in\{1,3,\ldots,K_i-2\}$ if $K_i$ is odd. Let $P_b^{(1)}$ be a $K_1\times K_1$ $g_1$-PDA, and let $P_l^{(i)}$ for $i=2,\ldots,r$ be $K_i\times K_i$ $g_i$-PDAs (these PDAs exist by Corollaries \ref{lem:modify_even} and \ref{lem:modify_odd}). Consider the recursion $P_b^{(i)}=B_{P_{l}^{(i)}}(P_b^{(i-1)})$ for $i=2,\ldots,r$. The number of $\x$s per column of $P_b^{(i)}$, denoted $Z_i$, is given by the recursion 
\begin{equation}
Z_i=Z_{i-1}K_i+(K_1\cdots K_{i-1}-Z_{i-1})z_i,\, i=2,\ldots,r,
\label{eq:zrecbasic}
\end{equation}
initialised with $Z_1=z_1$. The result of the recursive lifting $P_b^{(r)}$ is a $(g_1\cdots g_r)$-regular $K\times K$ PDA with $Z_r$ given by \eqref{eq:zrecbasic}.

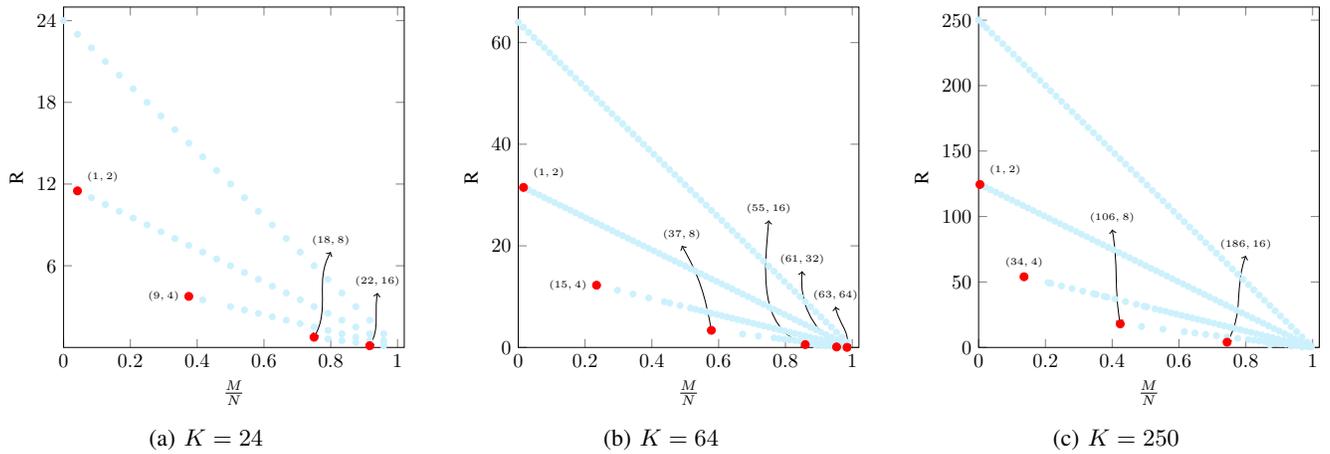
\begin{figure}[H]
\begin{subfigure}{0.32\textwidth}
    \centering
    \definecolor{mycolor1}{rgb}{0.80000,0.94700,0.99100}%
\begin{tikzpicture}[scale=0.75]

\begin{axis}[%
clip=false,
width=3in,
height=3in,
at={(0.758in,0.481in)},
xmin=0,
xmax=1.02,
ymin=0,
ymax=25,
ytick = {6,12,18,24},
axis background/.style={fill=white},
xlabel = {$\frac{M}{N}$},
ylabel = {R},
legend style={legend cell align=left, align=left, draw=white!15!black}
]
\addplot[only marks, mark=*, mark options={}, mark size=1.5000pt, draw=mycolor1,fill=mycolor1] table[row sep=crcr]{%
x	y\\
0	24\\
0.0416666666666667	11.5\\
0.0416666666666667	23\\
0.0833333333333333	11\\
0.0833333333333333	22\\
0.125	10.5\\
0.125	21\\
0.166666666666667	10\\
0.166666666666667	20\\
0.208333333333333	9.5\\
0.208333333333333	19\\
0.25	9\\
0.25	18\\
0.291666666666667	8.5\\
0.291666666666667	17\\
0.333333333333333	8\\
0.333333333333333	16\\
0.375	3.75\\
0.375	7.5\\
0.375	15\\
0.416666666666667	3.5\\
0.416666666666667	7\\
0.416666666666667	14\\
0.458333333333333	6.5\\
0.458333333333333	13\\
0.5	3\\
0.5	6\\
0.5	12\\
0.541666666666667	2.75\\
0.541666666666667	5.5\\
0.541666666666667	11\\
0.583333333333333	2.5\\
0.583333333333333	5\\
0.583333333333333	10\\
0.625	2.25\\
0.625	4.5\\
0.625	9\\
0.666666666666667	2\\
0.666666666666667	4\\
0.666666666666667	8\\
0.708333333333333	1.75\\
0.708333333333333	3.5\\
0.708333333333333	7\\
0.75	0.75\\
0.75	1.5\\
0.75	3\\
0.75	6\\
0.791666666666667	0.625\\
0.791666666666667	1.25\\
0.791666666666667	2.5\\
0.791666666666667	5\\
0.833333333333333	0.5\\
0.833333333333333	1\\
0.833333333333333	2\\
0.833333333333333	4\\
0.875	0.375\\
0.875	0.75\\
0.875	1.5\\
0.875	3\\
0.916666666666667	0.125\\
0.916666666666667	0.25\\
0.916666666666667	0.5\\
0.916666666666667	1\\
0.916666666666667	2\\
0.958333333333333	0.125\\
0.958333333333333	0.25\\
0.958333333333333	0.5\\
0.958333333333333	1\\
};
\node[label={[outer sep=-2pt]45:\tiny{$(1,2)$}}] at (axis cs: 0.0416666666666667,11.5) {} ;
\node[label={[outer sep=-2pt]180:\tiny{$(9,4)$}}] at (axis cs: 0.375, 3.75) {} ;
\path[<->, draw] (axis cs: 0.75, 0.75) to[out = 60, in = 240]
        (axis cs: 0.8, 7) node[above] {\tiny{$(18,8)$}};
\path[<->, draw] (axis cs: 0.916666666666667, 0.125) to[out = 60, in = 260]
        (axis cs: 0.94, 4) node[above] {\tiny{$(22,16)$}};
\addplot[only marks, mark options={solid,draw=red,fill=red}]
    coordinates {
    (0.0416666666666667,11.5) (0.375, 3.75) (0.75, 0.75) (0.916666666666667, 0.125) 
    };
\end{axis}
\end{tikzpicture}%
    \caption{$K=24$}
\end{subfigure}%
~
\begin{subfigure}{0.32\textwidth}
    \centering
    \definecolor{mycolor1}{rgb}{0.80000,0.94700,0.99100}%
\begin{tikzpicture}[scale=0.75]

\begin{axis}[%
clip=false,
width=3in,
height=3in,
at={(0.758in,0.481in)},
xmin=0,
xmax=1.02,
ymin=0,
ymax=67,
axis background/.style={fill=white},
xlabel = {$\frac{M}{N}$},
ylabel = {R},
legend style={legend cell align=left, align=left, draw=white!15!black}
]
\addplot[only marks, mark=*, mark options={}, mark size=1.5000pt, draw=mycolor1, fill=mycolor1] table[row sep=crcr]{%
x	y\\
0	64\\
0.015625	31.5\\
0.015625	63\\
0.03125	31\\
0.03125	62\\
0.046875	30.5\\
0.046875	61\\
0.0625	30\\
0.0625	60\\
0.078125	29.5\\
0.078125	59\\
0.09375	29\\
0.09375	58\\
0.109375	28.5\\
0.109375	57\\
0.125	28\\
0.125	56\\
0.140625	27.5\\
0.140625	55\\
0.15625	27\\
0.15625	54\\
0.171875	26.5\\
0.171875	53\\
0.1875	26\\
0.1875	52\\
0.203125	25.5\\
0.203125	51\\
0.21875	25\\
0.21875	50\\
0.234375	12.25\\
0.234375	24.5\\
0.234375	49\\
0.25	24\\
0.25	48\\
0.265625	23.5\\
0.265625	47\\
0.28125	23\\
0.28125	46\\
0.296875	11.25\\
0.296875	22.5\\
0.296875	45\\
0.3125	22\\
0.3125	44\\
0.328125	21.5\\
0.328125	43\\
0.34375	10.5\\
0.34375	21\\
0.34375	42\\
0.359375	20.5\\
0.359375	41\\
0.375	20\\
0.375	40\\
0.390625	9.75\\
0.390625	19.5\\
0.390625	39\\
0.40625	19\\
0.40625	38\\
0.421875	18.5\\
0.421875	37\\
0.4375	9\\
0.4375	18\\
0.4375	36\\
0.453125	8.75\\
0.453125	17.5\\
0.453125	35\\
0.46875	17\\
0.46875	34\\
0.484375	8.25\\
0.484375	16.5\\
0.484375	33\\
0.5	16\\
0.5	32\\
0.515625	7.75\\
0.515625	15.5\\
0.515625	31\\
0.53125	7.5\\
0.53125	15\\
0.53125	30\\
0.546875	7.25\\
0.546875	14.5\\
0.546875	29\\
0.5625	7\\
0.5625	14\\
0.5625	28\\
0.578125	3.375\\
0.578125	6.75\\
0.578125	13.5\\
0.578125	27\\
0.59375	6.5\\
0.59375	13\\
0.59375	26\\
0.609375	6.25\\
0.609375	12.5\\
0.609375	25\\
0.625	6\\
0.625	12\\
0.625	24\\
0.640625	5.75\\
0.640625	11.5\\
0.640625	23\\
0.65625	5.5\\
0.65625	11\\
0.65625	22\\
0.671875	2.625\\
0.671875	5.25\\
0.671875	10.5\\
0.671875	21\\
0.6875	5\\
0.6875	10\\
0.6875	20\\
0.703125	4.75\\
0.703125	9.5\\
0.703125	19\\
0.71875	2.25\\
0.71875	4.5\\
0.71875	9\\
0.71875	18\\
0.734375	4.25\\
0.734375	8.5\\
0.734375	17\\
0.75	4\\
0.75	8\\
0.75	16\\
0.765625	1.875\\
0.765625	3.75\\
0.765625	7.5\\
0.765625	15\\
0.78125	1.75\\
0.78125	3.5\\
0.78125	7\\
0.78125	14\\
0.796875	1.625\\
0.796875	3.25\\
0.796875	6.5\\
0.796875	13\\
0.8125	1.5\\
0.8125	3\\
0.8125	6\\
0.8125	12\\
0.828125	1.375\\
0.828125	2.75\\
0.828125	5.5\\
0.828125	11\\
0.84375	1.25\\
0.84375	2.5\\
0.84375	5\\
0.84375	10\\
0.859375	0.5625\\
0.859375	1.125\\
0.859375	2.25\\
0.859375	4.5\\
0.859375	9\\
0.875	1\\
0.875	2\\
0.875	4\\
0.875	8\\
0.890625	0.4375\\
0.890625	0.875\\
0.890625	1.75\\
0.890625	3.5\\
0.890625	7\\
0.90625	0.375\\
0.90625	0.75\\
0.90625	1.5\\
0.90625	3\\
0.90625	6\\
0.921875	0.3125\\
0.921875	0.625\\
0.921875	1.25\\
0.921875	2.5\\
0.921875	5\\
0.9375	0.25\\
0.9375	0.5\\
0.9375	1\\
0.9375	2\\
0.9375	4\\
0.953125	0.09375\\
0.953125	0.1875\\
0.953125	0.375\\
0.953125	0.75\\
0.953125	1.5\\
0.953125	3\\
0.96875	0.0625\\
0.96875	0.125\\
0.96875	0.25\\
0.96875	0.5\\
0.96875	1\\
0.96875	2\\
0.984375	0.015625\\
0.984375	0.03125\\
0.984375	0.0625\\
0.984375	0.125\\
0.984375	0.25\\
0.984375	0.5\\
0.984375	1\\
};
\node[label={[outer sep=-2pt]45:\tiny{$(1,2)$}}] at (axis cs: 0.015625,31.5) {} ;
\node[label={[outer sep=-2pt]180:\tiny{$(15,4)$}}] at (axis cs: 0.234375,12.25) {} ;
\path[->, draw] (axis cs: 0.578125,3.375) to[out = 100, in = 300]
        (axis cs: 0.49, 20) node[above] {\tiny{$(37,8)$}};
\path[<->, draw] (axis cs: 0.859375,0.5625) to[out = 160, in = 265]
        (axis cs: 0.75, 25) node[above] {\tiny{$(55,16)$}};
\path[<->, draw] (axis cs: 0.953125,0.09375) to[out = 140, in = 265]
        (axis cs: 0.85, 15) node[above] {\tiny{$(61,32)$}};
\path[<->, draw] (axis cs: 0.984375,0.015625) to[out = 80, in = 280]
        (axis cs: .95, 8) node[above] {\tiny{$(63,64)$}};
\addplot[only marks, mark options={solid,draw=red,fill=red}]
    coordinates {
    (0.015625,31.5) (0.234375,12.25) (0.578125,3.37) (0.859375,0.5625) (0.953125,0.09375) (0.984375,0.015625)
    };

\end{axis}
\end{tikzpicture}%
    \caption{$K=64$}
\end{subfigure}%
~
\begin{subfigure}{0.32\textwidth}
    \centering
    \definecolor{mycolor1}{rgb}{0.80000,0.94700,0.99100}%
\begin{tikzpicture}[scale=0.75]

\begin{axis}[%
clip=false,
width=3in,
height=3in,
at={(0.758in,0.481in)},
xmin=0,
xmax=1.02,
ymin=0,
ymax=260,
axis background/.style={fill=white},
xlabel = {$\frac{M}{N}$},
ylabel = {R},
legend style={legend cell align=left, align=left, draw=white!15!black}
]
\addplot[only marks, mark=*, mark options={}, mark size=1.5000pt, draw=mycolor1, fill=mycolor1] table[row sep=crcr]{%
x	y\\
0.996	1\\
0.984	4\\
0.968	8\\
0.952	12\\
0.936	16\\
0.92	20\\
0.904	24\\
0.888	28\\
0.872	32\\
0.856	36\\
0.84	40\\
0.824	44\\
0.808	48\\
0.792	52\\
0.776	56\\
0.76	60\\
0.744	64\\
0.728	68\\
0.712	72\\
0.696	76\\
0.68	80\\
0.664	84\\
0.648	88\\
0.632	92\\
0.616	96\\
0.6	100\\
0.584	104\\
0.568	108\\
0.552	112\\
0.536	116\\
0.52	120\\
0.504	124\\
0.488	128\\
0.472	132\\
0.456	136\\
0.44	140\\
0.424	144\\
0.408	148\\
0.392	152\\
0.376	156\\
0.36	160\\
0.344	164\\
0.328	168\\
0.312	172\\
0.296	176\\
0.28	180\\
0.264	184\\
0.248	188\\
0.232	192\\
0.216	196\\
0.2	200\\
0.184	204\\
0.168	208\\
0.152	212\\
0.136	216\\
0.12	220\\
0.104	224\\
0.088	228\\
0.072	232\\
0.056	236\\
0.04	240\\
0.024	244\\
0.012	247\\
0	250\\
0.996	0.5\\
0.98	2.5\\
0.964	4.5\\
0.948	6.5\\
0.932	8.5\\
0.916	10.5\\
0.9	12.5\\
0.884	14.5\\
0.868	16.5\\
0.852	18.5\\
0.836	20.5\\
0.82	22.5\\
0.804	24.5\\
0.788	26.5\\
0.772	28.5\\
0.756	30.5\\
0.74	32.5\\
0.724	34.5\\
0.708	36.5\\
0.692	38.5\\
0.676	40.5\\
0.66	42.5\\
0.644	44.5\\
0.628	46.5\\
0.612	48.5\\
0.596	50.5\\
0.58	52.5\\
0.564	54.5\\
0.548	56.5\\
0.532	58.5\\
0.516	60.5\\
0.5	62.5\\
0.484	64.5\\
0.468	66.5\\
0.452	68.5\\
0.436	70.5\\
0.42	72.5\\
0.404	74.5\\
0.388	76.5\\
0.372	78.5\\
0.356	80.5\\
0.34	82.5\\
0.324	84.5\\
0.308	86.5\\
0.292	88.5\\
0.276	90.5\\
0.26	92.5\\
0.244	94.5\\
0.228	96.5\\
0.212	98.5\\
0.196	100.5\\
0.18	102.5\\
0.164	104.5\\
0.148	106.5\\
0.132	108.5\\
0.116	110.5\\
0.1	112.5\\
0.084	114.5\\
0.068	116.5\\
0.052	118.5\\
0.036	120.5\\
0.02	122.5\\
0.004	124.5\\
0.992	0.5\\
0.976	1.5\\
0.96	2.5\\
0.944	3.5\\
0.928	4.5\\
0.912	5.5\\
0.896	6.5\\
0.88	7.5\\
0.864	8.5\\
0.848	9.5\\
0.832	10.5\\
0.816	11.5\\
0.8	12.5\\
0.784	13.5\\
0.768	14.5\\
0.752	15.5\\
0.736	16.5\\
0.72	17.5\\
0.704	18.5\\
0.688	19.5\\
0.672	20.5\\
0.656	21.5\\
0.64	22.5\\
0.624	23.5\\
0.608	24.5\\
0.592	25.5\\
0.576	26.5\\
0.56	27.5\\
0.544	28.5\\
0.528	29.5\\
0.504	31\\
0.472	33\\
0.44	35\\
0.408	37\\
0.376	39\\
0.344	41\\
0.312	43\\
0.28	45\\
0.248	47\\
0.216	49\\
0.496	31.5\\
0.352	40.5\\
0.384	38.5\\
0.208	49.5\\
0.136	54\\
0.984	0.5\\
0.968	1\\
0.936	2\\
0.952	1.5\\
0.904	3\\
0.872	4\\
0.92	2.5\\
0.84	5\\
0.808	6\\
0.888	3.5\\
0.776	7\\
0.744	8\\
0.856	4.5\\
0.712	9\\
0.68	10\\
0.824	5.5\\
0.648	11\\
0.616	12\\
0.552	14\\
0.488	16\\
0.424	18\\
0.968	0.5\\
0.936	1\\
0.872	2\\
0.744	4\\
};
\node[label={[outer sep=-2pt]45:\tiny{$(1,2)$}}] at (axis cs: 0.004,124.5) {} ;
\node[label={[outer sep=-2pt]90:\tiny{$(34,4)$}}] at (axis cs: 0.136,54) {} ;
\path[<->, draw] (axis cs: 0.424,18) to[out = 100, in = 280]
        (axis cs: 0.4, 90) node[above] {\tiny{$(106,8)$}};
\path[<->, draw] (axis cs: 0.744,4) to[out = 60, in = 250]
        (axis cs: 0.8, 70) node[above] {\tiny{$(186,16)$}};
\addplot[only marks, mark options={solid,draw=red,fill=red}]
    coordinates {
    (0.004,124.5) (0.136,54) (0.424,18) (0.744,4) 
    };

\end{axis}
\end{tikzpicture}%
    \caption{$K=250$}
\end{subfigure}
        
\caption{Memory-rate tradeoff for $K=24,64,250$ obtained using Lemma~\ref{lm:recTiling}. The points related to PDAs with minimum $Z$ for a coding gain $g$ are highlighted in red and are labelled using $(Z,g)$.}
\label{fig:tiling24}
\end{figure}

\end{example}
\subsection{General lifting}
In basic lifting, every $\x$ in the base PDA is replaced with the all-$\x$ array. We next consider a generalized version of basic lifting by allowing more general PDAs to replace a $\x$. For coded caching schemes with low cache memory, since we need the number of $\x$s in each column in the lifted PDA to be low, replacing $\x$ in the base PDA with non-trivial PDAs is beneficial. However, to ensure that the Blackburn property for the lifted PDA is not violated, the constituent PDAs that are used to replace the integers and $\x$s need to satisfy some additional constraints.
We introduce the notion of Blackburn compatibility of PDAs to capture such constraints on the constituent PDAs.

\subsubsection{Blackburn compatibility}
Two $n\times n$ PDAs $P_0=[p^{(0)}_{ij}]$ and $P_1=[p^{(1)}_{ij}]$ are said to be \emph{Blackburn-compatible} with respect to (w.r.t.) a third $n\times n$ PDA $P_{\x}=[p^{({\x})}_{ij}]$ if, whenever  $p^{(0)}_{i_1j_1}{=}p^{(1)}_{i_2j_2}{\ne}\Asterisk$, we have  $p^{({\x})}_{i_1j_2}{=}p^{({\x})}_{i_2j_1}=\Asterisk$. In other words, if two entries in $P_0$ and $P_1$ are a common integer $s$, the \emph{mirrored} locations of $s$ in $P_*$ are $\x$s. For $g\ge2$, we say $P_0,\ldots,P_{g-1}$ are Blackburn compatible w.r.t. $P_*$ when they are pairwise Blackburn compatible with $P_*$. 

To see the connection between Blackburn compatibility and lifting, consider an integer $s$ occurring $g$ times in a base PDA. The rows and columns containing $s$ in the PDA, after permutations, can be rearranged into the PDA $I_g(s)$. So, any valid lifting of the base PDA needs to necessarily include a valid lifting of $I_g(s)$. Validity of a certain lifting of $I_g(s)$ and Blackburn compatibility are shown to be equivalent in the following lemma. 
\begin{lemma}[{Equivalence between Blackburn-compatibility and lifting}]
Suppose $P_*,P_0,\ldots,P_{g-1}$ are PDAs of the same size. Let $P_*^{(i,j)}$ for $i,j=0,1,2,\ldots$ be copies of $P_*$ containing integers that are disjoint from each other and from the integers in $P_0,\ldots,P_{g-1}$. Then, the set $\mathcal{P}=\{P_0,\ldots,P_{g-1}\}$ is a set of PDAs Blackburn-compatible w.r.t. $P_*$ if and only if the following lifting of $I_g$ is a valid PDA. 
$$L_{\mathcal{P},P_*}(I_g)\triangleq\begin{pmatrix}
    P_0&P_*^{(0,1)}&\cdots&P_*^{(0,g-1)}\\
    P_*^{(1,0)}&P_1&\cdots&P_*^{(1,g-1)}\\
    \vdots&\vdots&\ddots&\vdots\\
    P_*^{(g-1,0)}&P_*^{(g-1,1)}&\cdots&P_{g-1}
\end{pmatrix},$$ 
\end{lemma}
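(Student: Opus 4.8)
The plan is to verify the three PDA axioms for the block array $L=L_{\mathcal{P},P_*}(I_g)$ directly from its block structure, and to show that the only axiom carrying real content---the Blackburn property C3---is logically equivalent to Blackburn compatibility of $\mathcal{P}$ w.r.t. $P_*$. Throughout I would index a cell of $L$ by $(a,i;b,j)$, meaning block-row $a\in\{0,\dots,g-1\}$, block-column $b$, and position $(i,j)$ inside that $n\times n$ block; the block in position $(a,b)$ is $P_a$ when $a=b$ and the copy $P_*^{(a,b)}$ when $a\ne b$. First I would dispose of the two routine axioms. Condition C2 is immediate, since every integer of the symbol set of $L$ already occurs inside the single block that defines it, and each block is a PDA. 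Condition C1 follows because each column $(b,j)$ of $L$ is the concatenation of column $j$ of the diagonal block $P_b$ with column $j$ of the $g-1$ copies of $P_*$ filling block-column $b$; its number of $\x$'s is therefore $Z_b+(g-1)Z_*$, which is constant in $j$ and, provided the diagonal constituents share a common per-column $\x$-weight (as they do in our constructions, where the $P_a$ are relabellings of a single PDA), constant in $b$ as well.

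The heart of the argument is C3, and the disjointness hypothesis on the integer sets does the organizing work. Suppose an integer $s$ occupies two distinct cells of $L$. Because the off-diagonal copies $P_*^{(a,b)}$ carry integer sets disjoint from one another and from those of $P_0,\dots,P_{g-1}$, the value $s$ can coincide in two cells only in one of two ways: (i) both cells lie in a single block, or (ii) both cells lie in diagonal blocks $P_a$ and $P_b$ with $a\ne b$. In case (i) the two mirrored cells demanded by the Blackburn property of $L$ lie in that same block, so the requirement is exactly C3 for $P_a$ (or for $P_*$), which holds by hypothesis. In case (ii), writing the occurrences as $p^{(a)}_{i_1j_1}=p^{(b)}_{i_2j_2}=s\ne\x$, the mirrored cells are $(a,i_1;b,j_2)$ and $(b,i_2;a,j_1)$, which sit in the off-diagonal blocks $P_*^{(a,b)}$ and $P_*^{(b,a)}$ and read $p^{(*)}_{i_1j_2}$ and $p^{(*)}_{i_2j_1}$. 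Thus C3 of $L$ at these cells is precisely the assertion $p^{(*)}_{i_1j_2}=p^{(*)}_{i_2j_1}=\x$, which is the definition of Blackburn compatibility of $P_a$ and $P_b$ w.r.t. $P_*$.

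This correspondence delivers both directions. For the ``if'' direction I would assume Blackburn compatibility and check C3 case by case as above, case (i) handled by the individual PDAs and case (ii) by the compatibility hypothesis; together with C1 and C2 this shows $L$ is a valid PDA. For the ``only if'' direction I would assume $L$ is a valid PDA and, given any $a\ne b$ and positions with $p^{(a)}_{i_1j_1}=p^{(b)}_{i_2j_2}=s\ne\x$, observe that $s$ then occupies the two distinct diagonal cells $(a,i_1;a,j_1)$ and $(b,i_2;b,j_2)$; applying C3 of $L$ to this pair forces $p^{(*)}_{i_1j_2}=p^{(*)}_{i_2j_1}=\x$, which is pairwise Blackburn compatibility, hence compatibility of the whole set by definition.

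The main thing to get right---and the only genuinely delicate step---is the bookkeeping that places each cell and its two mirror images into the correct block. Once one sees that disjointness confines every inter-block coincidence of an integer to a pair of \emph{diagonal} blocks, the reduction of C3 to the pairwise compatibility condition is forced; a value $s$ shared across three or more diagonal blocks is absorbed automatically, since C3 of $L$ constrains each pair of occurrences and every such pair is covered by the pairwise definition.
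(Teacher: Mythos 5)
Your proof is correct and takes essentially the same route as the paper's, which simply asserts that the validity conditions for $L_{\mathcal{P},P_*}(I_g)$ and the definition of Blackburn compatibility are ``readily seen to be equivalent'' once the disjointness of the integer sets is invoked; your case analysis (coincidences confined to a single block versus a pair of diagonal blocks, with mirrors landing in $P_*^{(a,b)}$ and $P_*^{(b,a)}$) is exactly the bookkeeping that assertion leaves implicit. Your caveat that condition C1 requires the diagonal blocks to share a common per-column $\x$-count is a fair observation that the paper's one-line proof glosses over.
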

\begin{proof}
The conditions for validity of the above PDA and the definition of Blackburn compatibility are readily seen to be equivalent. Because of the disjointness properties of the integers in $P_*^{(i,j)}$, no additional conditions arise.
\end{proof}
The above lemma, beyond establishing the connection between lifting and Blackburn compatibility, provides a way to visualize the mirrored locations and aids in constructions of Blackburn-compatible PDAs.  

We see that any two PDAs are Blackburn-compatible w.r.t. the trivial all-$\x$ PDA. If $P_*$ is not all $\x$, the Blackburn compatibility needs to be established more carefully. Before presenting tests for Blackburn compatibility and general constructions, we show some illustrative examples. 
\begin{example}
$P_0=\begin{pmatrix}
    0&1\\
    3&4
\end{pmatrix}$, $P_1=\begin{pmatrix}
    4&1\\
    3&0
\end{pmatrix}$ are Blackburn-compatible w.r.t. $I_2(t)=\begin{pmatrix}
    t&\x\\
    \x&t
\end{pmatrix}$ for $t\notin\{0,1,3,4\}$. We see that
\begin{align*}
L_{\{P_0,P_1\},I2}(I_2)=\left(\begin{array}{cc:cc}
    0&1&2&\x\\
    3&4&\x&2\\ \hdashline
    5&\x&4&1\\
    \x&5&3&0
\end{array}\right)
\end{align*}
is the $(4,4,1,6)$ 2-PDA that we have earlier denoted as $G_4([6])$.
\end{example}
\begin{example}
$P_0=\begin{pmatrix}
    \x&0&2\\
    0&\x&1\\
    3&1&\x
\end{pmatrix}$, $P_1=\begin{pmatrix}
    1&2&\x\\
    3&\x&2\\
    \x&3&0
\end{pmatrix}$ are Blackburn-compatible w.r.t. $I_3(t)$ for $t\notin[4]$. We see that
\begin{align*}
L_{\{P_0,P_1\},I_3}(I_2)=\left(\begin{array}{ccc:ccc}
    \x&0&2&4&\x&\x\\
    0&\x&1&\x&4&\x\\
    3&1&\x&\x&\x&4\\\hdashline
    5&\x&\x&1&2&\x\\
    \x&5&\x&3&\x&2\\
    \x&\x&5&\x&3&0
\end{array}\right)
\end{align*}
is a $(6,6,3,6)$ 3-PDA.
\end{example}
\begin{example}
Consider the following rectangular arrays.
\begin{align}
    P_0 = \begin{pmatrix}
        0 & \x & \x & 1 & 6\\
        2 & 1 & \x & \x & 7\\
        \x & 5 & 2 & \x & 3\\
        \x & 4 & 9 & 3 & \x\\
        \x & \x & 0 & 5 & 4\\
        \x & \x & 1 & \x & 8\\
        9 & \x & \x & 7 & \x\\
        3 & 8 & \x & \x & \x\\
        4 & \x & 6 & \x & \x\\
        \x & 0 & \x & 2 & \x        
    \end{pmatrix},
    P_1 = \begin{pmatrix}
        \x & \x & \x & 1 & 6\\
        2 & \x & \x & \x & 7\\
        \x & 5 & \x & \x & 3\\
        \x & 4 & 9 & \x & \x\\
        \x & \x & 0 & 5 & \x\\
        5 & \x & 1 & \x & 8\\
        9 & 6 & \x & 7 & \x\\
        3 & 8 & 7 & \x & \x\\
        4 & \x & 6 & 8 & \x\\
        \x & 0 & \x & 2 & 9        
    \end{pmatrix},
    P_{\x}(\{s,t\}) = \begin{pmatrix}
        s & \x & \x & \x & \x \\
        \x & s & \x & \x & \x \\
        \x & \x & s & \x & \x \\
        \x & \x & \x & s & \x \\
        \x & \x & \x & \x & s \\
        t & \x & \x & \x & \x \\
        \x & t & \x & \x & \x \\
        \x & \x & t & \x & \x \\
        \x & \x & \x & t & \x \\
        \x & \x & \x & \x & t \\
    \end{pmatrix}.
\label{eq:10by5}
\end{align}
We can check that $P_0$ and $P_1$ are Blackburn-compatible w.r.t. $P_*(\{s,t\})$ for $s,t\notin[10]$ by confirming that all mirrored locations of integers 0 to 9 are $\x$s. Also, $$L_{\{P_0,P_1\},P_*}(I_2)=\begin{pmatrix}
    P_0&P_*(\{10,11\})\\
    P_*(\{12,13\})&P_1
\end{pmatrix}$$
is a $(10,20,13,14)$ 5-PDA. 
\end{example}
\begin{example}
Given,
\begin{align}
     P_*(\{t_0,t_1,t_2,t_3\})=\begin{pmatrix}
         I_3(t_0)\\
         I_3(t_1)\\
         I_3(t_2)\\
         I_3(t_3)
     \end{pmatrix}, P_0=\begin{pmatrix}
         J_3(0:8)\\
         H_{3,0}(9,13,17)\\
         H_{3,1}(10,14,15)\\
         H_{3,2}(11,12,16)
     \end{pmatrix}, P_1=\begin{pmatrix}
         J_3(9:17)\\
         H_{3,0}(0,4,8)\\
         H_{3,1}(1,5,6)\\
         H_{3,2}(2,3,7)
     \end{pmatrix}
\label{eq:12by3}
\end{align}
$P_0$ and $P_1$ are Blackburn-compatible w.r.t. $P_*(\{t_0,t_1,t_2,t_3\})$ for $t_0,t_1,t_2,t_3 \notin [18]$ and
$$L_{\{P_0,P_1\},P_*}(I_2)=\begin{pmatrix}
    P_0&P_*(\{18,19,20,21\})\\
    P_*(\{22,23,24,25\})&P_1
\end{pmatrix}$$
is a $(6,24,11,26)$ 3-PDA.
\end{example}
The above examples are for two PDAs Blackburn-compatible w.r.t. a third non-trivial PDA. When $P_*$ is all-$\x$, an arbitrary number of copies of a PDA, $\{P,P,\ldots\}$, are Blackburn-compatible w.r.t. the all-$\x$ array, and this is used in the basic lifting of Theorem \ref{th:basicTiling}. However, for a general $P_*$, we require integer-disjoint copying of PDAs to ensure Blackburn compatibility. Since this is a repeatedly occurring step in constructions, we record it as a lemma.
\begin{lemma}[{Replication of Blackburn-comptabible PDAs}]\label{lm:bcReplicate}
Given a set $\mathcal{P}$ of $b$ PDAs Blackburn-compatible w.r.t. $P_*$, the set of $mb$ PDAs formed by $m$ integer-disjoint copies of the PDAs in $\mathcal{P}$ is Blackburn-compatible w.r.t. $P_*$, for any integer $m > 0$.
\end{lemma}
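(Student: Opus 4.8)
The plan is to reduce the claim to the definition of Blackburn compatibility by checking the defining condition directly on the enlarged set of PDAs. Recall that a collection $\mathcal{P}$ is Blackburn-compatible w.r.t. $P_*$ precisely when the PDAs are \emph{pairwise} Blackburn-compatible w.r.t. $P_*$; so it suffices to verify the pairwise condition for every pair of PDAs drawn from the replicated collection. Let me denote the original PDAs in $\mathcal{P}$ as $P_0,\ldots,P_{b-1}$, and for each copy index $c\in\{1,\ldots,m\}$ let $P_i^{(c)}$ denote the $c$-th integer-disjoint copy of $P_i$. First I would fix an arbitrary pair of PDAs from the replicated set and split into cases according to whether the two PDAs are copies of the same original $P_i$ or of distinct originals $P_i,P_j$, and whether they come from the same copy index or different copy indices.

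The key observation driving every case is the disjointness of integers across distinct copies. The pairwise Blackburn-compatibility condition is only ever \emph{triggered} when the two PDAs share a common integer $s$ in some pair of cells; if no common integer exists, the condition holds vacuously. So the main step is to determine when two PDAs in the replicated set can share an integer. Two copies $P_i^{(c)}$ and $P_j^{(c')}$ with different copy indices $c\neq c'$ use disjoint integer sets by construction, hence they share no common integer and the compatibility condition is vacuously satisfied. The only nontrivial case is $c=c'$: here $P_i^{(c)}$ and $P_j^{(c)}$ are copies of $P_i$ and $P_j$ under the \emph{same} relabeling of integers, so a shared integer $s$ in $P_i^{(c)},P_j^{(c)}$ corresponds to a shared integer (the preimage under the relabeling) in the originals $P_i,P_j$. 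Since $P_i,P_j\in\mathcal{P}$ are Blackburn-compatible w.r.t. $P_*$ by hypothesis, the required mirrored $\x$-locations in $P_*$ are guaranteed, and relabeling integers does not move any $\x$ entries, so the condition transfers verbatim to $P_i^{(c)},P_j^{(c)}$. (The case $i=j$, $c=c'$ is excluded since a PDA is not paired with itself, and the case $i=j$, $c\neq c'$ is handled by the disjointness argument above.)

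The step I expect to require the most care—though it is conceptually routine—is making precise that ``integer-disjoint copying'' is a bijective relabeling of the integer alphabet that fixes the $\x$ symbol and preserves the positions of all entries. Once this is pinned down, Blackburn-compatibility is manifestly invariant under such relabeling: the defining condition refers only to equality of integer entries and to $\x$-positions in $P_*$, both of which are preserved. Thus no genuine obstacle arises; the content of the lemma is simply that introducing fresh integers for each copy neutralizes all cross-copy interactions, reducing the verification to the already-assumed within-copy compatibility. Assembling the cases completes the proof that the full replicated set of $mb$ PDAs is Blackburn-compatible w.r.t. $P_*$.
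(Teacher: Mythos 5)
Your proof is correct and follows the same idea as the paper, which dispatches the lemma in one line as ``immediate by the disjointness of the integers''; your case split (cross-copy pairs are vacuous by integer-disjointness, within-copy pairs inherit compatibility because relabeling fixes $\x$-positions) is just that observation written out in full. No gaps.
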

\begin{proof}
The proof is immediate by the disjointness of the integers.
\end{proof}
The use of integer-disjoint copying in the above lemma results in a multiplicative increase in the number of integers with every replication. If a decrease in the number of integers is desirable (to change the memory-rate trade-off), other methods have to be considered.
\subsubsection{General lifting theorem}
Using Blackburn-compatible PDAs, a generalization of basic lifting is presented in the following theorem.
\begin{theorem}[General lifting]\label{th:generalTiling}
Let $P_b$ be a $(K,f,Z_b,\mathcal{S}_b)$ PDA. Let $g_s$ be the frequency of integer $s$ in $P_b$.
For $s\in \mathcal{S}_b$ and $t\in\{1,\ldots,g_s\}$, let $P_{s,t}$ be an $(m,n,Z_c,\mathcal{S}_{s,t})$ PDA such that for any $s$, $P_{s,1}, \ldots , P_{s,g_s}$ are Blackburn compatible w.r.t. an $(m,n,Z_*,\mathcal{S}_*)$ PDA $P_*$, and, for distinct integers $s,s'\in\mathcal{S}_b$, $\mathcal{S}_{s,t}$ and $\mathcal{S}_{s',t'}$ are disjoint.
Let $P_{*,r}$, $r\in[KZ_b]$, be integer-disjoint copies of $P_*$, which are integer-disjoint with $P_{s,t}$ as well.
Let an array $P$ be defined as follows:
\begin{enumerate}
    \item $r$-th $\x$ in $P_b$ is replaced by $P_{\x,r}$ for $r\in[KZ_b]$.
    \item $t$-th occurrence of integer $s\in\mathcal{S}_b$ in $P_b$ is replaced by $P_{s,t}$ for $t=1,\ldots,g_s$.
\end{enumerate}
Then, $P$ is a $(Km,fn,Z_bZ_*+(f-Z_b)Z_c,\mathcal{S})$ PDA, where $\mathcal{S}=\left(\bigcup_{r\in[KZ_b]}\mathcal{S}_{*,r}\right)\bigcup\left(\bigcup_{s\in\mathcal{S}_b}\bigcup_{t\in[g_s]}\mathcal{S}_{s,t}\right)$.
\end{theorem}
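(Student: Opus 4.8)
The plan is to check the three PDA axioms \ref{cond:equalZ}--\ref{cond:blackburn} for $P$ directly, using a double-index notation that separates the base position from the within-block position. Write $P_{(j,i),(k,c)}$ for the entry in block-row $j\in[f]$, within-block row $i\in[n]$, block-column $k\in[K]$, within-block column $c\in[m]$; by construction this equals the $(i,c)$-entry of the $n\times m$ PDA substituted for the cell $(j,k)$ of $P_b$. Since each of the $f\times K$ cells of $P_b$ is blown up to an $n\times m$ block, $P$ has size $fn\times Km$, matching a $(Km,fn,\cdot,\cdot)$ PDA. For \ref{cond:equalZ}, fix a column $(k,c)$: in column $k$ of $P_b$ the $Z_b$ starred cells become copies of $P_*$ (each with $Z_*$ stars in column $c$) and the $f-Z_b$ integer cells become PDAs $P_{s,t}$ (each with $Z_c$ stars in column $c$), for a total of $Z_bZ_*+(f-Z_b)Z_c$ stars, independent of $(k,c)$. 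Condition \ref{cond:everysOnce} is immediate, as every integer of $\mathcal{S}$ already occurs inside the single constituent PDA ($P_*$-copy or $P_{s,t}$) from which it originates.

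The substance is \ref{cond:blackburn}. Suppose an integer $\sigma$ occupies two distinct cells, in rows $(j_1,i_1)$, $(j_2,i_2)$ and columns $(k_1,c_1)$, $(k_2,c_2)$; I must show the mirrored entries $P_{(j_1,i_1),(k_2,c_2)}$ and $P_{(j_2,i_2),(k_1,c_1)}$ are $\x$. The disjointness hypotheses pin $\sigma$ to a single source family, splitting the proof into cases. If $\sigma$ belongs to some $\mathcal{S}_{*,r}$, then (copies of $P_*$ being mutually integer-disjoint) both occurrences lie in the block $P_{*,r}$, so $j_1=j_2$, $k_1=k_2$, and the Blackburn property of $P_*$ inside that block gives the claim. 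Likewise, if $\sigma\in\mathcal{S}_{s,t}$ and both occurrences lie in the same block $P_{s,t}$, the Blackburn property of $P_{s,t}$ suffices. The only new case is when the two occurrences sit in two blocks replacing two \emph{different} occurrences of the same base integer $s$, i.e.\ in distinct cells $(j_1,k_1)$ and $(j_2,k_2)$ of $P_b$.

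In this case I would interlock the base-level and constituent-level structures. Because $(j_1,k_1)$ and $(j_2,k_2)$ are distinct cells of $P_b$ holding the same integer $s$, the Blackburn property of $P_b$ forces $P_b[j_1,k_2]=P_b[j_2,k_1]=\x$; hence the two mirror blocks, at base positions $(j_1,k_2)$ and $(j_2,k_1)$, are copies of $P_*$ and carry exactly its $\x$-pattern. Writing $\sigma=(P_{s,t_1})_{i_1,c_1}=(P_{s,t_2})_{i_2,c_2}\ne\x$ with $t_1\ne t_2$, the Blackburn compatibility of $P_{s,t_1},P_{s,t_2}$ w.r.t.\ $P_*$ yields $(P_*)_{i_1,c_2}=(P_*)_{i_2,c_1}=\x$, which are precisely the within-block positions of the two mirrored entries; thus both mirrored entries are $\x$. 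I expect this step to be the main obstacle, since it is exactly where the definition of Blackburn compatibility must be shown to line up the within-block mirror positions against the $\x$-blocks produced by the base PDA. A useful supporting remark is that equal integers of any PDA never share a row or column (else \ref{cond:blackburn} would place an $\x$ in an occupied cell), so $j_1\ne j_2$ and $k_1\ne k_2$ automatically, confirming that the mirror blocks are genuine off-diagonal $\x$-blocks distinct from the source blocks.
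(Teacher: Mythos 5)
Your proof is correct and follows essentially the same route as the paper's: verify the star count per column, note that every integer survives, and reduce the Blackburn check to the base PDA's Blackburn property combined with Blackburn compatibility of the constituent PDAs w.r.t.\ $P_*$. Your version is in fact slightly more careful than the paper's, since you explicitly separate the case where both occurrences of an integer fall inside a single constituent block (handled by that block's own Blackburn property) from the cross-block case, and you justify $j_1\ne j_2$, $k_1\ne k_2$ so that the mirror positions really land in $P_*$-copies.
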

\begin{proof}
Clearly, $P$ is an $fn\times Km$ array. Each column of $P_b$ has $Z_b$ $\x$'s and $f-Z_b$ integers. So, each column of $P$ has $Z_bZ_*+(f-Z_b)Z_c$ $\x$'s satisfying \ref{cond:equalZ}. 

Since $P_b$ is a PDA, $P$ has all integers in $\mathcal{S}$ occurring at least once, satisfying \ref{cond:everysOnce}.

Finally, we need to verify the Blackburn property for $P$. Let $p_{i,j}$ and $q_{i,j}$ denote the $(i,j)$-th elements of $P_b$ and $P$, respectively. Let $i/n$ denote the quotient when $i$ is divided by $n$, and let $j/m$ be defined similarly. If $q_{i_1,j_1}=q_{i_2,j_2}=s$ in the lifted PDA $P$, we necessarily have that $p_{i_1/n,j_1/m}=p_{i_2/n,j_2/m}$ in the base PDA $P_b$ because different integers are expanded to PDAs containing disjoint sets of integers. So, we have $p_{i_1/n,j_2/m}=p_{i_2/n,j_1/m}=\Asterisk$ by the Blackburn property. Since an $\x$
 is replaced by $P_{0,t}$ and an entry in $P_{0,t}$ is $\x$ iff the corresponding entry in $P_{0}$ is $\x$, $q_{i_1,j_2}=q_{i_2,j_1}=\Asterisk$ in $P$ due to the Blackburn compatibility of the PDAs replacing an integer w.r.t $P_{\x}$. Hence, the Blackburn property (\ref{cond:blackburn}) is satisfied.
\end{proof}
Clearly, basic lifting is a special case of general lifting, where $P_*$ is the all-$\x$ array and $P_{s,t}$ are arbitrary PDAs. However, if a non-trivial $P_*$ is to be used, then we require as many Blackburn-compatible PDAs w.r.t. $P_*$ as the largest integer frequency of the base PDA. We will see general methods to construct Blackburn-compatible PDAs in the next sections.

To obtain regular lifted PDAs, the constituent PDAs and the base PDAs will need to be chosen more carefully. The following corollary of the above general lifting theorem presents a sufficient condition for regular lifting using Blackburn compatibility.
\begin{corollary}[{General regular lifting}]\label{co:regularTiling}
Let $P_b$ be a $(K_b,f_b,Z_b,K_b(f_b-Z_b)/g_b)$ $g_b$-PDA. Let $P_*$ be an $(m,n,Z_*,m(n-Z_*)/g)$ $g$-PDA. Let  $\mathcal{P}=\{P_0, P_1, \ldots,P_{g_b-1}\}$ be a set of 
$n\times m$ PDAs satisfying the following conditions: 
\begin{itemize}
    \item the number of $\x$s in every column of every $P_i$ is equal to $e$,
    \item an integer occurring in any one $P_i$ occurs a total of $g$ times across all $P_i$'s,
    \item $\mathcal{P}$ is Blackburn-compatible w.r.t. $P_*$.
\end{itemize}
For $s\in[K(f-Z_b)/g_b]$, let $\mathcal{P}_s=\{P_{s,0},\ldots,P_{s,g_b-1}\}$ be an integer-disjoint copy of $\mathcal{P}$. Let the lifting of $P_b$ using Theorem \ref{th:generalTiling} with $P_{s,t}$ as constituent PDAs be denoted $L_{\mathcal{P},P_*}(P_b)$.

Then, $L_{\mathcal{P},P_*}(P_b)$ is a $g$-regular $(K_bm,f_bn,Z_bZ_*+(f_b-Z_b)e,\frac{K_bm(f_bn-Z_bZ_*-(f_b-Z_b)e)}{g})$ PDA.
\label{cor:genreg}
\end{corollary}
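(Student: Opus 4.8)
The plan is to obtain this statement as a direct specialization of the general lifting theorem, Theorem~\ref{th:generalTiling}, followed by a short counting argument for the regularity. First I would verify that the hypotheses of Theorem~\ref{th:generalTiling} hold under the present assumptions. Since $P_b$ is $g_b$-regular, every integer $s\in\mathcal{S}_b$ has frequency $g_s=g_b$, and the family $\mathcal{P}=\{P_0,\ldots,P_{g_b-1}\}$ supplies exactly $g_b$ constituent PDAs, one for each occurrence of $s$. Each $P_i$ has $e$ stars per column (so $Z_c=e$ in the notation of Theorem~\ref{th:generalTiling}), and by Lemma~\ref{lm:bcReplicate} the integer-disjoint copy $\mathcal{P}_s=\{P_{s,0},\ldots,P_{s,g_b-1}\}$ is again Blackburn-compatible w.r.t.\ $P_*$. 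Together with the assumed integer-disjointness of $\mathcal{S}_{s,t}$ and $\mathcal{S}_{s',t'}$ across distinct $s\ne s'$, this is precisely what Theorem~\ref{th:generalTiling} requires. Applying it yields that $L_{\mathcal{P},P_*}(P_b)$ is a valid $(K_bm,f_bn,Z_bZ_*+(f_b-Z_b)e,\mathcal{S})$ PDA, which already settles the dimensions, conditions \ref{cond:equalZ}--\ref{cond:blackburn}, and the first three parameters.

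It then remains to establish $g$-regularity and to compute $|\mathcal{S}|$. I would split the integer cells of $L_{\mathcal{P},P_*}(P_b)$ according to whether they arose from replacing an integer or a star of $P_b$. For a fixed $s\in\mathcal{S}_b$, the integers of $\mathcal{P}_s$ are disjoint from those of every $\mathcal{P}_{s'}$ with $s'\ne s$ and from all star-replacement copies $P_{*,r}$; hence any such integer can appear only inside the $g_b$ blocks that replaced the $g_b$ occurrences of $s$, and by the second bullet hypothesis (``an integer occurring in any one $P_i$ occurs a total of $g$ times across all $P_i$'s'') it appears exactly $g$ times in total. For a star of $P_b$, the replacing block is an integer-disjoint copy of the $g$-regular PDA $P_*$, so each of its integers occurs $g$ times within that single block and, by disjointness, nowhere else. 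Thus every integer of $\mathcal{S}$ occurs exactly $g$ times, which proves $g$-regularity.

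Finally, the size of $\mathcal{S}$ follows from the regularity relation $|\mathcal{S}|=K(f-Z)/g$ recorded earlier for $g$-regular PDAs. Substituting $K=K_bm$, $f=f_bn$, and $Z=Z_bZ_*+(f_b-Z_b)e$ gives $|\mathcal{S}|=K_bm\bigl(f_bn-Z_bZ_*-(f_b-Z_b)e\bigr)/g$, matching the claimed parameters. The bulk of this is routine bookkeeping; the one point needing care is that the two origins of integers must realize the \emph{same} frequency $g$. This is exactly where the $g$-regularity of $P_*$ and the ``total $g$ across $\mathcal{P}$'' condition are used in tandem, and where the integer-disjointness across blocks is essential to prevent contributions from distinct blocks from inflating any count.
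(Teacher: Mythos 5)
Your proposal is correct and follows essentially the same route as the paper: invoke Theorem~\ref{th:generalTiling} for validity and the first three parameters, then count occurrences to get $g$-regularity and deduce $|\mathcal{S}|$ from $|\mathcal{S}|=K(f-Z)/g$. You are in fact slightly more careful than the paper's own proof in explicitly checking that the integers introduced by the star-replacement copies of $P_*$ also occur exactly $g$ times (via the $g$-regularity of $P_*$ and integer-disjointness), a point the paper leaves implicit.
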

\begin{proof}
Only the parameters of $L_{\mathcal{P},P_*}(P_b)$ need to be established. Clearly, $L_{\mathcal{P},P_*}(P_b)$ is a $f_bn\times K_bm$ array. An integer in $P_b$ occurs $g_b$ times and the $i$-th occurrence is replaced by $P_i$. So, the number of times an integer occurs in $L_{\mathcal{P},P_*}(P_b)$ is $g$. Every column of $P_b$ contains $Z_b$ $\x$s and $f_b-Z_b$ integers. Since $\x$ is replaced by a copy of $P_*$ and an integer is replaced by a copy of $P_i$, the number of $\x$s per column after lifting is $Z_bZ_*+(f_b-Z_b)e$. 
\end{proof}

A crucial requirement for lifting with $P_*$ not being all-$\x$ is sets of Blackburn-compatible PDAs. Integer-disjoint copying is one simple method for constructing any number of Blackburn-compatible PDAs w.r.t. any $P_*$. For specific choices of $P_*$, other methods of construction could improve upon integer-disjoint copying, and we consider such constructions next.

\section{Constructions of Blackburn-compatible PDAs}\label{sec:blackburn}
We will present some general constructions for Blackburn-compatible PDAs using the following ideas - (1) permutation of integer elements and blocks to ensure mirrored locations are $\x$s, (2) tiling of identity/regular PDAs, (3) recursive methods, and (4) a randomized construction. 

{One strategy to construct Blackburn-compatible PDAs is to use existing PDAs from Section~\ref{sec:notation} for $P_*$ and $P_0$ and obtain the rest of $P_i$'s by transforming $P_0$ using permutations, transpose etc.}
{In Corollary~\ref{cor:genreg}, when $P_*$ is $g$-regular and each $P_i\in\mathcal{P}$ is $g_i$-regular with same set of integers, we have $g=g_bg_i$.
One natural choice for $P_*$ is $I_g(t)$ since it is a linear PDA with high coding gain.}
In most cases, we will consider the choice of $P_*$ as $I_g(t)$.
The following lemma provides a test for Blackburn compatibility of PDAs with respect to $I_g(t)$.
\begin{lemma}[Test for compatibility w.r.t. $I_g(t)$] \label{lm:InCompatible}
Two $g\times g$ PDAs $P_0$ and $P_1$ are Blackburn compatible with $I_g(t)$ for $t$ not appearing in $P_0$ or $P_1$ iff $p^{(0)}_{i_0j_0}{=}p^{(1)}_{i_1j_1}{\ne}\Asterisk$ implies $i_0 \ne j_1$ and $i_1 \ne j_0$. In words, mirrored locations of integers should be off-diagonal.
\end{lemma}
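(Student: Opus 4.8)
The plan is to unwind the definition of Blackburn compatibility for the concrete choice $P_* = I_g(t)$ and to observe that, for this particular $P_*$, the requirement ``the mirror entry is $\Asterisk$'' coincides verbatim with ``the mirror entry is off-diagonal.'' First I would record the defining feature of $I_g(t)$: by its definition its $(i,j)$-entry equals $t$ when $i=j$ and equals $\Asterisk$ when $i\neq j$, so that $p^{(\Asterisk)}_{ij}=\Asterisk$ holds if and only if $i\neq j$. The hypothesis that $t$ does not appear in $P_0$ or $P_1$ guarantees that the only non-$\Asterisk$ cells of $I_g(t)$ carry the label $t$, which is then disjoint from the integer alphabets of $P_0$ and $P_1$; this is exactly the disjointness that lets $I_g(t)$ play the role of the ``$P_*$'' in a lifting, so that the Blackburn-compatibility definition applies in the intended setting.

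Next I would instantiate the definition directly. By definition, $P_0$ and $P_1$ are Blackburn compatible w.r.t. $I_g(t)$ precisely when, for every coincidence $p^{(0)}_{i_0j_0}=p^{(1)}_{i_1j_1}=s\neq\Asterisk$, both mirrored cells satisfy $p^{(\Asterisk)}_{i_0j_1}=\Asterisk$ and $p^{(\Asterisk)}_{i_1j_0}=\Asterisk$. Substituting the off-diagonal characterization of $I_g(t)$ recorded above, these two equalities are equivalent to $i_0\neq j_1$ and $i_1\neq j_0$, respectively. Reading this equivalence in both directions yields the claim: the forward direction simply reads off the two inequalities from compatibility, while the converse observes that $i_0\neq j_1$ and $i_1\neq j_0$ force the two mirror cells of $I_g(t)$ to lie off its diagonal, hence to equal $\Asterisk$, which is exactly the compatibility requirement.

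I do not expect a genuine obstacle here, since the proof is a transcription of definitions; the only point demanding care is the index bookkeeping, namely that a shared integer located at $(i_0,j_0)$ in $P_0$ and at $(i_1,j_1)$ in $P_1$ constrains the cells $(i_0,j_1)$ and $(i_1,j_0)$ of $P_*$, and \emph{not} the cells $(i_0,j_0)$ or $(i_1,j_1)$; the converse moreover uses the definition of $I_g(t)$ in the form ``off-diagonal $\Rightarrow\Asterisk$'' rather than its contrapositive. I would also remark that the cell-pattern equivalence itself does not logically require $t\notin P_0,P_1$ (the cell pattern of $I_g(t)$ is fixed by its definition), so this hypothesis is best understood as ensuring the disjointness needed for $I_g(t)$ to be a legitimate $P_*$ in a lifting. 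Finally, since Blackburn compatibility of $P_0,\ldots,P_{g-1}$ is defined pairwise, the same verification applies unchanged to any number of constituent PDAs.
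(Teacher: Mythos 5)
Your proposal is correct and follows essentially the same route as the paper's proof: both simply observe that the $(i,j)$-cell of $I_g(t)$ is $\Asterisk$ exactly when $i\neq j$, so the mirror conditions of Blackburn compatibility translate verbatim into the off-diagonal inequalities $i_0\neq j_1$ and $i_1\neq j_0$ (the paper phrases the forward direction as a contradiction argument, but the content is identical). Your side remark that the hypothesis $t\notin P_0,P_1$ is about integer-disjointness for lifting rather than about the cell pattern is accurate and harmless.
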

\begin{proof}
If $p^{(0)}_{i_0j_0}{=}p^{(1)}_{i_1j_1}{\ne}\Asterisk$ implies $i_0 \ne j_1$ and $i_1 \ne j_0$, then for no $i\in [g]$ we need $p_{i,i}^{(0)}=\x$. Hence $P_0$ and $P_1$ are Blackburn compatible w.r.t. any $g\times g$ PDA which has integers only in its diagonal. Now, let $P_0$ and $P_1$ are Blackburn compatible w.r.t. $I_g(t)$. Assume that $\exists s\in \mathcal{S}_0\cap \mathcal{S}_1$ such that $p^{(0)}_{i_0j_0}{=}p^{(1)}_{i_1j_1}=s$ and $i_0 = j_1$. This implies that cell $(i_0,i_0)$ of $I_g(t)$ is $\x$, But this is a contradiction. Hence $i_0 \ne j_1$. Similarly we can prove that $i_1 \ne j_0$.
\end{proof}

\subsection{Permutation constructions} 
{Now we introduce two permutation operations to obtain $P_i$'s for $i>0$ from $P_0$ when $P_0$ is either a $1$-PDA or a $2$-PDA as defined in Section~\ref{sec:notation}.
These $P_i$'s will be Blackburn compatible w.r.t. $I_g(t)$.}
The first permutation construction uses \emph{cyclic rotation of diagonal or anti-diagonal elements}, for which, we need the following notation. Given an $n\times n$ PDA $P=[p_{ij}]$, where $i$ and $j$ take values from $0$ to $n-1$, a PDA $\pi_{D,1}(P)$ is defined as
$$\pi_{D,1}(P)=\begin{pmatrix}
p_{n-1,n-1}&p_{0,1}&\cdots&p_{0,n-1}\\
p_{1,0}&p_{0,0}&\cdots&p_{1,n-1}\\
\vdots&\vdots&\ddots&\vdots\\
p_{n-1,0}&\cdots&\cdots&p_{n-2,n-2}
\end{pmatrix}.$$
Basically, $\pi_{D,1}(P)$ is identical to $P$ except for the diagonal entries, which are circularly shifted down by one position. For an integer $l$, $\pi^l_{D,1}(P)$ denotes the PDA obtained by $l$ applications of $\pi_{D,1}$ on $P$. For negative $l$, the diagonal entries are shifted up $l$ times.  
A similar cyclic rotation of anti-diagonal elements in $P$ is denoted by $\pi_{AD,1}(P)$, where $(i,j)$ on the anti-diagonal goes to $(i-1,j+1)\mod n$ and all other locations are retained. 

Given a $2n\times 2n$ PDA $P=\begin{pmatrix}P_{11}&P_{12}\\P_{21}&P_{22}\end{pmatrix}$, where $P_{ij}$ are $n\times n$ blocks, a PDA $\pi_{D,2}(P)$ is defined as
$$\pi_{D,2}(P)=\begin{pmatrix}
\pi_{D,1}(P_{11})&P_{12}\\
P_{21}&\pi^{-1}_{D,1}(P_{22})
\end{pmatrix}.$$
$\pi_{D,2}(P)$ is identical to $P$ except for the $2n$ diagonal entries - the first $n$ are circularly shifted down by one position, and the second $n$ are circularly shifted up by 1 position. A similar cyclic rotation of anti-diagonal elements in $P$ is denoted by $\pi_{AD,2}(P)$.

\begin{lemma}[Cyclic rotation]\label{lm:bc12pdas}
\begin{enumerate}
    \item (Construction C1) 
    \begin{enumerate}
        \item Given a 1-PDA $P$, $\{P,\pi_{D,1}(P)\}$ is Blackburn-compatible w.r.t. $I_g(t)$, and $\{P,\pi_{AD,1}(P)\}$ w.r.t. $\tilde{I}_g(t)$. 
        \item Letting $P=J_g([g^2])$, $\mathcal{P}_D=\{P,\pi_{D,1}(P),\ldots,\pi_{D,1}^{g-1}(P)\}$ is a set of $g$ $1$-PDAs Blackburn compatible w.r.t $I_g(t)$, and $\mathcal{P}_{AD}=\{P,\pi_{AD,1}(P),\ldots,\pi_{AD,1}^{g-1}(P)\}$ w.r.t. $\tilde{I}_g(t)$. 
        \item $L_{\mathcal{P}_D,I_g}(I_g)$ and $L_{\mathcal{P}_{AD},\tilde{I}_g}(I_g)$ are $g$-regular $(g^2,g^2,g^2-2g+1,g(2g-1))$ PDAs.
    \end{enumerate}
    \item (Construction C2) 
    \begin{enumerate}
        \item Letting $Q_0=G_{2g}([g(2g-1)])$ and $Q_1=H_{2g}([g(2g-1)])$, $\{Q_0,\pi_{D,2}(Q_0)\}$ is Blackburn-compatible w.r.t. $I_{2g}(t)$, and $\{Q_1,\pi_{AD,2}(Q)\}$ w.r.t. $\tilde{I}_{2g}(t)$. 
        \item Then $\mathcal{Q}_D=\{Q_0,\pi_{D,2}(Q_0),\ldots,\pi_{D,2}^{g-1}(Q_0)\}$ is a set of $g$ 2-PDAs Blackburn compatible w.r.t $I_{2g}(t)$, and $\mathcal{Q}_{AD}=\{Q_1,\pi_{AD,2}(Q_1),\ldots,\pi_{AD,2}^{g-1}(Q_1)\}$ w.r.t. $\tilde{I}_{2g}(t)$. 
        \item $L_{\mathcal{Q}_D,I_{2g}}(I_g)$ and $L_{\mathcal{Q}_{AD},\tilde{I}_{2g}}(I_g)$ are $2g$-regular $(2g^2,2g^2,2g^2-3g+2,g(3g-2))$ PDAs.
    \end{enumerate}
\end{enumerate}
The integer $t$ is chosen to be disjoint from the integers in $P$, $Q_0$ or $Q_1$.
\end{lemma}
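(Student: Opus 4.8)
The plan is to reduce every compatibility claim to the positional test of Lemma~\ref{lm:InCompatible}: a pair of $g\times g$ PDAs is Blackburn-compatible w.r.t.\ $I_g(t)$ (with $t$ fresh) exactly when, for every commonly occurring integer, the two cells hosting it have their mirrored row/column indices off the main diagonal; the analogous statement for $\tilde I_g(t)$ replaces ``main diagonal'' by ``anti-diagonal''. The single structural fact driving all four operators is that $\pi_{D,1},\pi_{AD,1},\pi_{D,2},\pi_{AD,2}$ leave every off-(anti)diagonal cell untouched and merely permute the (anti)diagonal cells cyclically. Consequently I will split each common integer into the case where an occurrence sits off the relevant diagonal (where the operator fixes it) and the case where it sits on it (where it is shifted), and verify the index test in each.

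For Construction~C1(a), fix a $1$-PDA $P$ and its unique cell carrying a common integer $s$. If that cell is off-diagonal it is fixed by $\pi_{D,1}$, so the two mirrored cells coincide with an off-diagonal cell and the test $i_0\neq j_1$, $i_1\neq j_0$ holds trivially; if it is the diagonal cell $(k,k)$, then $\pi_{D,1}$ sends it to $(k{+}1,k{+}1)\bmod g$, and since $g\ge 2$ the two diagonal indices differ, again passing the test. Part~(b) is the same computation applied to $\pi_{D,1}^{a}(P)$ and $\pi_{D,1}^{b}(P)$ for $0\le a<b\le g{-}1$: off-diagonal cells are common and fixed, while a diagonal entry of $J_g$ lands on indices differing by $b{-}a\not\equiv 0\pmod g$, so every pair is compatible; moreover each $\pi_{D,1}^{i}(J_g)$ remains a ($\x$-free) $1$-PDA because permuting entries cannot create a repeated integer. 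The $\pi_{AD,1}/\tilde I_g$ statements are identical after reflecting through the anti-diagonal.

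The substantive case is Construction~C2, and the main obstacle is controlling the block-wise shifts of $\pi_{D,2}$ on $Q_0=G_{2g}$. First I will record that $G_{2g}$ is persymmetric (Lemma~\ref{lm:diagPDA}), so every integer occurs twice and its two occurrences are either both on the main diagonal, necessarily as a centre-symmetric pair $(c,c),(2g{-}1{-}c,2g{-}1{-}c)$ with one cell in each $g\times g$ half, or both off it; the anti-diagonal $\x$'s supply the Blackburn condition for the diagonal pairs. Because $\pi_{D,2}$ shifts the top half down and the bottom half up, it sends such a pair to a new centre-symmetric pair whose two diagonal indices again sum to $2g{-}1$; this is the key point, and it simultaneously shows that each $\pi_{D,2}^{j}(Q_0)$ is again a valid $2$-PDA with one $\x$ per column (the anti-diagonal $\x$'s are untouched). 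Off-diagonal integers are fixed, hence compatible exactly as in C1; for a diagonal integer all mirrored cells are diagonal, so the test collapses to requiring that the unordered index pair of its occurrences in $Q_0$ be disjoint from that in $\pi_{D,2}(Q_0)$, which holds because the cyclic shift is fixed-point-free within each $g$-element half and the two halves never exchange indices. Part~(b) replaces the shift $1$ by $a\neq b$, where the same fixed-point-free argument modulo $g$ applies in each half; the $\pi_{AD,2}/H_{2g}/\tilde I_{2g}$ statements follow by the main-diagonal$\leftrightarrow$anti-diagonal reflection.

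Finally, parts~(c) are parameter bookkeeping via Corollary~\ref{cor:genreg}. I take the base PDA to be $I_g$ (a single integer repeated $g$ times, so $g_b=g$, $K_b=f_b=g$, $Z_b=g{-}1$), the constituent set $\mathcal P=\mathcal P_D$ (resp.\ $\mathcal Q_D$) just shown to be Blackburn-compatible w.r.t.\ $P_*=I_g$ with $Z_*=g{-}1$ (resp.\ $I_{2g}$ with $Z_*=2g{-}1$), and verify the corollary's hypotheses: every constituent has the same per-column $\x$ count $e=0$ (resp.\ $e=1$), and each constituent integer recurs $g$ (resp.\ $2g$) times across the family because all copies share one integer set. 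Substituting $Z=Z_bZ_*+(f_b-Z_b)e$ and counting the integers contributed by $\mathcal P$ together with the $K_bZ_b=g(g{-}1)$ fresh integers arising from the $\x$-replacements then yields the stated $(g^2,g^2,g^2{-}2g{+}1,g(2g{-}1))$ and $(2g^2,2g^2,2g^2{-}3g{+}2,g(3g{-}2))$ parameters, completing the proof.
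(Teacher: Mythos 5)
Your proof is correct and follows essentially the same route as the paper: reduce each compatibility claim to the off-diagonal/diagonal positional test of Lemma~\ref{lm:InCompatible} (off-diagonal cells are fixed by the rotation, diagonal cells are shifted to distinct indices), then obtain the parameters in parts~(c) from Corollary~\ref{cor:genreg}. The only difference is one of completeness in your favour: the paper proves only 1(a) in detail and dismisses C2 as ``a minor modification,'' whereas you correctly supply the missing structural facts for $G_{2g}$ (the anti-diagonal symmetry forcing each integer's two occurrences to be both on or both off the main diagonal, and the fixed-point-free half-shifts of $\pi_{D,2}$ preserving the centre-symmetric diagonal pairs).
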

\begin{proof}
We prove Part 1(a) as follows. Suppose $[P]_{ij}=[\pi_{D,1}(P)]_{ij}\ne\x$ for $i\ne j$ (off-diagonal). Then, the corresponding mirrored $(i,j)$-th entry of $I_g(t)$ for $i\ne j$ is a $\x$. Suppose $[P]_{ii}=[\pi_{D,1}(P)]_{jj}\ne\x$ (diagonal). Then, $i\ne j$ because of the rotation, and the mirrored $(i,j)$-th entry of $I_g(t)$ for $i\ne j$ is a $\x$. The claim for $P$ and $\pi_{AD,1}(P)$ can be proved in a similar fashion. Part 1(b) uses Part 1(a) with $P=J_g([g^2])$. Part 1(c) can be verified by a straight-forward calculation.

For Construction C2, the above proof can be modified in a minor way and we skip the details.
\end{proof}
\begin{example}
For $g=3$, a set of 3 PDAs $\mathcal{P}_3=\{P_1,P_2,P_3\}$ that are pairwise Blackburn-compatible w.r.t. $P_{\x}=I_3(t)$ constructed using C1 is shown below.
\begin{align}
\label{eq:1gP3}
\begin{split}
    P_{\x} = I_3(t) &= \left(
        \begin{array}{ccc}
            t & \x & \x \\
            \x & t & \x \\
            \x & \x & t
        \end{array}
        \right),\,
    P_0 = J_3([9]) = \left(
        \begin{array}{ccc}
            0 & 1 & 2 \\
            3 & 4 & 5 \\
            6 & 7 & 8
        \end{array}
        \right),\,\\
    P_1 &= \left(
        \begin{array}{ccc}
            8 & 1 & 2 \\
            3 & 0 & 5 \\
            6 & 7 & 4
        \end{array}
        \right),\,
    P_2 = \left(
        \begin{array}{ccc}
            4 & 1 & 2 \\
            3 & 8 & 5 \\
            6 & 7 & 0
        \end{array}
        \right).
\end{split}
\end{align}
Integer-disjoint copying would have resulted in 27 integers across $P_0$, $P_1$ and $P_2$, while the above construction uses 9. The lifted PDA $L_{\mathcal{P}_3,I_3}(I_3)$ is a 3-regular $(9,9,4,15)$ PDA.
\end{example}
\begin{example}
For $g=2$, a set of 2 PDAs $\mathcal{Q}_2=\{Q_0,Q_1\}$ that are Blackburn compatible w.r.t. $I_4(t)$ constructed using C2 is shown below.
\begin{align}
    Q_{\x}=I_4(t) = \left(
        \begin{array}{cccc}
            t & \x & \x & \x \\
            \x & t & \x & \x \\
            \x & \x & t & \x \\
            \x & \x & \x & t
        \end{array}
        \right),\,
    Q_0 = \left(
        \begin{array}{cccc}
            0 & 1 & 2 & \x \\
            3 & 4 & \x & 2 \\
            5 & \x & 4 & 1 \\
            \x & 5 & 3 & 0
        \end{array}
        \right),\,
    Q_1 = \left(
        \begin{array}{cccc}
            4 & 1 & 2 & \x \\
            3 & 0 & \x & 2 \\
            5 & \x & 0 & 1 \\
            \x & 5 & 3 & 4
        \end{array}
        \right).
        \label{eq:2gP4}
\end{align}
The lifted PDA $L_{\mathcal{Q}_2,I_4}(I_2)$ is a 4-regular $(8,8,4,8)$ PDA.
\end{example}

Next, we provide two transpose constructions for pairs of 1-PDAs Blackburn-compatible w.r.t. a $P_*$ with $\x$s on one or more diagonals.
\begin{lemma}[Transpose construction]\label{lm:bcTranspose}
For a $g\times g$ PDA $P$, the $i$-th diagonal of $P$ is the set of locations $\{(x,y): y=x+i\text{ mod }g\}$, $i\in[g]$. The $0$-th diagonal is the main diagonal. 
\begin{enumerate}[nosep]
    \item (Construction T1) A $g\times g$ 1-PDA $P$ and its transpose $P^T$ are Blackburn-compatible w.r.t. a $g\times g$ PDA $P_*$ with $\x$ on the diagonal cells.
    \item (Construction T2) For integers $g$ and $i$, a $g\times g$ 1-PDA $P$ and a cyclic permutation of its transpose $\pi^i_{AD,1}(P^T)$ are Blackburn-compatible w.r.t. a $g\times g$ PDA $P_*$ with $\x$ on the main and $i$-th diagonal cells.
\end{enumerate}
The integers in $P_*$ are assumed to be disjoint from those in $P$.
\end{lemma}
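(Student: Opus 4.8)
The plan is to argue directly from the definition of Blackburn compatibility. Because $P_*$ is only required to carry $\x$ on the stated diagonal cells, the binding case is the one in which $P_*$ holds integers in every cell off those diagonals; hence it suffices to show that for each integer $s$ shared between $P_0=P$ and its transformed copy $P_1$, the two mirrored cells are forced onto cells where $P_*$ must be $\x$ --- onto the main diagonal for T1, and onto the main or the $i$-th diagonal for T2. The single engine behind both parts is the $1$-regularity of $P$: every integer occurs exactly once in $P$, and since transposition and anti-diagonal rotation only reposition entries, every integer occurs exactly once in $P_1$ as well. This uniqueness reduces any collision to one equation between two cells of $P$, which I can then solve for the mirrored coordinates.

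For Construction T1 I would take $P_1=P^T$ and suppose $p^{(0)}_{i_1 j_1}=p^{(1)}_{i_2 j_2}=s\ne\x$. Writing $p^{(1)}_{i_2 j_2}=[P^T]_{i_2 j_2}=P_{j_2 i_2}$ turns the collision into $P_{i_1 j_1}=P_{j_2 i_2}=s$, and $1$-regularity forces these to be the same cell, so $i_1=j_2$ and $j_1=i_2$. The mirrored cells $(i_1,j_2)$ and $(i_2,j_1)$ are then $(i_1,i_1)$ and $(i_2,i_2)$, both on the main diagonal, where $P_*$ is $\x$ by hypothesis. This disposes of T1 essentially at once.

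For Construction T2 I would set $P_1=\pi^{i}_{AD,1}(P^T)$ and split according to whether the cell $(i_2,j_2)$ of $P_1$ holding $s$ lies on the anti-diagonal that the rotation moves. Off that anti-diagonal the rotation is the identity, so $[P_1]_{i_2 j_2}=P_{j_2 i_2}$ and the T1 computation applies verbatim, placing both mirrored cells on the main diagonal. On the anti-diagonal I would invert the $i$-fold rotation to obtain $[P_1]_{i_2 j_2}=P_{j_2-i,\,i_2+i}$ (indices modulo $g$); $1$-regularity then gives $i_1=j_2-i$ and $j_1=i_2+i$, so the mirrored cells $(i_1,j_2)$ and $(i_2,j_1)$ satisfy $j_2-i_1=i$ and $j_1-i_2=i$, i.e.\ both sit on the $i$-th diagonal. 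Since $P_*$ carries $\x$ on both the main and the $i$-th diagonals, compatibility holds in either branch.

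The only genuinely delicate step, and the step I expect to be the main obstacle, is the modular index bookkeeping in the anti-diagonal branch of T2: correctly inverting $\pi^{i}_{AD,1}$, handling the wrap-around of the rotated anti-diagonal modulo $g$, and confirming that both mirrored cells land exactly on the $i$-th diagonal and not on some neighbouring shifted diagonal. I would also remark that the disjointness of the integers of $P_*$ from those of $P$, assumed in the statement, is not needed for the mirror-location argument itself; it serves to prevent spurious integer collisions once these compatible pairs are substituted into the general lifting of Theorem~\ref{th:generalTiling}.
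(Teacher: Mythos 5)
Your proof is correct and follows essentially the same route as the paper's: the same transpose identity forcing the mirrored cells onto the main diagonal for T1, and the same split on whether the colliding cell lies on the anti-diagonal for T2, with matching index arithmetic ($i_2=j_1-i$, $j_2=i_1+i$ modulo $g$). Your explicit appeal to $1$-regularity to justify that the collision must occur at the transposed (or rotated-transposed) cell is a point the paper's proof leaves implicit ("by construction"), so if anything your write-up is slightly more complete.
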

\begin{proof}
For proving T1, let $P_1=P$ and $P_2=P_1^T$. Since $p^{(1)}_{i_1j_1}{=}p^{(2)}_{i_2j_2}{\ne}\Asterisk$ implies $i_1 = j_2$ and $i_2 = j_1$ by construction and all diagonal entries in $P_*$ are $\x$, Blackburn compatibility is established.

For proving T2, let $P_1=P$ and $P_2=\pi^i_{AD,1}(P_1^T)$. If $p^{(1)}_{i_1j_1}{=}p^{(2)}_{i_2j_2}{\ne}\Asterisk$, there are two cases to consider. 
\begin{enumerate}
    \item $(i_1,j_1)$ is not on the anti-diagonal, which implies $i_1=j_2$ and $i_2=j_1$. So, the mirrored locations $(i_1,j_2)$ and $(i_2,j_1)$ are main diagonal entries of $P_*$, which are $\x$.
    \item $(i_1,j_1)$ is on the anti-diagonal, which implies $(i_2,j_2)$ is also on the anti-diagonal - transposed and cyclically rotated $i$ times. So, $i_2=j_1-i, j_2=i_1+i$ ($\text{mod } g$). So, the mirrored locations $(i_1,i_1+i\text{ mod }g)$ and $(j_1-i\text{ mod }g,j_1)$ are $i$-th diagonal entries of $P_*$, which are $\x$.
\end{enumerate}
This establishes Blackburn compatibility of $P$ and $\pi^i_{AD,1}(P^T)$ w.r.t. $P_*$ with $\x$ on the main and $i$-th diagonals.
\end{proof}
An example of the above lemma (T1) for $n=3$ is shown below.
\begin{align}
    P_* = \left(
        \begin{array}{ccc}
            \x & t_0 & t_1 \\
            t_0 & \x & t_2 \\
            t_1 & t_2 & \x 
        \end{array}
        \right),\,
    P_0 = \left(
        \begin{array}{ccc}
            0 & 1 & 2  \\
            3 & 4 & 5 \\
            6 & 7 & 8
        \end{array}
        \right),\,
    P_1 = \left(
        \begin{array}{ccc}
            0 & 3 & 6  \\
            1 & 4 & 7 \\
            2 & 5 & 8
        \end{array}
        \right)
        \label{eq:2P3}\\
L_{\mathcal{P},P_*}(I_2) = \begin{array}{c}
     (6,6,1,15) \textrm{ PDA} \\
\begin{sysmatrix}{cccccc}
             6 &  7 &  8 & \x & 0 & 1\\
             9 & 10 & 11 & 0 & \x & 2\\
            12 & 13 & 14 & 1 & 2 & \x\\
            \x & 3 & 4   & 6 & 9 & 12\\
            3 & \x & 5   & 7 & 10 & 13\\
            4 & 5 & \x   & 8 & 11 & 14 
\end{sysmatrix}
\end{array} 
\end{align}

{So far, we have seen constructions where we transform the elements of a PDA to achieve Blackburn compatibility. We will show that transformations can be applied at a block level to achieve the same.}
We now consider the case where $P_*$ is a tiling or block-concatenation of multiple PDAs. Specifically, for integers $g$ and $d$ with $d\mid g$, $P_*^{(g,d)}=B_{\tilde{I}_{2d}}(J_{g/d}([(g/d)^2]))$ (regular basic lifting from Corollary \ref{cor:basicreg}) is a $2g\times 2g$ PDA, which is a $g/d\times g/d$ block-concatenation of $\tilde{I}_{2d}(t)$, $t=1,\ldots,(g/d)^2$. In this case, we construct PDAs Blackburn-compatible w.r.t. $P_*^{(g,d)}$ block-wise with a combination of suitable transpositions and cyclic permutations.
\begin{lemma}[Construction BW1]\label{lm:bw1}
For positive integers $g$ and $d$ such that $d\mid g$, there exist 
a set $\mathcal{P}$ of $d$ $(2g,2g,1,g(2g-1))$ PDAs that are Blackburn compatible w.r.t $P_*^{(g,d)}=B_{\tilde{I}_{2d}}(J_{g/d}([(g/d)^2]))$ 
such that $L_{\mathcal{P},P_*^{(g,d)}}(I_{d})$ is a $2d$-regular $(2gd,2gd,2gd-3g+g/d+1,g(3g-g/d-1))$ PDA.
\end{lemma}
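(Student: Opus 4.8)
The plan is to reduce the statement to the general regular lifting result, Corollary~\ref{cor:genreg}, so that the parameters of $L_{\mathcal{P},P_*^{(g,d)}}(I_d)$ follow by substitution, leaving the real work in the explicit construction of the $d$ constituent PDAs and the verification of their Blackburn compatibility. First I would record the ingredients for Corollary~\ref{cor:genreg}: the base PDA is $I_d(t)$, which is a $d$-regular $(d,d,d-1,1)$ PDA (its single integer occupies the $d$ diagonal cells), so $g_b=d$; and the filler $P_*^{(g,d)}$ is, by Corollary~\ref{cor:basicreg} applied to the $1$-regular $J_{g/d}$ and the $2d$-regular $\tilde I_{2d}(t)$, a $2d$-regular $(2g,2g,Z_*,\cdot)$ PDA with $Z_*=(g/d)(2d-1)=2g-g/d$. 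The constituents I must produce are $d$ arrays $P_0,\dots,P_{d-1}$, each a $(2g,2g,1,g(2g-1))$ PDA (hence $e=1$ and each $2$-regular) sharing one common integer set of size $g(2g-1)$, so that every integer occurs $2d=g$ times in total across the family. Granting the construction, substituting $K_b=f_b=d$, $Z_b=d-1$, $m=n=2g$, $e=1$, $Z_*=2g-g/d$ and coding gain $g=2d$ into Corollary~\ref{cor:genreg} gives $Z_bZ_*+(f_b-Z_b)e=(d-1)(2g-g/d)+1=2gd-3g+g/d+1$ and $|\mathcal{S}|=\tfrac{2gd(2gd-Z)}{2d}=g(3g-g/d-1)$, which are exactly the claimed parameters.

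For the construction I would generalize Construction~C2 of Lemma~\ref{lm:bc12pdas} (which is precisely the special case $d=g$, where $P_*^{(g,g)}=\tilde I_{2g}(t)$) together with the transpose idea of Lemma~\ref{lm:bcTranspose}. The key observation is a clean description of the integer cells of $P_*^{(g,d)}$: writing a position of the $2g\times 2g$ array as $(x,y)$ with $x,y\in[2g]$, a cell is a non-$\x$ cell of $P_*^{(g,d)}$ exactly when $(x\bmod 2d)+(y\bmod 2d)=2d-1$, i.e. $x+y\equiv-1\pmod{2d}$, and every other cell is $\x$. I would take as $P_0$ a dense $2$-regular $(2g,2g,1,g(2g-1))$ PDA from Lemma~\ref{lm:diagPDA}, organised into $(g/d)\times(g/d)$ blocks of size $2d$ with its off-block-diagonal blocks related by transposition, and then obtain $P_1,\dots,P_{d-1}$ by cyclically rotating the entries sitting on the block anti-diagonals (the cells with $x+y\equiv-1\pmod{2d}$), leaving every other entry unchanged; this is a block-level analogue of the operator $\pi_{AD,2}$ used in C2.

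The compatibility verification then splits according to the integer-cell test for $P_*^{(g,d)}$: suppose a shared integer $s$ sits at $(a,b)$ in $P_i$ and at $(c,e)$ in $P_j$ with $i\ne j$; I must show $a+e\not\equiv-1$ and $c+b\not\equiv-1\pmod{2d}$. If $s$ lies off every block anti-diagonal, it occupies the same cell in all members of the family, so its mirror images are either the cell $(a,b)$ itself (off the block anti-diagonals by assumption) or main-diagonal cells $(a,a)$, which are automatically $\x$ in $P_*^{(g,d)}$ because $2a\equiv-1\pmod{2d}$ is impossible for the even modulus $2d$; this parity fact is exactly what makes the even block size essential. If instead $s$ lies on a block anti-diagonal, the cyclic rotation moves it to distinct cells in $P_i$ and $P_j$, so that the row residues differ mod $2d$, and, as in the proof of C2, this displacement forces both mirror images off the block anti-diagonals. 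Combined with the fact that each $P_i$ is individually a valid PDA, this gives pairwise Blackburn compatibility w.r.t. $P_*^{(g,d)}$.

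I expect the main obstacle to be the construction step rather than the verification: one must choose the block arrangement and the rotation so that \emph{every} $P_i$ remains a bona fide $2$-regular PDA (single $\x$ per column, Blackburn property intact) after the block-anti-diagonal entries are permuted, while simultaneously keeping the common integer set fixed so that each integer appears exactly $2d$ times across the family, and guaranteeing that for all pairs $i\ne j$ no shared anti-diagonal integer returns to a mirror-conflicting cell. Reconciling the $(g/d)\times(g/d)$ block geometry with the symmetric dense $2$-PDA structure and the off-block-diagonal transposes is the delicate bookkeeping; once it is in place, the parameter count via Corollary~\ref{cor:genreg} is immediate.
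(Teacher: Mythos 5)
Your overall strategy coincides with the paper's: take $P_0=H_{2g}([g(2g-1)])$, observe that the support of $P_*^{(g,d)}$ is exactly the set of cells with $x+y\equiv-1\pmod{2d}$, leave every entry off that set fixed, and cyclically rotate the entries on that set to produce $P_1,\dots,P_{d-1}$, with transposition relating the upper and lower off-diagonal blocks. Your reduction of the parameter count to Corollary~\ref{cor:genreg} is correct (modulo the notational clash ``$2d=g$'': you mean each integer occurs $2d$ times across the family, matching the coding gain $2d$ of $P_*^{(g,d)}$), and your parity argument --- the diagonal mirror cells $(a,a)$, $(b,b)$ have even coordinate sum and are therefore automatically $\x$ in $P_*^{(g,d)}$ --- is exactly the observation that lets the off-support entries stay put. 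Your three verification cases correspond to the paper's case analysis via Constructions C1, C2, T1 and T2.

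The gap is that the construction itself --- the object whose existence the lemma asserts --- is never pinned down, and you say as much. ``Cyclically rotating the entries sitting on the block anti-diagonals'' is not yet a definition: you must exhibit, for each $i$, a permutation such that (a) each $P_i$ remains a valid $2$-regular PDA with one $\x$ per column (in particular the two symmetric occurrences of an anti-diagonal integer must move consistently so the internal Blackburn property survives), and (b) for every pair $i\ne i'$ every shared anti-diagonal integer is displaced so that both mirrors leave the support. The paper supplies this explicitly: $P_i^{(j,j)}=\pi_{AD,2}^i(P_0^{(j,j)})$ on diagonal blocks, $P_i^{(j,k)}=\pi_{AD,1}^{2i}(P_0^{(j,k)})$ for $j>k$, and $P_i^{(j,k)}=(P_i^{(k,j)})^T$ for $j<k$; the step size $2i$ and the split rotation $\pi_{AD,2}$ are precisely what make (a) and (b) hold simultaneously, and each block pair then reduces to one of C1, C2, T1, T2. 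This is the ``delicate bookkeeping'' you defer, and it is the substantive content of the lemma. One remark in your favour: your block geometry ($(g/d)\times(g/d)$ blocks of side $2d$, matching the support of $P_*^{(g,d)}$) is the one actually forced by the compatibility condition for general $d\mid g$, whereas the paper's text partitions $P_0$ into $d\times d$ blocks of side $2g/d$, which agrees with the support only when $g=d^2$ (as in its worked example $g=4$, $d=2$); so if you do carry out the bookkeeping, your framing is the right one to use.
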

\begin{proof}
Consider the $2g\times 2g$ $2$-regular PDA $P_0=H_{2g}([g(2g-1)])$ with $\x$s along the diagonal. This PDA is partitioned into $2g/d\times 2g/d$ blocks $P_0^{(j,k)}$, $j,k\in[d]$ such that the diagonal blocks ($j=k$) are $2$-PDAs and others ($j\ne k$) are $1$-PDAs. Note that $P_0^{(j,k)}=(P_0^{(k,j)})^T$ when $j\ne k$. 

Construct $P_i$, $i=1,\ldots,d-1$, as a block-wise concatenation of $P_i^{(j,k)}$ defined as follows.
\begin{align*}
    P_i^{(j,k)}=\begin{cases}
    \pi_{AD,2}^i(P_0^{(j,j)}), & j=k,\\
    \pi_{AD,1}^{2i}(P_0^{(j,k)}), & j>k,\\
    (P_i^{(k,j)})^T, & j<k.
    \end{cases}
\end{align*}
We claim that $\mathcal{P}=\{P_0,\ldots,P_{d-1}\}$ is the desired set. Within $P_i$, Blackburn compatibility between $P_i^{(j,k)}$ and $P_i^{(k,j)}$ w.r.t. $P_i^{(j,j)}$ and $P_i^{(k,k)}$ is satisfied by Construction T1.
{An integer appearing in a diagonal block of $P_{i}$ will appear in a diagonal block of $P_{i'}$ whereas that appearing in a non-diagonal block will appear in two non-diagonal blocks of $P_{i'}$.}
Between $P_i$ and $P_{i'}$, Blackburn compatibility is satisfied for (1) diagonal blocks $\{P_i^{(j,j)},P_{i'}^{(j,j)}\}$ by Construction C2, (2) non-diagonal blocks $\{P_i^{(j,k)},P_{i'}^{(j,k)}\}$ by Construction C1, and (3) non-diagonal blocks $\{P_i^{(j,k)},P_{i'}^{(k,j)}\}$ by Construction T2. 
\end{proof}
\begin{example}
For $g=4$ and $d=2$ we have,
\begin{align*}
    P_* &= B_{\tilde{I}_4}(J_2[4]) = \left(\begin{array}{cccc:cccc}
\x & \x & \x & t_0 & \x & \x & \x & t_1 \\
\x & \x & t_0 & \x & \x & \x & t_1 & \x \\
\x & t_0 & \x & \x & \x & t_1 & \x & \x \\
t_0 & \x & \x & \x & t_1 & \x & \x & \x \\ \hdashline
\x & \x & \x & t_2 & \x & \x & \x & t_3 \\
\x & \x & t_2 & \x & \x & \x & t_3 & \x \\
\x & t_2 & \x & \x & \x & t_3 & \x & \x \\
t_2 & \x & \x & \x & t_3 & \x & \x & \x 
    \end{array}\right), 
\end{align*}
\begin{align*}
    P_0 &= 
{\left(\begin{array}{cc}
    P_0^{(0,0)} & P_0^{(0,1)} \\
    P_0^{(1,0)} & P_0^{(1,1)}
\end{array}\right)} =
\left(\begin{array}{cccc:cccc}
\x & 0 & 1 & 2 & 3 & 4 & 5 & 6 \\
0 & \x & 7 & 8 & 9 & 10 & 11 & 12 \\
1 & 7 & \x & 13 & 14 & 15 & 16 & 17 \\
2 & 8 & 13 & \x & 18 & 19 & 20 & 21 \\ \hdashline
3 & 9 & 14 & 18 & \x & 22 & 23 & 24 \\
4 & 10 & 15 & 19 & 22 & \x & 25 & 26 \\
5 & 11 & 16 & 20 & 23 & 25 & \x & 27 \\
6 & 12 & 17 & 21 & 24 & 26 & 27 & \x 
    \end{array}\right),\\ 
    P_1 &= 
{\left(\begin{array}{cc}
    \pi_{AD,2}(P_0^{(0,0)}) & (P_1^{(1,0)})^T \\
    \pi_{AD,1}^{2}(P_0^{(1,0)}) & \pi_{AD,2}(P_0^{(1,1)})
\end{array}\right)} =
\left(\begin{array}{cccc:cccc}
\x & 0 & 1 & 7 & 3 & 4 & 5 & 15 \\
0 & \x & 2 & 8 & 9 & 10 & 18 & 12 \\
1 & 2 & \x & 13 & 14 & 6 & 16 & 17 \\
7 & 8 & 13 & \x & 11 & 19 & 20 & 21 \\ \hdashline
3 & 9 & 14 & 11 & \x & 22 & 23 & 25 \\
4 & 10 & 6 & 19 & 22 & \x & 24 & 26 \\
5 & 18 & 16 & 20 & 23 & 24 & \x & 27 \\
15 & 12 & 17 & 21 & 25 & 26 & 27 & \x
    \end{array}\right) 
\end{align*}
The dashed lines show the block-wise construction. $\{P_0^{(j,j)},P_1^{(j,j)}\}$, $j=0,1$, satisfies Construction C2, while $\{P_0^{(j,1-j)},P_1^{(j,1-j)}\}$ satisfies Construction C1 and $\{P_0^{(j,1-j)},P_1^{(1-j,j)}\}$, $j=0,1$ satisfies Construction T2. 
\end{example}

In the previous construction, the blocks of $P_*$ were $\tilde{I}_g$, which has one integer per column. In the next block-wise construction, we will consider $P_*$ with diagonal blocks containing multiple integers per column and off-diagonal blocks as copies of $\tilde{I}_g$. 
This results in a higher integer density, and requires rotation of diagonal blocks in addition to the permutations employed in Lemma \ref{lm:bw1}. To describe $P_*$ precisely, we define a $2d\times 2d$ 2-PDA $T_{d^2}$ as follows:
\begin{equation}
    T_{d^2} = \begin{pmatrix}
    \tilde{I}_2(0)&\cdots&\tilde{I}_2(d-1)\\
    \tilde{I}_2(d)&\cdots&\tilde{I}_2(2d-1)\\
    \vdots&\vdots&\vdots\\
    \tilde{I}_2(d^2-d)&\cdots&\tilde{I}_2(d^2-1)
    \end{pmatrix},
\end{equation}
which can be alternatively defined as a basic regular lifting of the 1-PDA $J_d([d^2])$ by $\tilde{I}_2$, i.e. $T_{d^2}=B_{\tilde{I}_2}(J_d([d^2]))$. Cyclic rotations of the anti-diagonal of $T_{d^2}$ results in the set $\mathcal{T}_{d^2}=\{\pi_{AD,1}^{2i}(T_{d^2}):i\in[d]\}$ of $d$ PDAs, which are Blackburn-compatible w.r.t. $\tilde{I}_{2d}$ by an extension of Construction C1. 

For positive integers $g$, $d$ such that $d^2\mid g$, let $P_b^{(g,d^2)}=\begin{pmatrix}
I_d(0)&\cdots&\x_{d\times d}\\
\vdots&\ddots&\vdots\\
\x_{d\times d}&\cdots&I_d(g/d^2-1)
\end{pmatrix}$ be a $g/d\times g/d$ diagonal $d$-PDA.

$P_*^{(g,d^2)}={L}_{\mathcal{T}_{d^2},\tilde{I}_{2d}}(P_b^{(g,d^2)})$ is obtained by a general regular lifting (Corollary \ref{co:regularTiling}) of the $d$-PDA $P_b^{(g,d^2)}$ using the $d$ Blackburn-compatible PDAs in $\mathcal{T}_{d^2}$. The diagonal blocks of $P_*^{(g,d^2)}$ are 2-PDAs with $\x$s at the even diagonals, i.e. at $\{(i,j): i-j=0\mod 2\}$. Off-diagonal blocks of $P_*$ are $\tilde{I}_{2d}$, which has $\x$s at all even diagonals. For example, with $g=4$ and $d=2$, we get
\begin{align}
    P_*^{(4,2^2)} &= 
{\left(\begin{array}{cc}
    T_{d^2} & \tilde{I}_4(t_0) \\
    \tilde{I}_4(t_1) & \pi_{AD,1}^{2}(T_{d^2})
\end{array}\right)} =
\left(\begin{array}{cccc:cccc}
\x & t_2 & \x & t_4 & \x & \x & \x & t_0 \\
t_2 & \x & t_4 & \x & \x & \x & t_0 & \x \\
\x & t_3 & \x & t_5 & \x & t_0 & \x & \x \\
t_3 & \x & t_5 & \x & t_0 & \x & \x & \x \\ \hdashline
\x & \x & \x & t_1 & \x & t_2 & \x & t_3 \\
\x & \x & t_1 & \x & t_2 & \x & t_3 & \x \\
\x & t_1 & \x & \x & \x & t_4 & \x & t_5 \\
t_1 & \x & \x & \x & t_4 & \x & t_5 & \x
    \end{array}\right).
    \label{eq:Pstarg4d2}
\end{align}
\begin{lemma}[Construction BW2]\label{lm:bw2}
Let $g$ and $d$ be positive integers such that $d^2\mid g$. There exist a set $\mathcal{P}$ of $d$ $(2g,2g,1,g(2g-1))$ PDAs that are Blackburn-compatible w.r.t $P_*^{(g,d^2)}={L}_{\mathcal{T}_{d^2},\tilde{I}_{2d}}(P_b^{(g,d^2)})$, such that $L_{\mathcal{P},P_*}(I_{d})$ is a $2d$-regular $(2gd,2gd,2gd-3g+g/d+2d-d^2,g(3g-g/d-2d+d^2))$ PDA.
\end{lemma}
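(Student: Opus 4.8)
The plan is to follow the block-wise strategy of Lemma~\ref{lm:bw1}, the only genuinely new feature being the denser diagonal blocks of $P_*^{(g,d^2)}$. First I would take $P_0=H_{2g}([g(2g-1)])$, the symmetric $(2g,2g,1,g(2g-1))$ $2$-PDA with $\x$s on the main diagonal, and partition it into a $(g/d)\times(g/d)$ array of $2d\times2d$ blocks $P_0^{(j,k)}$ aligned with the block structure of $P_*^{(g,d^2)}$. The diagonal blocks $P_0^{(j,j)}$ are then symmetric $2$-PDAs carrying the main-diagonal $\x$s, while the off-diagonal blocks are all-distinct-integer (hence $1$-regular) PDAs with $P_0^{(j,k)}=(P_0^{(k,j)})^T$. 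The central structural observation I would exploit is that \emph{every} block of $P_*^{(g,d^2)}$ --- whether an off-diagonal $\tilde{I}_{2d}$ or a diagonal rotation $\pi_{AD,1}^{2r}(T_{d^2})$ --- carries $\x$s on all even local diagonals; since the block boundaries occur at multiples of $2d$, this means $P_*^{(g,d^2)}$ has a $\x$ at every position $(X,Y)$ with $X\equiv Y\pmod 2$. Consequently, Blackburn-compatibility of $\mathcal{P}=\{P_0,\dots,P_{d-1}\}$ w.r.t. $P_*^{(g,d^2)}$ is guaranteed as soon as every integer shared between two members of $\mathcal{P}$ has both of its mirrored locations on even diagonals.

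For the off-diagonal blocks I would reuse the recipe of Lemma~\ref{lm:bw1} unchanged, setting $P_i^{(j,k)}=\pi_{AD,1}^{2i}(P_0^{(j,k)})$ for $j>k$ and $P_i^{(j,k)}=(P_i^{(k,j)})^T$ for $j<k$. Because the off-diagonal blocks of $P_*^{(g,d^2)}$ are exactly the $\tilde{I}_{2d}$'s, the within-$P_i$ validity follows from Construction~T1 of Lemma~\ref{lm:bcTranspose} and the cross-compatibility between $P_i$ and $P_{i'}$ follows from Construction~C1 of Lemma~\ref{lm:bc12pdas} (blocks in the same position) and Construction~T2 of Lemma~\ref{lm:bcTranspose} (mirror-position blocks), exactly as in Lemma~\ref{lm:bw1}.

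The new work is confined to the diagonal blocks. Here the target $P_*^{(j,j)}$ is no longer $\tilde{I}_{2d}$ but the denser $2$-PDA $\pi_{AD,1}^{2r(j)}(T_{d^2})$, whose $\x$s sit on the even diagonals \emph{only} (integers now occupy every odd diagonal). I would therefore define $P_i^{(j,j)}$ as the $\pi_{AD,2}$-rotation of $P_0^{(j,j)}$ used in Lemma~\ref{lm:bw1}, \emph{composed with an additional cyclic rotation} chosen to align with the particular rotation index $r(j)$ appearing in $P_*^{(j,j)}$. The main obstacle of the proof is verifying that, with this extra rotation, any integer shared between $P_i^{(j,j)}$ and $P_{i'}^{(j,j)}$ indeed mirrors onto even diagonals of $P_*^{(j,j)}$; this is a strengthening of Construction~C2 from the sparse target $\tilde{I}_{2d}$ (where mirrors merely had to avoid a single anti-diagonal) to the dense target $T_{d^2}$ (where mirrors must land exactly on even diagonals). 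I expect this verification to split, as in the proof of Lemma~\ref{lm:bcTranspose}, into the case where the matched entries lie off the rotated anti-diagonal and the case where they lie on it, the parity being preserved in both by the evenness of all rotation exponents.

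Finally, once compatibility is in place the parameters are purely mechanical from the general regular lifting of Corollary~\ref{cor:genreg} applied to $L_{\mathcal{P},P_*}(I_d)$. The base $I_d$ is a $d$-regular $(d,d,d-1,1)$ PDA, each $P_i$ is a $2$-regular $2g\times2g$ PDA with $e=1$ $\x$ per column, and $P_*^{(g,d^2)}$ is $2d$-regular with $Z_*=2g-g/d-d+1$ $\x$s per column (one diagonal $T_{d^2}$-block contributing $d$ $\x$s per column and the $g/d-1$ off-diagonal $\tilde{I}_{2d}$-blocks contributing $2d-1$ each). Substituting into $Z_bZ_*+(f_b-Z_b)e$ gives the claimed number of $\x$s, $(d-1)(2g-g/d-d+1)+1=2gd-3g+g/d+2d-d^2$, while the output regularity is that of $P_*$, namely $2d$, whence $|\mathcal{S}|=g(3g-g/d-2d+d^2)$, completing the parameter check.
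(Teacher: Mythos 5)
Your overall frame --- partitioning $P_0=H_{2g}([g(2g-1)])$ into blocks aligned with $P_*^{(g,d^2)}$, reusing T1/C1/T2 for the off-diagonal blocks, the observation that every block of $P_*^{(g,d^2)}$ has $\x$s at all positions of even local diagonal, and the parameter count via Corollary~\ref{cor:genreg} (your $Z_*=2g-g/d-d+1$ and the resulting $Z$ are correct) --- matches the paper. The gap is precisely in the one place you flag as ``the new work'': the diagonal blocks. You propose to keep $P_i^{(j,j)}$ at position $(j,j)$ as a rotation of $P_0^{(j,j)}$ and then verify that integers shared between $P_i^{(j,j)}$ and $P_{i'}^{(j,j)}$ mirror onto even diagonals of the dense $T_{d^2}$-type block of $P_*$. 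This cannot be repaired by any ``additional cyclic rotation'' of the kind available here: $\pi_{AD,1}$ and $\pi_{AD,2}$ move only anti-diagonal entries, so every integer of $P_0^{(j,j)}$ off the anti-diagonal occupies the \emph{same} pair of cells $(a,b)$, $(b,a)$ in both $P_i^{(j,j)}$ and $P_{i'}^{(j,j)}$. Matching the two copies at $(a,b)$ forces the $(a,b)$ entry of the corresponding diagonal block of $P_*$ to be $\x$, which fails whenever $a-b$ is odd, because that block carries integers on every odd diagonal (e.g.\ cell $(0,1)$ already gives a violation). A whole-block cyclic shift along the diagonal also fails, since it would have to be even to protect the even-difference cells and odd to protect the odd-difference ones.

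The paper's construction avoids this check entirely by cyclically \emph{permuting the diagonal blocks across diagonal positions}: $P_i^{(j,j)}=\pi^i_{AD,2}(P_0^{(\tilde j,\tilde j)})$ with $\tilde j=j-i\bmod d$. Since distinct diagonal blocks of $P_0$ are integer-disjoint, two diagonal blocks of $P_i$ and $P_{i'}$ with $i\ne i'$ can share integers only when they sit at \emph{different} diagonal positions $j\ne j'$, so their mirrored locations fall into off-diagonal blocks of $P_*$, which are copies of $\tilde I_{2d}$, and Construction C1 applies; no compatibility against the dense diagonal block is ever required. This relabelling of the diagonal blocks is the essential new idea of BW2 relative to BW1, and it is the piece missing from your proposal.
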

\begin{proof}
Consider the $2g\times 2g$ $2$-PDA $P_0$ defined in the proof of Lemma \ref{lm:bw1} and its $2g/d\times 2g/d$ blocks $P_0^{(j,k)}$, $j,k\in[d]$. Construct $P_i$, $i=0,\ldots,d-1$ as a blockwise concatenation of $P_i^{(j,k)}$ defined as follows.
\begin{align*}
    P_i^{(j,k)}=\begin{cases}
    \pi^i_{AD,2}(P_0^{(\tilde{j},\tilde{j})}), & j=k, \tilde{j}=j-i\text{ mod } d,\\
    \pi^{2i}_{AD,1}(P_0^{(j,k)}), & j>k\\
    (P_i^{(k,j)})^T, & j<k.
    \end{cases}
\end{align*}
The main difference, when compared to Construction BW1, is the rotation of the diagonal blocks.

We claim that $\mathcal{P}=\{P_0,\ldots,P_{d-1}\}$ is the desired set. Within $P_i$, Blackburn compatibility is satisfied by Construction T1. Between $P_i$ and $P_{i'}$, we consider the diagonal case and two off-diagonal cases separately: (1) Two diagonal blocks $P_i^{(j,j)}$ and $P_{i'}^{(j',j')}$ share the same set of integers only when $j\ne j'$, which means that the mirrored locations fall in an off-diagonal block $\tilde{I}_{2d}$ of $P_*$. So, by Construction C1, Blackburn compatibility is satisfied for diagonal blocks. (2) Two off-diagonal blocks $P_i^{(j,k)}$ and $P_{i'}^{(k,j)}$, $j\ne k$, share the same set of integers and have mirrored locations falling on diagonal blocks of $P_*$, which have $\x$s on all even diagonals. So, by Construction T2, Blackburn compatibility is satisfied. (3) Two off-diagonal blocks $P_i^{(j,k)}$ and $P_{i'}^{(j,k)}$, $j\ne k$, share the same set of integers, and the mirrored locations fall in an off-diagonal block $\tilde{I}_{2d}$ of $P_*$. So, by Construction C1, Blackburn compatibility is satisfied.
\end{proof}
\begin{example}
Let $g=4$ and $d=2$. $P_*$ is given in \eqref{eq:Pstarg4d2}.
\begin{align*}
    P_0 &= 
{\left(\begin{array}{cc}
    P_0^{(0,0)} & P_0^{(0,1)} \\
    P_0^{(1,0)} & P_0^{(1,1)}
\end{array}\right)} =
\left(\begin{array}{cccc:cccc}
\x & 0 & 1 & 2 & 3 & 4 & 5 & 6 \\
0 & \x & 7 & 8 & 9 & 10 & 11 & 12 \\
1 & 7 & \x & 13 & 14 & 15 & 16 & 17 \\
2 & 8 & 13 & \x & 18 & 19 & 20 & 21 \\ \hdashline
3 & 9 & 14 & 18 & \x & 22 & 23 & 24 \\
4 & 10 & 15 & 19 & 22 & \x & 25 & 26 \\
5 & 11 & 16 & 20 & 23 & 25 & \x & 27 \\
6 & 12 & 17 & 21 & 24 & 26 & 27 & \x 
    \end{array}\right),
\end{align*}
\begin{align*}
    P_1 &= 
{\left(\begin{array}{cc}
    \pi_{AD,2}(P_0^{(1,1)}) & (P_1^{(1,0)})^T \\
    \pi_{AD,1}^{2}(P_0^{(1,0)}) & \pi_{AD,2}(P_0^{(0,0)})
\end{array}\right)} =
\left(\begin{array}{cccc:cccc}
\x & 22 & 23 & 25 & 3 & 4 & 5 & 15 \\
22 & \x & 24 & 26 & 9 & 10 & 18 & 12 \\
23 & 24 & \x & 27 & 14 & 6 & 16 & 17 \\
25 & 26 & 27 & \x & 11 & 19 & 20 & 21 \\ \hdashline
3 & 9 & 14 & 11 & \x & 0 & 1 & 7 \\
4 & 10 & 6 & 19 & 0 & \x & 2 & 8 \\
5 & 18 & 16 & 20 & 1 & 2 & \x & 13 \\
15 & 12 & 17 & 21 & 7 & 8 & 13 & \x
    \end{array}\right).
\end{align*}
\end{example}
For the next blockwise construction, we consider $P_*=B_{H_{g/d}}(I_d)$ to be a $g\times g$ PDA constructed by basic lifting of $I_d$ using $H_{g/d}([\frac{g}{2d}(\frac{g}{d}-1)])$, for integers $g,d$ such that $d|g$. Therefore, it has $H_{g/d}$ repeated along the diagonal blocks and all-$\x$ blocks appearing in the off-diagonal blocks.

\begin{lemma}[Construction BW3]\label{lm:bw3}
For positive integers $g$ and $d$ such that $d\mid g$, there exist 
a set $\mathcal{P}$ of $d$ $2$-regular $(g,g,1,\allowbreak g(g-1)/2)$ PDAs that are Blackburn compatible w.r.t $P_*=B_{H_{g/d}}(I_{d})$
such that $L_{\mathcal{P},P_*}(I_d)$ is a $2d$-regular $(gd,gd,gd+d-2g+g/d,g(2g-d-g/d)/2)$ PDA.
\end{lemma}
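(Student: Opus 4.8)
The plan is to exhibit the set $\mathcal{P}$ explicitly and verify Blackburn compatibility directly, after which the parameters follow from Corollary~\ref{cor:genreg}. The key initial observation is that, since $I_d$ has $\x$ off its diagonal, $P_*=B_{H_{g/d}}(I_{d})$ is block-diagonal: it is the direct sum of $d$ integer-disjoint copies of $H_{g/d}$, one in each diagonal block, with every off-diagonal block all-$\x$. Consequently an entry of $P_*$ is a non-$\x$ integer only when its row and column lie in the same length-$(g/d)$ block and occupy distinct within-block indices; equivalently, a mirrored location is an $\x$ of $P_*$ whenever its row and column fall in different blocks, or fall in the same block with equal within-block index.

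To build $\mathcal{P}$ I would start from $P_0=H_g$ of Lemma~\ref{lm:diagPDA} (diagonal set to $\x$) and partition it into a $d\times d$ grid of $(g/d)\times(g/d)$ blocks $P_0^{(X,Y)}$. By symmetry of $H_g$ the diagonal blocks $P_0^{(X,X)}$ are themselves $H_{g/d}$-type $2$-PDAs (with $\x$ on their own diagonal and pairwise-disjoint integer sets), while the off-diagonal blocks satisfy $P_0^{(X,Y)}=(P_0^{(Y,X)})^{T}$. For $i=0,1,\ldots,d-1$ I define $P_i$ to have the same off-diagonal blocks as $P_0$, and to have its diagonal blocks cyclically shifted, $P_i^{(X,X)}=P_0^{(X\ominus i,\,X\ominus i)}$ (indices mod $d$). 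Each $P_i$ is then symmetric with $\x$ on its main diagonal and carries every integer of $H_g$ exactly twice, so it is a valid $2$-regular $(g,g,1,g(g-1)/2)$ PDA; moreover all $d$ arrays share the common integer set, so every integer occurs $2d$ times across $\mathcal{P}$, as Corollary~\ref{cor:genreg} requires.

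The heart of the argument is pairwise Blackburn compatibility of $P_i,P_j$ for $i\neq j$ w.r.t.\ $P_*$, which I would prove by separating the integers of $H_g$ into those lying in off-diagonal blocks and those lying in diagonal blocks. An off-diagonal integer sits at the same symmetric pair of locations $(p,q),(q,p)$ (with $p,q$ in different blocks) in every $P_i$, so each cross-mirror it generates is either a different-block location of $P_*$ (automatically $\x$) or one of the main-diagonal locations $(p,p),(q,q)$ (also $\x$). A diagonal integer of $P_0^{(k,k)}$ is placed by the cyclic shift into block $k\oplus i$ of $P_i$ and into the distinct block $k\oplus j$ of $P_j$; since $k\oplus i\neq k\oplus j$, both of its cross-mirrors have their row and column in different blocks and are therefore $\x$ in $P_*$. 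This covers every shared integer, which is the only thing not automatic.

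With compatibility in hand, the parameters are a direct application of Corollary~\ref{cor:genreg} with base $I_d$ (a $d$-regular $(d,d,d-1,1)$ PDA), $\x$-filler $P_*$ (a $2d$-regular PDA with $Z_*=1+g-g/d$ $\x$s per column by Corollary~\ref{cor:basicreg}), and the $P_i$ above with $e=1$: the lifted array is $2d$-regular on $gd$ rows and columns, with $Z=Z_bZ_*+(f_b-Z_b)e=(d-1)(1+g-g/d)+1=gd+d-2g+g/d$ $\x$s per column, yielding exactly the claimed $(gd,gd,gd+d-2g+g/d,g(2g-d-g/d)/2)$ PDA. I expect the main obstacle to be precisely the compatibility step, and within it the need to keep every mirror that lands inside a diagonal block of $P_*$ on that block's main diagonal: because $H_{g/d}$ carries $\x$s only on its main diagonal (unlike the $\tilde I_{2d}$ blocks of Constructions BW1--BW2, which are $\x$ on all even diagonals), there is no slack, and this is what forces the off-diagonal integers to be frozen while the diagonal integers are merely permuted among blocks.
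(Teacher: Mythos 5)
Your construction is exactly the paper's: start from $P_0=H_g$, keep the off-diagonal blocks fixed, and cyclically permute the diagonal blocks; the paper gives precisely this and skips the verification (deferring to BW2), which you have filled in correctly, including the parameter count via Corollary~\ref{cor:genreg}. One small slip worth fixing: the diagonal blocks of $P_*=B_{H_{g/d}}(I_d)$ are \emph{identical} copies of $H_{g/d}$ sharing a single integer set, not integer-disjoint ones --- this is exactly what makes $P_*$ $2d$-regular as you later (correctly) use, and since your compatibility argument relies only on the $\x$-pattern of $P_*$, nothing else in the proof is affected.
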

\begin{proof}
Consider the $g\times g$ $2$-PDA $P_0=H_g([g(g-1)/2])$ and its $g/d\times g/d$ blocks $P_0^{(j,k)}$, $j,k\in[d]$. Construct $P_i$, $i=0,\ldots,d-1$ as a blockwise concatenation of $P_i^{(j,k)}$ defined as follows.
\begin{align*}
    P_i^{(j,k)}=\begin{cases}
    P_0^{(\tilde{j},\tilde{j})}, & j=k, \tilde{j}=j-i\text{ mod } d,\\
    P_0^{(j,k)}, & j\ne k.
    \end{cases}
\end{align*}
We claim that $\mathcal{P}=\{P_0,\ldots,P_{d-1}\}$ is the desired set, and skip the proof details, which are largely similar to the previous proof of Construction BW2.
\end{proof}
\begin{example}
Let $g=6$ and $d=2$.
\begin{align*}
    P_* &= 
{B_{H_{3}}(I_{2})} =
\left(\begin{array}{ccc:ccc}
\x & 0 & 1 & \x & \x & \x \\
0 & \x & 2 & \x & \x & \x \\
1 & 2 & \x & \x & \x & \x \\ \hdashline
\x & \x & \x & \x & 0 & 1 \\
\x & \x & \x & 0 & \x & 2 \\
\x & \x & \x & 1 & 2 & \x
    \end{array}\right),
\end{align*}
\begin{align*}
    P_0 &= 
{\left(\begin{array}{cc}
    P_0^{(0,0)} & P_0^{(0,1)} \\
    P_0^{(1,0)} & P_0^{(1,1)}
\end{array}\right)} =
\left(\begin{array}{ccc:ccc}
\x & 0 & 1 & 2 & 3 & 4 \\
0 & \x & 5 & 6 & 7 & 8 \\
1 & 5 & \x & 9 & 10 & 11 \\ \hdashline
2 & 6 & 9 & \x & 12 & 13 \\
3 & 7 & 10 & 12 & \x & 14 \\
4 & 8 & 11 & 13 & 14 & \x
    \end{array}\right),\\
    P_1 &= 
{\left(\begin{array}{cc}
    P_0^{(1,1)} & P_0^{(0,1)} \\
    P_0^{(1,0)} & P_0^{(0,0)}
\end{array}\right)} =
\left(\begin{array}{ccc:ccc}
\x & 12 & 13 & 2 & 3 & 4 \\
12 & \x & 14 & 6 & 7 & 8 \\
13 & 14 & \x & 9 & 10 & 11 \\ \hdashline
2 & 6 & 9 & \x & 0 & 1 \\
3 & 7 & 10 & 0 & \x & 5 \\
4 & 8 & 11 & 1 & 5 & \x 
    \end{array}\right).
\end{align*}
\end{example}
In the final block-wise construction, we will consider $P_*$ with diagonal blocks containing one $\x$ per column and off-diagonal blocks as copies of $\tilde{I}_g$. Letting $T=H_{2n}([n(2n-1)])$ be the $2n\times 2n$ 2-PDA as defined in Lemma \ref{lm:diagPDA}, we set
\begin{equation}
    P_*^{(n)}=\begin{pmatrix}
    T&\tilde{I}_{2n}(t_{0,1})&\ldots&\tilde{I}_{2n}(t_{0,n-1})\\
    \tilde{I}_{2n}(t_{1,0})&\pi_{AD,2}(T)&\ldots&\tilde{I}_{2n}(t_{1,n-1})\\
    \vdots&\vdots&\vdots&\vdots\\
    \tilde{I}_{2n}(t_{n-1,0})&\ldots&\tilde{I}_{2n}(t_{n-1,n-1})&\pi^{n-1}_{AD,2}(T)
    \end{pmatrix},
    \label{eq:Pstarn}
\end{equation}
where $t_{i,j}$ are integers. All the diagonal blocks of $P_*^{(n)}$ have $\x$s on the diagonals, all off-diagonal blocks have $\x$s on all even diagonals, and every column has $1+(n-1)(2n-1)=2n^2-3n+2$ $\x$s.

The construction of Blackburn-compatible PDAs is presented in the following Lemma.
\begin{lemma}\label{lm:bw4}
(Construction BW4) For an integer $n$, there exist a set $\mathcal{P}$ of $2n$ $(2n^2, 2n^2, 0, 4n^4)$ 1-PDAs that are Blackburn-compatible w.r.t. $P_*^{(n)}$ defined in \eqref{eq:Pstarn} such that  $L_{\mathcal{P},P_*^{(n)}}(I_{2n})$ is a $2n$-regular $(4n^3,4n^3,4n^3-8n^2+7n-2,2n^2(8n^2-7n+2))$ PDA.
\end{lemma}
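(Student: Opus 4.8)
The plan is to reduce everything to Corollary~\ref{cor:genreg}. Once a set $\mathcal{P}=\{P_0,\ldots,P_{2n-1}\}$ of $2n$ full $2n^2\times 2n^2$ arrays (each a $1$-PDA, hence with $e=0$ $\x$s per column) is produced that is pairwise Blackburn-compatible w.r.t.\ $P_*^{(n)}$ and in which every integer occurs exactly once per array, the lemma follows: taking $I_{2n}$ as base PDA (which is $2n$-regular with $g_b=2n$, since its single integer sits on all $2n$ diagonal cells) and $P_*=P_*^{(n)}$ (which is $2n$-regular with $Z_*=2n^2-3n+2$, as tallied below \eqref{eq:Pstarn}), Corollary~\ref{cor:genreg} yields a $2n$-regular $(4n^3,4n^3,Z,|\mathcal{S}|)$ PDA with $Z=Z_bZ_*+(f_b-Z_b)e=(2n-1)(2n^2-3n+2)=4n^3-8n^2+7n-2$ and $|\mathcal{S}|=\tfrac{4n^3(4n^3-Z)}{2n}=2n^2(8n^2-7n+2)$. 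Thus the parameter tally is purely routine and I would defer it to the end; the whole burden lies in constructing $\mathcal{P}$ and checking compatibility.

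I would construct each $P_i$ as an $n\times n$ grid of $2n\times 2n$ full-array blocks $P_i^{(j,k)}$, matching the block grid of $P_*^{(n)}$, so that the mirrored-location analysis factors through the sub-blocks. The two local ingredients are already in hand. The off-diagonal blocks of $P_*^{(n)}$ are copies of $\tilde I_{2n}$, and Construction~C1 (Lemma~\ref{lm:bc12pdas}, part~1(b)) provides precisely $2n$ full arrays $\{J_{2n},\pi_{AD,1}(J_{2n}),\ldots,\pi_{AD,1}^{2n-1}(J_{2n})\}$ Blackburn-compatible w.r.t.\ $\tilde I_{2n}$; this is what indexes the $2n$ copies by an anti-diagonal rotation amount. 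The diagonal blocks of $P_*^{(n)}$ are the rotations $\pi_{AD,2}^{l}(T)$ of $T=H_{2n}$, which retain $\x$s on their main diagonal, so the transpose construction~T1/T2 (Lemma~\ref{lm:bcTranspose}) governs any shared integer whose mirror falls there. Following the template of Lemmas~\ref{lm:bw2} and~\ref{lm:bw3}, I would start from a base $P_0$ whose off-diagonal blocks are symmetric under transposition, $P_0^{(j,k)}=(P_0^{(k,j)})^{T}$, and obtain $P_i$ from $P_0$ by (i) a cyclic shift of the diagonal blocks, $P_i^{(j,j)}=\pi_{AD,2}^{i}(P_0^{(\tilde j,\tilde j)})$ with $\tilde j= j-i \bmod n$, and (ii) within-block anti-diagonal rotations $\pi_{AD,1}^{2i}$ on the lower off-diagonal blocks, the upper blocks filled in by transposition; the indexing is chosen so that the $2n$ copies realize the $2n$ distinct rotations of~C1 block-wise, which also guarantees that every integer occurs once per copy and hence $2n$ times in all.

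The main obstacle is the Blackburn-compatibility verification, which I would organize exactly as in the proof of Lemma~\ref{lm:bw2}. Fix an integer shared by $P_i$ and $P_{i'}$, $i\ne i'$; since each $P_i$ is a full array with distinct integers, no within-copy condition arises, so it suffices to locate the two occurrences in sub-blocks $(j,k)$ and $(j',k')$ and show that the two mirrored cells, which lie in sub-blocks $(j,k')$ and $(j',k)$ of $P_*^{(n)}$, are $\x$s. The cases split according to whether these target sub-blocks are diagonal or off-diagonal: a diagonal-versus-diagonal sharing is arranged by the cyclic block shift to occur only across distinct block-columns, so its mirror lands in off-diagonal $\tilde I_{2n}$ blocks and is settled by~C1; an off-diagonal sharing with the transposed partner block mirrors into a diagonal block and is settled by~T2, using that $\pi_{AD,2}^{l}(T)$ carries $\x$s on the relevant diagonal; and an off-diagonal sharing with the identically placed block mirrors again into an off-diagonal $\tilde I_{2n}$ block and is settled by~C1. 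Alongside this I must confirm the integer-multiplicity bookkeeping that Corollary~\ref{cor:genreg} requires. The most delicate point is to check that the operators $\pi_{AD,1}$ and $\pi_{AD,2}$ do not disturb the fixed $\x$-diagonals of $\pi_{AD,2}^{l}(T)$ at cells where the main diagonal of $T$ meets the anti-diagonals of its $n\times n$ sub-blocks, so that no shared integer is accidentally mirrored onto a non-$\x$ cell.
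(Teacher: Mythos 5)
Your reduction to Corollary~\ref{cor:genreg} and the parameter tally are fine, and you correctly identify the local tools (C1 against the off-diagonal $\tilde I_{2n}$ blocks, the transpose constructions against the $\x$-diagonals of the $\pi_{AD,2}^l(T)$ blocks). But the construction of $\mathcal{P}$ you sketch does not produce what the lemma requires, for two concrete reasons. First, you take a base with $P_0^{(j,k)}=(P_0^{(k,j)})^{T}$, the template of BW1/BW2. That forces every off-diagonal integer to appear twice in each copy, so your arrays are not $(2n^2,2n^2,0,4n^4)$ 1-PDAs (the integer count would be $2n^4+2n^3$, not $4n^4$), and with $2n$ such copies an off-diagonal integer would occur $4n$ times while a diagonal-block integer occurs $2n$ times, so the lift of $I_{2n}$ would not be regular at all. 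The paper's $P_0$ is instead the full array $J_{2n^2}([4n^4])$: blocks $(j,k)$ and $(k,j)$ carry \emph{disjoint} integer sets, every integer occurs once per copy, and consequently the ``off-diagonal block versus its transposed partner'' case you plan to settle with T2 never arises. Second, your indexing $\pi_{AD,1}^{2i}$, $i\in[n]$, yields only $n$ distinct rotation amounts, and you do not specify a mechanism that doubles this to the $2n$ compatible copies needed to lift $I_{2n}$. The paper doubles the family explicitly: alongside $P_i$ (off-diagonal rotation $2i$, diagonal blocks cyclically relocated) it forms $\tilde P_i$ with off-diagonal rotation $2i+1$ and with the diagonal blocks \emph{transposed in place}; the even and odd rotations together exhaust all $2n$ residues so that C1 applies to any two copies sharing an off-diagonal block, and T1 (not T2) handles the single diagonal-block interaction $P_i^{(j,j)}$ versus $\tilde P_i^{(j,j)}=(P_i^{(j,j)})^{T}$. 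Your final worry about the main diagonal meeting the block anti-diagonals is moot: the blocks have even size $2n$, so the two diagonals are disjoint. As written, your proposal would need its base and its copy-generation rule replaced before the compatibility argument could go through.
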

\begin{proof}
Consider the $2n^2\times 2n^2$ 1-PDA $P_0$ defined as a blockwise concatenation of $2n\times 2n$ blocks. The $(j,k)$-th block, $j,k\in[n]$, $P_0^{(j,k)}=J_{2n}(S_{jk})$, where $S_{jk}=\{4n^2(nj+k),4n^2(nj+k+1)-1\}$ is a partition of the set $[4n^4]$ consisting of disjoint subsets of $4n^2$ consecutive integers.

Construct $P_i$, $i=1,\ldots,n-1$, as a blockwise concatenations of $2n\times 2n$ PDAs $P_i^{(j,k)}$, $j,k\in[n]$, defined as follows:
$$
P_i^{(j,k)} = \begin{cases}
\pi^2_{AD,1}(P_{i-1}^{(j-1\text{ mod }n,j-1\text{ mod }n)}),&j=k,\\
\pi^2_{AD,1}(P_{i-1}^{(j,k)}),&j\ne k.
\end{cases}
$$
When going from $P_{i-1}$ to $P_i$, a diagonal block is cyclically shifted down, and its anti-diagonal is cyclically shifted twice. An off-diagonal block has its anti-diagonal cyclically shifted twice.

Construct $\tilde{P}_i$, $i\in[n]$, as
$$\tilde{P}_i^{(j,k)} = \begin{cases}
(P_i^{(j,j)})^T,&j=k,\\
\pi_{AD,1}(P_i^{(j,k)}),&j\ne k.
\end{cases}$$
When going from $P_i$ to $\tilde{P}_i$, diagonal blocks are transposed. An off-diagonal block has its anti-diagonal cyclically shifted once.

We claim that $\mathcal{P}=\{P_0,\ldots,P_{n-1},\tilde{P}_0,\ldots,\tilde{P}_{n-1}\}$ is Blackburn-compatible w.r.t. $P_*^{(n)}$. To prove the claim, we will consider multiple cases where two blocks of the PDAs in $\mathcal{P}$ share integers. (1) $P_i^{(j,k)}$ and $P_{i'}^{(j,k)}$, $i\ne i'$, share integers if $j\ne k$. $P_i^{(j,j)}$ and $P_{i'}^{(j',j')}$, $i\ne i'$, share integers only if $j\ne j'$. In both cases, the mirrored locations are an off-diagonal block of $P_*$, which is $\tilde{I}_{2n}$. By Construction C1, Blackburn compatibility follows. The same argument holds when $P$ is replaced with $\tilde{P}$ in this case. (2) $P_i^{(j,j)}$ and $\tilde{P}_{i}^{(j,j)}$ share integers. Mirrored locations are on diagonal blocks of $P_*$, which have $\x$s on the diagonal. Blackburn compatibility follows by Construction T1. (3) $P_i^{(j,k)}$ and $\tilde{P}_{i'}^{(j,k)}$ share integers if $j\ne k$. $P_i^{(j,j)}$ and $\tilde{P}_{i'}^{(j',j')}$, $i\ne i'$, share integers only if $j\ne j'$. Mirrored locations are an off-diagonal block of $P_*$, which is $\tilde{I}_{2n}$. By Construction C1, Blackburn compatibility follows.  
\end{proof}

\begin{example}
For $n=2$, the following set of $8\times 8$ Blackburn PDAs can be constructed using BW4.
\begin{align*}
    P_* &= 
{\left(\begin{array}{cc}
    H_4([2,7]) & \tilde{I}_4(0) \\
    \tilde{I}_4(1) & \pi_{AD,2}(H_4([2,7]))
\end{array}\right)} =
\left(\begin{array}{cccc:cccc}
\x & 4 & 3 & 6 & \x & \x & \x & 0 \\
4 & \x & 2 & 5 & \x & \x & 0 & \x \\
3 & 2 & \x & 7 & \x & 0 & \x & \x \\
6 & 5 & 7 & \x & 0 & \x & \x & \x \\ \hdashline
\x & \x & \x & 1 & \x & 4 & 3 & 2 \\
\x & \x & 1 & \x & 4 & \x & 6 & 5 \\
\x & 1 & \x & \x & 3 & 6 & \x & 7 \\
1 & \x & \x & \x & 2 & 5 & 7 & \x 
    \end{array}\right)
\end{align*}
\begin{align*}
    P_0 &= 
{\left(\begin{array}{cc}
    P_0^{(0,0)} & P_0^{(0,1)} \\
    P_0^{(1,0)} & P_0^{(1,1)}
\end{array}\right)} =
\left(\begin{array}{cccc:cccc}
0 & 1 & 2 & 3 & 16 & 17 & 18 & 19 \\
4 & 5 & 6 & 7 & 20 & 21 & 22 & 23 \\
8 & 9 & 10 & 11 & 24 & 25 & 26 & 27 \\
12 & 13 & 14 & 15 & 28 & 29 & 30 & 31 \\ \hdashline
32 & 33 & 34 & 35 & 48 & 49 & 50 & 51 \\
36 & 37 & 38 & 39 & 52 & 53 & 54 & 55 \\
40 & 41 & 42 & 43 & 56 & 57 & 58 & 59 \\
44 & 45 & 46 & 47 & 60 & 61 & 62 & 63
    \end{array}\right),
\end{align*}
\begin{align*}
    \tilde{P}_0 &= 
{\left(\begin{array}{cc}
    (P_0^{(0,0)})^T & \pi_{AD,1}(P_0^{(0,1)}) \\
    \pi_{AD,1}(P_0^{(1,0)}) & (P_0^{(1,1)})^T
\end{array}\right)} =
\left(\begin{array}{cccc:cccc}
0 & 4 & 8 & 12 & 16 & 17 & 18 & 22 \\
1 & 5 & 9 & 13 & 20 & 21 & 25 & 23 \\
2 & 6 & 10 & 14 & 24 & 28 & 26 & 27 \\
3 & 7 & 11 & 15 & 19 & 29 & 30 & 31 \\ \hdashline
32 & 33 & 34 & 38 & 48 & 52 & 56 & 60 \\
36 & 37 & 41 & 39 & 49 & 53 & 57 & 61 \\
40 & 44 & 42 & 43 & 50 & 54 & 58 & 62 \\
35 & 45 & 46 & 47 & 51 & 55 & 59 & 63
    \end{array}\right),
\end{align*}
\begin{align*}
    P_1 &= 
{\left(\begin{array}{cc}
    \pi_{AD,1}^{2}(P_0^{(1,1)}) & \pi_{AD,1}^{2}(P_0^{(0,1)}) \\
    \pi_{AD,1}^{2}(P_0^{(1,0)}) & \pi_{AD,1}^{2}(P_0^{(0,0)})
\end{array}\right)} =
\left(\begin{array}{cccc:cccc}
48 & 49 & 50 & 57 & 16 & 17 & 18 & 25 \\
52 & 53 & 60 & 55 & 20 & 21 & 28 & 23 \\
56 & 51 & 58 & 59 & 24 & 19 & 26 & 27 \\
54 & 61 & 62 & 63 & 22 & 29 & 30 & 31 \\ \hdashline
32 & 33 & 34 & 41 & 0 & 1 & 2 & 9 \\
36 & 37 & 44 & 39 & 4 & 5 & 12 & 7 \\
40 & 35 & 42 & 43 & 8 & 3 & 10 & 11 \\
38 & 45 & 46 & 47 & 6 & 13 & 14 & 15
    \end{array}\right),
\end{align*}
\begin{align*}
    \tilde{P}_1 &= 
{\left(\begin{array}{cc}
    (P_1^{(0,0)})^T & \pi_{AD,1}(P_1^{(1,0)}) \\
    \pi_{AD,1}(P_1^{(1,0)}) & (P_1^{(1,1)})^T
\end{array}\right)} =
\left(\begin{array}{cccc:cccc}
48 & 52 & 56 & 54 & 16 & 17 & 18 & 28 \\
49 & 53 & 51 & 61 & 20 & 21 & 19 & 23 \\
50 & 60 & 58 & 62 & 24 & 22 & 26 & 27 \\
57 & 55 & 59 & 63 & 25 & 29 & 30 & 31 \\ \hdashline
32 & 33 & 34 & 44 & 0 & 4 & 8 & 6 \\
36 & 37 & 35 & 39 & 1 & 5 & 3 & 13 \\
40 & 38 & 42 & 43 & 2 & 12 & 10 & 14 \\
41 & 45 & 46 & 47 & 9 & 7 & 11 & 15
    \end{array}\right).
\end{align*}
\end{example}

\subsection{Tiling construction} 
The next lemma provides a tiling construction for regular Blackburn-compatible PDAs, and characterizes a tradeoff between the number of $\x$s per column and the number of integers.
\begin{lemma}[Tiling]\label{lm:BCboundZ}
For a positive integer $g$ and a divisor $d$ of $g$, there exists a set $\mathcal{P}$ of $d$ $(g,g,g-d,d^2)$ $(g/d)$-PDAs that are Blackburn compatible w.r.t. $I_g(t)$ (when $t$ does not appear in any PDA in $\mathcal{P}$). $L_{\mathcal{P},I_g}(I_d)$ is a $g$-regular $(dg,dg,dg-2d+1,d(2d-1))$ PDA.
\end{lemma}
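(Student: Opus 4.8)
The plan is to build the set $\mathcal{P}$ explicitly by a tiling (block) construction and then read off the parameters of $L_{\mathcal{P},I_g}(I_d)$ directly from the general regular lifting corollary (Corollary \ref{cor:genreg}). Writing $m=g/d$, I would view each candidate $g\times g$ PDA as a $d\times d$ grid of $m\times m$ blocks, where every block is a copy of $I_m(\cdot)$ carrying a single integer. Since each column of the array meets $d$ blocks vertically and each $I_m$ block contributes exactly $m-1$ stars and one integer to each of its columns, any such array automatically has $g-d=d(m-1)$ stars per column, uses $d^2$ distinct integers (one per block, each appearing $m$ times), and is $(g/d)$-regular, giving the claimed $(g,g,g-d,d^2)$ parameters no matter how the integers are distributed among the blocks. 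Thus the only genuine constraint is Blackburn compatibility w.r.t. $I_g(t)$.

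Next I would reduce Blackburn compatibility to a purely block-level condition. Applying the test of Lemma \ref{lm:InCompatible} (the mirrored positions of a shared integer must be off the main diagonal of $I_g$), a short calculation in block coordinates shows that if a common integer sits in block $(p,q)$ of $P_i$ and in block $(p',q')$ of $P_{i'}$, then since the integer positions within any block form a permutation matrix, the only way to keep every mirrored entry off the main diagonal is to require $p\ne q'$ and $p'\ne q$. Hence it suffices to place the integers into the blocks so that no shared integer ever occupies block-row $p$ of one PDA while occupying block-column $p$ of another.

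I would then exhibit such placements. Fix $d^2$ integers $s_{a,b}$, $a,b\in[d]$. In every $P_i$ keep each off-diagonal integer $s_{a,b}$ (with $a\ne b$) in block $(a,b)$; these are automatically compatible because $a\ne b$. For the $d$ diagonal integers $s_{x,x}$, place $s_{x,x}$ in the diagonal block $\big((x+i)\bmod d,\,(x+i)\bmod d\big)$ of $P_i$, i.e.\ cyclically rotate the diagonal blocks by $i$. The block-position of each diagonal integer is then distinct across the $d$ PDAs (its positions form a full cyclic Latin square of order $d$), so $p\ne q'=p'$ holds in the diagonal case too. This makes $\mathcal{P}=\{P_0,\dots,P_{d-1}\}$ pairwise Blackburn-compatible w.r.t.\ $I_g(t)$, and each $s_{a,b}$ appears in all $d$ PDAs and hence $md=g$ times in total, while every $P_i$ is itself a valid PDA (a basic lifting of a relabelled $J_d$ by $I_m$).

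Finally I would invoke Corollary \ref{cor:genreg} with base PDA $P_b=I_d$ (a $d$-regular $(d,d,d-1,1)$ PDA, so $g_b=d$), with $P_*=I_g(t)$ (a $g$-regular PDA), with $e=g-d$, and with the set $\mathcal{P}$ just constructed; the hypotheses (a constant star-count $e$ per column, each integer occurring $g$ times across $\mathcal{P}$, and Blackburn compatibility) are exactly what was verified. The resulting number of stars per column is $Z_bZ_*+(f_b-Z_b)e=(d-1)(g-1)+(g-d)=dg-2d+1$, and the number of integers equals $d(2d-1)$, yielding the $g$-regular $(dg,dg,dg-2d+1,d(2d-1))$ PDA. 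The main obstacle is the third step — arranging the diagonal blocks so that compatibility holds for every pair of PDAs at once; the observation that resolves it is that this requirement is precisely that the diagonal-block positions form a Latin square of order $d$, for which the cyclic shift is the simplest witness.
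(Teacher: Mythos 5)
Your proposal is correct and is essentially the paper's own construction: the paper obtains the same set $\mathcal{P}$ by taking the $d$ cyclically-rotated $1$-PDAs of Construction C1 at the $d\times d$ scale and replacing each integer $s$ by $I_{g/d}(s)$, which yields exactly your block picture (off-diagonal integers fixed in block $(a,b)$, diagonal integers cyclically rotated through the diagonal blocks), and then reads off the parameters of $L_{\mathcal{P},I_g}(I_d)$ the same way. The only difference is that where the paper asserts the Blackburn compatibility of the lifted PDAs w.r.t.\ $I_g(t)$ is ``easy to see,'' you supply the explicit block-coordinate verification via Lemma~\ref{lm:InCompatible}; this is a welcome filling-in of detail rather than a different route.
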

\begin{proof}
Let $P_0,\ldots,P_{d-1}$ be $(d,d,0,d^2)$ $1$-PDAs Blackburn compatible w.r.t. $I_d(t)$ obtained using the first part of Lemma~\ref{lm:bc12pdas}. Each of these PDAs has $d$ integers per column.

Let $q=g/d$. Replace an integer $s$ in $P_i$ with $I_q(s)$ to obtain a $(g,g,g-d,d^2)$ $q$-PDA, which we denote as $\tilde{P}_i$. It is easy to see that $\tilde{\mathcal{P}}=\{\tilde{P}_0,\ldots,\tilde{P}_{d-1}\}$ is a set of PDAs Blackburn-compatible w.r.t. $I_g(t)$. 

$L\triangleq L_{\tilde{\mathcal{P}},I_g}(I_d)$ has $\tilde{P}_i$, $i=0,\ldots,d-1$, on the diagonal and $P_*=I_g$ as off-diagonal blocks. Each integer in $[d^2]$ occurs $q$ times in every $P_i$, which adds up to a total of $qd=g$ times in $L$. $I_g$, by definition, contains one integer appearing $g$ times. Each column of $L$ has $P_i$ with $d$ integers on the diagonal and $I_g$ with one integer on $d-1$ off-diagonal positions. So, each column has $d+d-1=2d-1$ integers, or $gd-(2d-1)$ $\x$s. So, we see that the parameters of $L$ are as claimed.
\end{proof}
The PDAs in the above lemma are visualized as ``tilings'' of the identity PDA as seen in the example below.
\begin{example}
Let $g=6$, $d=3$. Since $q=2$, we obtain the PDAs given below by lifting the Blackburn compatible PDAs from ~\eqref{eq:1gP3} with $I_2(t)$.
\begin{align}
\label{eq:2gP6}
    P_0 = \begin{pmatrix}
        I_2(0)&I_2(1)&I_2(2)\\
        I_2(3)&I_2(4)&I_2(5)\\
        I_2(6)&I_2(7)&I_2(8)
    \end{pmatrix},
    P_1 = \begin{pmatrix}
        I_2(8)&I_2(1)&I_2(2)\\
        I_2(3)&I_2(0)&I_2(5)\\
        I_2(6)&I_2(7)&I_2(4)
    \end{pmatrix},
    P_2 = \begin{pmatrix}
        I_2(4)&I_2(1)&I_2(2)\\
        I_2(3)&I_2(8)&I_2(5)\\
        I_2(6)&I_2(7)&I_2(0)
    \end{pmatrix}.
\end{align}
$L_{\{P_0,P_1,P_2\},I_6}(I_3)$ is an $(18,18,13,15)$ 6-PDA.
\end{example}
Using integer-disjoint copying, the $d$ PDAs obtained from Lemma \ref{lm:BCboundZ} can be replicated to obtain a set of $md$ $(g,g,g-d,d^2)$ $(g/d)$-PDAs. This idea is captured in the following corollary.
\begin{corollary}[Tiling]\label{cor:BCboundZ}
For positive integers $g,b$ and $d=\gcd(g,b)$, there exists a set $\mathcal{P}$ of $b$ $(g,g,g-d,d^2)$ $(g/d)$-PDAs that are Blackburn compatible w.r.t. $I_g(t)$ (when $t$ does not appear in any PDA in $\mathcal{P}$). $L_{\mathcal{P},I_g}(I_b)$ is a $g$-regular $(bg,bg,bg-b-d+1,b(b+d-1))$ PDA.
\end{corollary}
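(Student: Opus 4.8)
The plan is to obtain the required $b$ PDAs by combining the Tiling construction of Lemma~\ref{lm:BCboundZ} with the integer-disjoint replication of Lemma~\ref{lm:bcReplicate}, and then to read off the parameters of the lifted array from the general regular lifting result, Corollary~\ref{co:regularTiling}. First I would observe that since $d=\gcd(g,b)$ we have $d\mid g$ and $d\mid b$; write $m=b/d$. Because $d$ is a divisor of $g$, Lemma~\ref{lm:BCboundZ} applies and yields a set $\tilde{\mathcal{P}}=\{\tilde{P}_0,\ldots,\tilde{P}_{d-1}\}$ of $d$ $(g,g,g-d,d^2)$ $(g/d)$-PDAs that are Blackburn compatible w.r.t. $I_g(t)$.

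Next I would apply Lemma~\ref{lm:bcReplicate} to $\tilde{\mathcal{P}}$ with multiplicity $m$ to form $m$ integer-disjoint copies, producing a set $\mathcal{P}$ of $md=b$ PDAs that remain Blackburn compatible w.r.t. $I_g(t)$. Since integer-disjoint copying only relabels the integers, each member of $\mathcal{P}$ is still a $(g,g,g-d,d^2)$ $(g/d)$-PDA, which establishes the first assertion of the corollary. I would then feed $\mathcal{P}$ and $P_*=I_g$ into Corollary~\ref{co:regularTiling}, taking as base PDA $P_b=I_b$, which is a $b$-regular $(b,b,b-1,1)$ PDA. The hypotheses of that corollary are exactly that every $P_i\in\mathcal{P}$ has $e=g-d$ stars per column, that each integer appears a total of $g$ times across the members of $\mathcal{P}$, and that $\mathcal{P}$ is Blackburn compatible w.r.t. $I_g$.

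The one point that requires care is the middle hypothesis, namely that each integer of a diagonal block occurs exactly $g$ times across all $b$ members of $\mathcal{P}$, so that the lift is genuinely $g$-regular. This is where the choice $d=\gcd(g,b)$ enters: within a single replicated group (one copy of $\tilde{\mathcal{P}}$, comprising $d$ PDAs that share the same $d^2$ integers) each integer appears $g/d$ times per PDA and hence $d\cdot(g/d)=g$ times in the group, while distinct groups use disjoint integers; likewise each off-diagonal $I_g$ contributes a single integer appearing $g$ times. With this in hand, the remainder is routine arithmetic: the number of stars per column is $(g-d)+(b-1)(g-1)=bg-b-d+1$ (one diagonal block contributing $g-d$ and the $b-1$ off-diagonal copies of $I_g$ each contributing $g-1$), and the regularity identity $|\mathcal{S}|=K(f-Z)/g$ then gives $|\mathcal{S}|=bg(b+d-1)/g=b(b+d-1)$, matching the claim. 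As a consistency check I would also recount $|\mathcal{S}|$ directly, adding the $md^2=bd$ integers from the diagonal blocks to the $b(b-1)$ integers of the off-diagonal blocks to recover $b(b+d-1)$.
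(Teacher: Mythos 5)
Your proposal is correct and follows essentially the same route as the paper: obtain the $d$ Blackburn-compatible PDAs from Lemma~\ref{lm:BCboundZ}, replicate them with disjoint integers (possible since $d\mid b$) to get $b$ PDAs, and then verify the parameters of the lift. The paper states this tersely and leaves the parameter check to the reader, whereas you carry it out explicitly (including the regularity count $d\cdot(g/d)=g$ per group and the star count $(g-d)+(b-1)(g-1)=bg-b-d+1$), all of which checks out.
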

\begin{proof}
Obtain $d$ PDAs using Lemma \ref{lm:BCboundZ}. Since $b$ is a multiple of $d$, the $d$ PDAs can be replicated with disjoint sets of integers to obtain $b$ PDAs with the same parameters.

The parameters of $L_{\mathcal{P},I_g}(I_b)$ are easy to verify.
\end{proof}
\begin{example}
For $n=6$, $b=4$, we have $d=2$ and $q=3$. Using Lemma~\ref{lm:bc12pdas} obtain PDAs $Q_0'$ and $Q_1'$ which are Blackburn-compatible w.r.t. $I_2(t)$. Obtain $Q_2'$ and $Q_3'$ by replacing integers in $Q_0'$ and $Q_1'$ with a new disjoint set. Lift these PDAs using $I_3(t)$ to obtain the required set $\mathcal{Q}=\{Q_0,Q_1,Q_2,Q_3\}$ shown below.
 \begin{align}
\label{eq:4gP6}
    Q_0 = \begin{pmatrix}
        I_3(0)&I_3(1)\\
        I_3(2)&I_3(3)
    \end{pmatrix},
    Q_1 = \begin{pmatrix}
        I_3(3)&I_3(1)\\
        I_3(2)&I_3(0)
    \end{pmatrix},
    Q_2 = \begin{pmatrix}
        I_3(4)&I_3(5)\\
        I_3(6)&I_3(7)
    \end{pmatrix},
    Q_3 = \begin{pmatrix}
        I_3(7)&I_3(5)\\
        I_3(6)&I_3(4)
    \end{pmatrix}.
\end{align}
Note that $L(\mathcal{Q},I_6)$ is a 6-regular $(24,24,19,20)$ PDA.
\end{example}

\subsection{Recursive construction} 
Repeated application of lifting can result in larger Blackburn-compatible PDAs starting from small-sized base PDAs. The following lemma is an important step in such recursive constructions.
\begin{lemma}\label{lem:recursive}
    Let $\mathcal{P}=\{P_0,\dots,P_{g-1}\}$ be a set of $g$ $n\times n$ PDAs Blackburn compatible w.r.t. a PDA $P_*$ with $\mathcal{S}$ as its set of integers. For sets of disjoint integers $\mathcal{S}_j,j\in[g(g-1)]\cup \{t\}$, with $|\mathcal{S}_j|=|\mathcal{S}|$, let $P_*(\mathcal{S}_j)$ indicate the PDA constructed by replacing integers in $P_*$ with those in $\mathcal{S}_j$. Let $\pi(l)=(l+1)\;\text{mod}\;g$ be the cyclic shift permutation on $[g]$. For $i\in[g]$, let 
\begin{equation}
    P_{i}' = \left(\begin{array}{cccc}
        P_{\pi^i(1)} & P_*(\mathcal{S}_0) & \dots & P_*(\mathcal{S}_{g-2})  \\
        P_*(\mathcal{S}_{g-1}) & P_{\pi^i(2)} & \dots & P_*(\mathcal{S}_{2g-3})\\
        \vdots & & \ddots & \vdots \\
        P_*(\mathcal{S}_{(g-1)^2}) &  & \dots & P_{\pi^i(g)}
    \end{array}\right),\quad
    P_{\x}' = \left(\begin{array}{cccc}
        P_*(\mathcal{S}_t) & \x_{n\times n} & \dots & \x_{n\times n} \\
        \x_{n\times n} & P_*(\mathcal{S}_t) & \dots & \x_{n\times n} \\
        \vdots &  & \ddots & \vdots \\
        \x_{n\times n} & \x_{n\times n} & \dots & P_*(\mathcal{S}_t)
    \end{array}\right).
\label{eq:recursive}
\end{equation}
Then, $\mathcal{P}'=\{P'_0,\ldots,P'_{g-1}\}$ is a set of $g$ $ng\times ng$ PDAs  Blackburn-compatible w.r.t. $P'_{\x}$. If $L_{\mathcal{P},P_*}(I_g)$ is a $g_b$-regular $ng\times ng$ PDA, then $L_{\mathcal{P}',P'_*}(I_g)$ is a $gg_b$-regular $ng^2\times ng^2$ PDA.
\end{lemma}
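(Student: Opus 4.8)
The plan is to recognize that each $P_i'$ is itself a lifting and then to reduce both assertions to the hypotheses already placed on $\mathcal{P}$ and $P_*$. Observe that $P_i'=L_{\mathcal{P}^{(i)},P_*}(I_g)$, where $\mathcal{P}^{(i)}=\{P_{\pi^i(1)},\dots,P_{\pi^i(g)}\}$ is the cyclic relabelling of the diagonal blocks and the off-diagonal blocks are integer-disjoint copies of $P_*$. Since pairwise Blackburn-compatibility is a symmetric property of a pair of constituent PDAs that does not depend on the diagonal slot they occupy, the reordered set $\mathcal{P}^{(i)}$ is again Blackburn-compatible w.r.t. $P_*$; hence, by the equivalence between Blackburn-compatibility and lifting established earlier, each $P_i'$ is a valid $ng\times ng$ PDA. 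It then remains to prove (a) that $P_0',\dots,P_{g-1}'$ are pairwise Blackburn-compatible w.r.t. $P_*'$, and (b) the regularity count for $L_{\mathcal{P}',P_*'}(I_g)$.

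For (a), I would fix $i\ne i'$, suppose $[P_i']_{ab}=[P_{i'}']_{cd}=s\ne\x$, write each index in the form (block, within-block), and split into cases according to where $s$ lives. If $s\in\mathcal{S}_j$ for some inner off-diagonal copy, then $s$ occupies the \emph{same} off-diagonal block position $(\alpha,\beta)$, $\alpha\ne\beta$, in both $P_i'$ and $P_{i'}'$ (the off-diagonal blocks are identical across all $P_i'$, being independent of $i$), so the mirrored cells $(a,d)$ and $(c,b)$ land in the off-diagonal, all-$\x$ block $(\alpha,\beta)$ of $P_*'$. If instead $s$ is an integer of the original PDAs, both occurrences sit in diagonal sub-blocks, in slots $\alpha$ and $\gamma$ respectively. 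When $\alpha\ne\gamma$ the mirrored cells again fall in off-diagonal all-$\x$ blocks of $P_*'$. The only delicate case is $\alpha=\gamma$: here the diagonal blocks are $P_{\pi^i(\alpha)}$ and $P_{\pi^{i'}(\alpha)}$, and because $\pi^i$ and $\pi^{i'}$ are distinct powers of the cyclic shift on $[g]$ (as $i\not\equiv i'\pmod g$) they map the common slot $\alpha$ to distinct indices, so these are \emph{distinct} members $P_u,P_v$ of $\mathcal{P}$ that share $s$. The original Blackburn-compatibility of $\mathcal{P}$ w.r.t. $P_*$ then forces the corresponding within-block mirrored positions to be $\x$ in $P_*$; since the diagonal block of $P_*'$ is $P_*(\mathcal{S}_t)$, a relabelling of $P_*$ that preserves the $\x$-pattern, the cells $(a,d)$ and $(c,b)$ are $\x$ in $P_*'$. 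This case is the main obstacle, and it is exactly where the cyclic shift of the diagonal blocks is essential: the shift guarantees that two diagonal sub-blocks sharing an integer and occupying the same slot come from two genuinely different $P_u,P_v$, so that the inherited hypothesis on $\mathcal{P}$ applies.

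For (b), I would first note that the hypothesis that $L_{\mathcal{P},P_*}(I_g)$ is $g_b$-regular forces $P_*$ to be $g_b$-regular (each integer of an off-diagonal $P_*$-copy occurs there with its multiplicity in $P_*$ and nowhere else) and forces every original integer to occur $g_b$ times across $P_0,\dots,P_{g-1}$. I would then count the multiplicity of each integer of $L_{\mathcal{P}',P_*'}(I_g)$ by type. An original integer occurs $g_b$ times inside each $P_i'$ (whose diagonal sub-blocks are a permutation of all of $P_0,\dots,P_{g-1}$) and appears in all $g$ diagonal blocks of the outer lifting, for a total of $gg_b$. An inner off-diagonal integer $s\in\mathcal{S}_j$ occurs $g_b$ times in the single copy $P_*(\mathcal{S}_j)$ inside each $P_i'$ (by $g_b$-regularity of $P_*$), again across $g$ diagonal blocks, giving $gg_b$. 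Finally, an integer of an outer off-diagonal copy of $P_*'$ occurs $g\cdot g_b$ times within that copy, since $P_*'$ carries $g$ diagonal sub-blocks $P_*(\mathcal{S}_t)$ each contributing $g_b$, and the copies are integer-disjoint, so once more $gg_b$. Since the outer lifting replaces a $g\times g$ base by $ng\times ng$ blocks the array is $ng^2\times ng^2$, and as every integer occurs exactly $gg_b$ times, $L_{\mathcal{P}',P_*'}(I_g)$ is $gg_b$-regular, which completes the proof.
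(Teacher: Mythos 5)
Your proposal is correct and follows essentially the same route as the paper's proof: the off-diagonal $P_*(\mathcal{S}_j)$ copies sit at identical block positions in every $P'_i$ so their mirrored cells land in the all-$\x$ off-diagonal blocks of $P'_*$, while the cyclic shift guarantees that two diagonal blocks in the same slot are distinct members of $\mathcal{P}$, reducing that case to the assumed compatibility w.r.t.\ $P_*$ (whose $\x$-pattern is preserved in $P_*(\mathcal{S}_t)$); the regularity count by integer type matches the paper's as well. Your write-up is in fact somewhat more explicit than the paper's, e.g.\ in verifying that each $P'_i$ is itself a valid PDA via the lifting equivalence and in treating the different-slot diagonal case separately.
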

\begin{proof}
A $P_*(S_j)$ block appears off-diagonal at the same location in all $P'_i$. This does not violate Blackburn compatibility w.r.t. $P'_*$, which has all-$\x$ arrays as off-diagonal blocks.  

A block that appears in the diagonal of $P_i'$ at its $k$-th position will be $P_{\pi^i(k)}$ and that of $P_j'$ will be  $P_{\pi^j(k)}$. Since $\pi^i(k)\ne \pi^j(k)$ for $i\ne j$, these blocks do not violate Blackburn compatibility with $P'_*$.

Finally, we prove the regularity claim. If $L_{\mathcal{P}}(I_g(s))$ is $g_b$-regular, then $P_*(\mathcal{S})$ is $g_b$-regular, and so is $P_*(\mathcal{S}_i)$ for every $i$. So, $P'_*$ is $gg_b$-regular, which implies that $L_{\mathcal{P}'}(I_g(s))$ is $gg_b$-regular in the off-diagonal blocks. Since every $P'_i$ is a diagonal-block-permuted version of $L_{\mathcal{P}}(I_g(s))$, $P'_i$ (which is a diagonal block in $L_{\mathcal{P}'}(I_g(s))$) is $g_b$-regular as well. Since the same set of integers appear in every $P'_i$, the diagonal blocks of $L_{\mathcal{P}'}(I_g(s))$ together are $gg_b$-regular.  
\end{proof}
We will now illustrate how to apply the above lemma repeatedly to construct larger PDAs. Consider the two $1$-PDAs, denoted $A_2(x)$ and $A'_2(x)$, which are Blackburn-compatible w.r.t. $I_2(t)$ for $t\notin\{x,x+1,x+2,x+3\}$ and obtained using Lemma \ref{lm:bc12pdas} (first part) for $n=2$. They are given by
\begin{equation}
    A_{2}(x) = \left(\begin{array}{cc}
        x & x+1 \\
        x+2 & x+3
    \end{array}\right),
    \,
    A_{2}'(x) = \left(\begin{array}{cc}
        x+3 & x+1 \\
        x+2 & x
    \end{array}\right).\label{eq:A2}
\end{equation}
\begin{lemma}\label{lem:A2r_compatibility}
Consider the following $2^r \times 2^r$ PDAs for $r> 1$.
\begin{equation}
\resizebox{.9\hsize}{!}{
    $A_{2^r}(x) = \left(\begin{array}{cc}
        I_{2^{r-1}}(x) & A_{2^{r-1}}(x+2) \\
        A_{2^{r-1}}'(x+2) & I_{2^{r-1}}(x+1) 
    \end{array}\right),
    \,
    A_{2^r}'(x) = \left(\begin{array}{cc}
        I_{2^{r-1}}(x+1) & A_{2^{r-1}}(x+2) \\
        A_{2^{r-1}}'(x+2) & I_{2^{r-1}}(x) 
    \end{array}\right)$,}\label{eq:A2r}
\end{equation} 
where $A_2(x)$ and $A_2'(x)$ are as defined in \eqref{eq:A2}. Let $\mathcal{S}$ be the set of integers in $A_{2^r}(x)$ and $A_{2^r}'(x)$. For $r\ge 1$ and $t \not\in \mathcal{S}$, $\mathcal{A}=\{A_{2^r}(x),A_{2^r}'(x)\}$ is a set of two Blackburn-compatible PDAs w.r.t. $I_{2^r}(t)$, and $L_{\mathcal{A}}(I_2(s))$ is a $2^r$-regular $(2^{r+1},2^{r+1},2^{r+1}-r-2,2r+4)$ PDA.
\end{lemma}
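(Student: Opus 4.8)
The plan is to prove the statement by induction on $r$, using the off-diagonal test for Blackburn compatibility w.r.t. $I_g(t)$ from Lemma~\ref{lm:InCompatible} and reading off the parameters of the lifted PDA from the general regular lifting result of Corollary~\ref{cor:genreg}. The base case $r=1$ is exactly the first part of Lemma~\ref{lm:bc12pdas}: $A_2(x)$ is $J_2([4])$ (shifted) and $A_2'(x)=\pi_{D,1}(A_2(x))$, so the pair is Blackburn-compatible w.r.t. $I_2(t)$, and $L_{\mathcal{A},I_2}(I_2)$ is the $(4,4,1,6)$ PDA of Lemma~\ref{lm:bc12pdas}(1c), matching $2^{r+1}-r-2=1$ and $2r+4=6$ at $r=1$.

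Before the inductive step for compatibility, I would record two auxiliary counts, themselves proved by induction: (i) every integer of $\mathcal{S}$ occurs exactly $2^{r-1}$ times in $A_{2^r}(x)$ and likewise in $A_{2^r}'(x)$, hence $2^r$ times across the pair; and (ii) every column of $A_{2^r}(x)$ and of $A_{2^r}'(x)$ contains exactly $e_r=2^r-r-1$ symbols $\x$. Both follow from the block decomposition: each diagonal block $I_{2^{r-1}}(\cdot)$ contributes $2^{r-1}-1$ stars per column and a single integer of total multiplicity $2^{r-1}$, while the off-diagonal blocks $A_{2^{r-1}}(x+2)$ and $A_{2^{r-1}}'(x+2)$ contribute $e_{r-1}$ stars per column and integers of multiplicity $2^{r-2}$ each by the inductive hypothesis; then $e_{r-1}+2^{r-1}-1=e_r$ and $2^{r-2}+2^{r-2}=2^{r-1}$ close the recursions. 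Granting compatibility, these counts let me apply Corollary~\ref{cor:genreg} with base PDA $I_2(s)$ (which is $2$-regular with one integer), constituent set $\mathcal{A}$ (each member with $e=e_r$ stars per column and each integer of total multiplicity $2^r$), and $P_*=I_{2^r}(t)$; this directly yields that $L_{\mathcal{A}}(I_2(s))$ is $2^r$-regular with $Z=1\cdot(2^r-1)+1\cdot(2^r-r-1)=2^{r+1}-r-2$ and $|\mathcal{S}|=2r+4$.

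The heart of the argument is the inductive step for Blackburn compatibility, checked via Lemma~\ref{lm:InCompatible}: whenever a common integer $s$ sits at $(i_0,j_0)$ in $A_{2^r}(x)$ and at $(i_1,j_1)$ in $A_{2^r}'(x)$, I must show $i_0\ne j_1$ and $i_1\ne j_0$. Writing $m=2^{r-1}$ and splitting each coordinate into a block index in $\{0,1\}$ and an offset in $[0,m)$, there are three families of common integers. For $s=x$ the two occurrences lie in the top-left block of $A_{2^r}(x)$ and the bottom-right block of $A_{2^r}'(x)$ (for $s=x+1$, vice versa), so the two row indices and the two column indices straddle $m$; the mirrored cells then fall in off-diagonal blocks and both inequalities hold trivially. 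For $s$ an integer of the off-diagonal blocks, $s$ occurs only inside the identical copies of $A_{2^{r-1}}(x+2)$ in block $(0,1)$ and $A_{2^{r-1}}'(x+2)$ in block $(1,0)$ shared by both PDAs, giving four location combinations.

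The main obstacle is the two \emph{cross} combinations of this last family, where $s$ is taken from the top-right block of one PDA and the bottom-left block of the other. Here both relevant offsets lie in $[0,m)$, so no coarse straddling argument applies, and the inequalities $i_0\ne j_1$, $i_1\ne j_0$ reduce to statements purely about the within-block offsets of $s$ in $A_{2^{r-1}}(x+2)$ versus $A_{2^{r-1}}'(x+2)$. These are precisely the off-diagonal conditions supplied by the inductive hypothesis, namely the compatibility of $A_{2^{r-1}}(x+2)$ and $A_{2^{r-1}}'(x+2)$ w.r.t. $I_{2^{r-1}}(t')$ read through Lemma~\ref{lm:InCompatible}, so the cross cases close by induction; the two remaining \emph{parallel} combinations (both copies in the same off-diagonal block) again straddle $m$ and are immediate. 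This completes the inductive step and, with the count-based application of Corollary~\ref{cor:genreg}, the proof.
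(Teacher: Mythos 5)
Your proof is correct, but it takes a genuinely different route from the paper's. The paper disposes of this lemma in one line by invoking the general recursive construction of Lemma~\ref{lem:recursive} with $P_*=I_2$ and the pair $\{A_2(x),A_2'(x)\}$, applied $r-1$ times; you instead verify Blackburn compatibility of the explicit matrices in \eqref{eq:A2r} directly by induction, through the off-diagonal criterion of Lemma~\ref{lm:InCompatible}, and then read off the parameters from Corollary~\ref{cor:genreg}. Once Lemma~\ref{lem:recursive} is unpacked the two arguments have comparable content, but your version has a concrete advantage: Lemma~\ref{lem:recursive} places the constituent PDAs $P_{\pi^i(k)}$ on the \emph{diagonal} blocks and copies of $P_*$ off-diagonal, whereas \eqref{eq:A2r} places the identity blocks $I_{2^{r-1}}(x)$, $I_{2^{r-1}}(x+1)$ on the diagonal and the smaller $A$'s off-diagonal, so the recursion of Lemma~\ref{lem:recursive} literally produces a role-swapped family with the same parameters rather than the matrices named in the statement; your direct induction works with the matrices as defined. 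Your case split --- the integers $x$ and $x+1$, whose two occurrences straddle the block boundary; the parallel off-diagonal combinations, which also straddle it; and the two cross combinations, which reduce exactly to the inductive off-diagonal hypothesis for $\{A_{2^{r-1}}(x+2),A_{2^{r-1}}'(x+2)\}$ --- is exhaustive (the integer sets of the diagonal and off-diagonal blocks are disjoint), and the counting recursions $e_r=e_{r-1}+2^{r-1}-1$ and $2^{r-2}+2^{r-2}=2^{r-1}$ feed Corollary~\ref{cor:genreg} to give exactly $Z=2^{r+1}-r-2$ and $2r+4$ integers.

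One small completeness point: besides the cross-compatibility of the pair, you should also record that each $A_{2^r}(x)$ is individually a valid PDA (the internal Blackburn property \ref{cond:blackburn}), since this is needed for the lifted array to be a PDA and is not implied by pairwise compatibility alone. It follows from the same induction with no new ideas: two occurrences of a common integer inside the same off-diagonal block are covered by the inductive PDA validity of $A_{2^{r-1}}(x+2)$ or $A_{2^{r-1}}'(x+2)$, while one occurrence in each off-diagonal block has its mirrored cells landing inside the diagonal blocks $I_{2^{r-1}}(x)$ and $I_{2^{r-1}}(x+1)$, where they are $\x$ precisely because of the inductive off-diagonal condition you already invoke for the cross cases. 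So this omission is cosmetic, not substantive.
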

\begin{proof}
Start with $P_*=I_2$, $P_2=A_2(x)$, $P_1=A_2'(x)$ and apply Lemma~\ref{lem:recursive} $(r-1)$ times recursively.
\end{proof}

{A summary of the above constructions is given in Table~\ref{tab:summary}.} The final parameters of a lifted PDA can be calculated using Corollary~\ref{cor:genreg}.
\afterpage{%
    \clearpage%
    \thispagestyle{empty}%
    \begin{landscape}%
        \centering %
\vspace*{\fill}
        \captionof{table}{Summary of constructions of Blackburn compatible PDAs.}\label{tab:summary}
\begin{adjustbox}{max width=\linewidth}
\hspace{-1.5cm}
\begin{tabular}{cccccccccccccc}
    \toprule
    {\textbf{Construction}} & {\textbf{Condition}} & {\textbf{$g_b$}} & {\textbf{$K_c$}} & {\textbf{$f_c$}} & {\textbf{$P_i$}} & $Z_i$ & $S_i$ & $g_i$ & {\textbf{$P_*$}} & $Z_*$ & $S_*$ & $g_*$ \\ \midrule
    C1 &  & $g$ & $g$ & $g$ & $\pi_{D,1}^{i}(J_g([g^2])), i\in [g]$ & $0$ & $g^2$ & $1$ & $I_g(t)$ & $g-1$ & $1$ & $g$ \\\\ 
    C2 &  & $g$ & $2g$ & $2g$ & $\pi_{D,2}^{i}(G_{2g}([g(2g-1)])), i\in [g]$ & $1$ & $g(2g-1)$ & $2$ & $I_{2g}(t)$ & $2g-1$ & $1$ & $2g$ \\\\
    T1 &  & $2$ & $g$ & $g$ & $\{J_g([g^2],J_g([g^2]^T)\}$ & $0$ & $g^2$ & $1$ & $H_g([\frac{g(g-1)}{2}])$ & $1$ & $\frac{g(g-1)}{2}$ & $2$ \\\\ 
    T2 &  & $2$ & $g$ & $g$ \\\\
    BW1 & $d|g$ & $d$ & $2g$ & $2g$ & \makecell[c]{$P_0=H_{2g}([g(2g-1)])$\\$P_i^{(j,k)}=\begin{cases}
    \pi_{AD,2}^i(P_0^{(j,j)}), & j=k,\\
    \pi_{AD,1}^{2i}(P_0^{(j,k)}), & j>k,\\
    (P_i^{(k,j)})^T, & j<k.
    \end{cases}$\\where $i\in[d]\setminus \{0\}$ and $j,k\in[d]$} & $1$ & $g(2g-1)$ & $2$ & $B_{\tilde{I}_{2d}}(J_{g/d})$ & $2g-\frac{g}{d}$ & $(\frac{g}{d})^2$ & $2d$ \\\\ 
    BW2 & $d^2|g$ & $d$ & $2g$ & $2g$ & \makecell[c]{$P_0=H_{2g}([g(2g-1)])$\\$P_i^{(j,k)}=\begin{cases}
    \pi^i_{AD,2}(P_0^{(\tilde{j},\tilde{j})}), & j=k, \tilde{j}=j-i\text{ mod } d,\\
    \pi^{2i}_{AD,1}(P_0^{(j,k)}), & j>k\\
    (P_i^{(k,j)})^T, & j<k.
    \end{cases}$\\where $i\in[d]\setminus \{0\}$ and $j,k\in[d]$} & $1$ & $g(2g-1)$ & $2$ & ${L}_{\mathcal{T}_{d^2},\tilde{I}_{2d}}(P_b^{(g,d^2)})$ & $2g-\frac{g}{d}-d+1$ & $\frac{g}{d}(\frac{g}{d}+d-1)$ & $2d$ \\\\
    BW3 & $d|g$ & $d$ & $g$ & $g$ & \makecell[c]{$P_0=H_g([g(g-1)/2])$\\$P_i^{(j,k)}=\begin{cases}
    P_0^{(\tilde{j},\tilde{j})}, & j=k, \tilde{j}=j-i\text{ mod } d,\\
    P_0^{(j,k)}, & j\ne k.
    \end{cases}$\\where $i\in[d]\setminus \{0\}$ and $j,k\in[d]$} & $1$ & $\frac{g(g-1)}{2}$ & $2$ & $B_{H_{g/d}}(I_{d})$ & $g+1-\frac{g}{d}$ & $\frac{g}{2d}(\frac{g}{d}-1)$ & $2d$ \\\\ 
    BW4 &  & $2n$ & $2n^2$ & $2n^2$ & \makecell[c]{$P_0=J_{2n^2}([4n^4])$\\$P_i^{(j,k)} = \begin{cases}
\pi^2_{AD,1}(P_{i-1}^{(j-1\text{ mod }n,j-1\text{ mod }n)}),&j=k,\\
\pi^2_{AD,1}(P_{i-1}^{(j,k)}),&j\ne k.
\end{cases}$\\$\tilde{P}_i^{(j,k)} = \begin{cases}
(P_i^{(j,j)})^T,&j=k,\\
\pi_{AD,1}(P_i^{(j,k)}),&j\ne k.
\end{cases}$\\where $i\in[n]\setminus \{0\}$ and $j,k\in[n]$} & $0$ & $4n^4$ & $1$ & ${L}_{\{\pi^{i}_{AD,2}(H_{2n}): i\in[n]\},\tilde{I}_{2n}}(I_n)$ & $2n^2-3n+2$ & $n(3n-2)$ & $2n$ \\\\
Tiling & $d|g$ & $d$ & $g$ & $g$ & $B_{I_{g/d}}(\pi_{D,1}^{i}(J_d)), i\in [d]$ & $g-d$ & $d^2$ & $\frac{g}{d}$ & $I_g(t)$ & $g-1$ & $1$ & $g$ \\\\
$2^r$-lifting & & $2$ & $2^r$ & $2^r$ & $\{A_{2^r}(x),A_{2^r}'(x)\}$ & $2^r-r-1$ & $2+2r$ & $2^{r-1}$ & $I_{2^r}(t)$ & $2^r-1$ & $1$ & $2^r$ \\
\bottomrule
\end{tabular}
\end{adjustbox}
\vspace*{\fill}
    \end{landscape}
    \clearpage
}

\subsection{Randomized construction}
For positive integers $b$, $r$, $e$, $\alpha$ and $\eta$, we propose a randomised algorithm $\tRnd^{b,r}_{e,\alpha,\eta}$ that, when successful, will construct $\eta r \times \alpha r$ PDAs $P_i$, $i=0,\ldots,b-1$, satisfying the following conditions:
\begin{itemize}
    \item every $P_i$ contains $e$ $\x$s per column,
    \item the set $\{P_0,\ldots,P_{b-1}\}$ is Blackburn-compatible w.r.t. $P_*=\begin{pmatrix}
        I_r(t_{1,1})&\cdots&I_r(t_{1,\alpha})\\
        \vdots&\ddots&\vdots\\
        I_r(t_{\eta,1})&\cdots&I_r(t_{\eta,\alpha})
    \end{pmatrix}$, where $t_{j,k}$ are integers not occurring in the $P_i$'s,
    \item $L_{\{P_0,\ldots,P_{b-1}\},P_*}(I_b)$ is $r$-regular.
\end{itemize}
 A psuedocode for the random construction is given in Algorithm~\ref{alg:random}. 
\begin{algorithm}[H]
\caption{$\tRnd^{b,r}_{e,\alpha,\eta}$}\label{alg:random}
\begin{algorithmic}[1]
\State \textbf{function} CHECK($[P_i]_{x,y}\gets v$): \textbf{return} TRUE, \textbf{if} setting $(x,y)$-th position of $P_i$ as $v$ does not violate PDA, number of $\x$s per column and Blackburn compatibility conditions, \textbf{else} \textbf{return} FALSE 
\State \textbf{define} $\text{FREE}(P_i,v)=\{(x,y): [P_i]_{x,y}=\x, \text{CHECK}([P_i]_{x,y}\gets v)=\text{TRUE}\}$ 
\State $P_i\gets$ all-$\x$ for all $i$, $s\gets b\alpha(\eta r-e)$ ($s$: number of integers occurring in all $P_i$)
\State $i=0$ (start with $P_0$)
\For{$v\in\{1,\ldots,s\}$}
    \Loop $\ r$ times
        \State \textbf{if} $\text{FREE}(P_i,v)$ is empty \textbf{then} declare FAILURE and \textbf{exit}
        \State For $(x,y)\in \text{FREE}(P_i,v)$, $\text{PENALTY}(x,y)=N_r+N_c+w_r+w_c$, where (1) $N_r$ and $N_c$ are the number of rows and columns of all other $P_j$, $j\ne i$, invalidated under Blackburn compatibility by setting $(x,y)$-th position of $P_i$ as $v$, respectively, (2) $w_r$ and $w_c$ are the number of integers in the $x$-th row and $y$-th column of $P_i$, respectively. 
        \State $(x^*,y^*)=\arg\min_{(x,y)\in \text{FREE}(P_i,v)}\text{PENALTY}(x,y)$ (if multiple, pick one at random)
        \State Set $[P_i]_{x^*,y^*}\gets v$, Move to next $i=i+1\mod b$
    \EndLoop
\EndFor
\end{algorithmic}
\end{algorithm}
The algorithm cycles through the $P_i$, starting with $P_0$, and adds integer $v$, one at a time, starting with $v=1$. Each integer is added $r$ times. Free locations for adding $v$ in $P_i$ are identified, and a penalty term is calculated for each free location. The penalty tends to favour locations that minimize ``wasting" of cells in other $P_j$, $j\ne i$, and those that improve column and row spread of the integers in $P_i$. If no free locations are found at any point, the algorithm fails. Since all conditions are maintained throughout, the algorithm outputs the required PDAs, if successful.

Results of successful runs of $\tRnd^{b,r}_{e,\alpha,\eta}$ are given in Table \ref{tab:rndres}.

\begin{table}[hbt]
    \centering
    \caption{Examples of successful random generation of $b$ $\eta r\times\alpha r$ PDAs with $e$ $\x$s per column in each PDA and each integer occurring $r$ times across all PDAs.}
    \label{tab:rndres}
    \begin{tabular}{|c|c|c|c|}
    \hline
        $\eta$ & $\alpha$ & $(r,e)$               & $b$  \\
        \hline
1 & 1 & \makecell[c]{(3,2), (4,1), (5,3), (6,4), (8,5), (10,8), (12,9), (15,14), (16,14),\\(20,18), (25,24), (30,28), (32,30), (50,48), (125,124)} & 2 \\ 
1 & 1 & (4,2), (5,3), (8,6), (10,8), (20,18), (25,23), (50,49) & 3 \\ 
1 & 1 & (3,0), (4,0), (5,0), (6,3), (10,6), (12,9), (25,22) & 5 \\ 
1 & 1 & (3,1), (5,1), (6,1), (25,21) & 10 \\ 
1 & 1 & (3,2) & 20 \\ 
1 & 1 & (5,1) & 50 \\ \hline 
2 & 1 & (5,6), (10,16), (25,47), (125,248) & 2 \\ 
2 & 1 & (5,5), (10,15), (25,46), (50,98) & 3 \\ 
2 & 1 & (5,1), (10,11), (25,44), (50,98) & 5 \\ 
2 & 1 & (5,2), (25,42) & 10 \\ 
2 & 1 & (5,4), (10,12) & 25 \\ 
2 & 1 & (5,1) & 50 \\ \hline 
4 & 1 & (5,12), (10,32), (25,96), (50,196), (125,496) & 2 \\ 
4 & 1 & (5,2), (10,21), (25,89), (50,194) & 5 \\ 
4 & 1 & (5,3), (25,83) & 10 \\ 
4 & 1 & (5,8), (10,21) & 25 \\ 
4 & 1 & (5,2) & 50 \\        \hline
    \end{tabular}
\end{table}
From the table, we observe that the randomized algorithm succeeds for a wide range of parameters of interest.
\section{Results}\label{sec:results}
We present lifted PDAs and corresponding coded caching schemes using the Blackburn-compatible PDAs constructed in the previous sections. To bring out the versatility of the lifting procedure, we present lifted PDAs constructed for a given number of users $K$ and a wide range of memory sizes and rates.
\subsection{$K$ is a power of $2$}
Starting with 2-PDAs, we consider lifting to obtain PDAs with a coding gain of $2^r$, $r=2,3,\ldots$.
\begin{theorem}[$2^r$-lifting]\label{lem:lifting2r}
\begin{enumerate}[nosep]
    \item Given a $2$-regular $(K_b,f_b,Z_b,\mathcal{S}_b)$ PDA $P_b$, there exists an $2^r$-regular $(2^rK_b,2^rf_b,(2^r-r-1)f_b\allowbreak+rZ_b,(2+2r)|\mathcal{S}_b|+K_bZ_b)$ PDA.
    \item For coded caching with $K$ users, if $2^r\mid K$ for an integer $r$, the memory-rate pair $\big(\frac{N}{K}(K(1-2^{-r}(r+1))+r),2^{-2r}K(r+1)\allowbreak-2^{-r}r\big)$ is achievable with linear subpacketizattion.
\end{enumerate}
\end{theorem}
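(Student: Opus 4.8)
The plan is to read off Part~1 as a direct instance of the general regular lifting of Corollary~\ref{cor:genreg}, using as constituent PDAs the Blackburn-compatible family supplied by Lemma~\ref{lem:A2r_compatibility}, and then to obtain Part~2 by specializing the base PDA and invoking Theorem~\ref{th:yanpda}. For Part~1 I would take the given $2$-regular base PDA $P_b$ (so $g_b=2$) and, for each of its $|\mathcal{S}_b|$ distinct integers, use an integer-disjoint copy of $\mathcal{A}=\{A_{2^r}(x),A_{2^r}'(x)\}$ to replace its two occurrences, while replacing every $\x$ by a fresh copy of $P_*=I_{2^r}(t)$. Lemma~\ref{lem:A2r_compatibility} guarantees that $\mathcal{A}$ is a set of exactly $g_b=2$ PDAs Blackburn-compatible w.r.t.\ $I_{2^r}(t)$, each $2^r\times 2^r$ with $e=2^r-r-1$ $\x$s per column and $2+2r$ shared integers, and that $L_{\mathcal{A}}(I_2)$ is $2^r$-regular; hence each integer of $\mathcal{A}$ appears $g=2^r$ times in total across the two constituents.

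These are precisely the hypotheses of Corollary~\ref{cor:genreg} with $m=n=2^r$, $Z_*=2^r-1$ and $g=2^r$, so the lift is $2^r$-regular on $2^rK_b$ users with subpacketization $2^rf_b$. The only remaining work is bookkeeping: the $\x$-count per column collapses as $Z_bZ_*+(f_b-Z_b)e = Z_b(2^r-1)+(f_b-Z_b)(2^r-r-1)=(2^r-r-1)f_b+rZ_b$, and the integer count is $(2+2r)|\mathcal{S}_b|$ from the integer-blocks plus one per $\x$-block, i.e.\ $K_bZ_b$ in all, giving $(2+2r)|\mathcal{S}_b|+K_bZ_b$, which matches the claim.

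For Part~2 I would instantiate the base as the densest $2$-regular PDA from Lemma~\ref{lm:diagPDA}, namely the $(K_b,K_b,1,K_b(K_b-1)/2)$ PDA with $K_b=K/2^r$ (valid since $2^r\mid K$), so that $f_b=K_b$, $Z_b=1$ and $|\mathcal{S}_b|=K_b(K_b-1)/2$. Feeding these into Part~1 gives subpacketization $f=2^rK_b=K$, which is linear in $K$, together with $Z=(2^r-r-1)K_b+r$ and $|\mathcal{S}|=(1+r)K_b(K_b-1)+K_b$. Theorem~\ref{th:yanpda} then yields $M/N=Z/f$ and $R=|\mathcal{S}|/f$; substituting $K_b=K/2^r$ and simplifying gives $M/N=\tfrac{1}{K}\big(K(1-2^{-r}(r+1))+r\big)$ and $R=\tfrac{(1+r)K_b-r}{2^r}=2^{-2r}K(r+1)-2^{-r}r$, which are exactly the stated memory-rate coordinates.

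Since the hard combinatorial content (verifying Blackburn compatibility) is already discharged by Lemma~\ref{lem:A2r_compatibility}, the main obstacle is not conceptual but lies in two places. First, one must match the number of constituent PDAs to the base regularity: exactly $g_b=2$ Blackburn-compatible PDAs are needed, which is why the mechanism $g=g_bg_i=2\cdot 2^{r-1}=2^r$ produces the advertised coding gain, and why it is essential that $A_{2^r}(x)$ and $A_{2^r}'(x)$ share a common integer set rather than being integer-disjoint copies. Second, the algebraic simplifications must be carried through carefully, both the collapse of the $\x$-count in Part~1 and the back-substitution $K_b=K/2^r$ in Part~2, where an off-by-one in $|\mathcal{S}_b|$ or in the $\x$-count would perturb the final $M/N$ and $R$ expressions.
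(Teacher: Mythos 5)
Your proposal is correct and follows essentially the same route as the paper: Part 1 is obtained by applying the general regular lifting corollary with the Blackburn-compatible pair $\{A_{2^r}(x),A'_{2^r}(x)\}$ from Lemma~\ref{lem:A2r_compatibility} as constituents and $I_{2^r}(t)$ replacing the $\x$s, and Part 2 specializes the base to the $(K/2^r,K/2^r,1,\cdot)$ 2-PDA of Lemma~\ref{lm:diagPDA}. Your bookkeeping of the $\x$-count, the integer count, and the back-substitution $K_b=K/2^r$ matches the paper's computation (and is in fact spelled out in more detail than the paper's own proof).
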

\begin{proof}
For the first part, use Corollary~\ref{co:regularTiling} to lift the given 2-regular base PDA $P_b$ using the Blackburn-compatible PDAs obtained from Lemma \ref{lem:A2r_compatibility} as constituent PDAs. Since $I_{2^r}(t)$ has $(2^r-1)$ $\x$'s per column and $A_{2^r}(x_i)$, $A'_{2^r}(x_i)$ have $(2^r-r-1)$ $\x$'s per column, the number of $\x$ in each column of the lifted PDA is $(Z)(2^r-1)+(f_b-Z_b)(2^r-r-1)=(2^r-r-1)f+rZ$. The other parameter values are easy to establish.

For the second part, let $q=\frac{K}{2^r}$. Construct a $(q,q,1,q(q-1)/2)$ $2$-PDA using Lemma~\ref{lm:diagPDA}. Lift this PDA using the above first part of the theorem to obtain a $(K,K,K(1-(r+1)2^{-r})+r,2^{-2r}K^2(r+1)-2^{-r}Kr)$ $2^r$-PDA. This would result in the memory-rate pair as claimed.
\end{proof}
If $n$ is a multiple of certain specific powers of 2, then the following theorem provides recursive lifting constructions for $n\times n$ PDAs.
\begin{theorem}\label{th:nested2g}
For $q,r \in \mathbb{Z}^+$ and $q\ge 2$, if $n=2^{\frac{r(r+1)}{2}-1} q$, then there exists a $2^r$-regular $(n,n,Z_r,S_r)$ PDA where $Z_r$ and $S_r$ are given by the following recursive relations:
\begin{align*}
    Z_i &= (2^{i}-2)Z_{i-1}+n_{i-1},\qquad\qquad\quad Z_1=1 \\
    S_i &= \frac{2^i(2^i-1)}{2}S_{i-1}+n_{i-1}Z_{i-1},\qquad S_1=\frac{q(q-1)}{2}.
\end{align*}
with $n_i=2^{\frac{i(i+1)}{2}-1} q$.
\end{theorem}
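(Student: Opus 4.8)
The plan is to prove the statement by induction on $r$, building the PDA by repeatedly applying the general regular lifting of Corollary~\ref{cor:genreg}, where the constituent $2$-PDAs come from Construction C2 of Lemma~\ref{lm:bc12pdas}. The guiding observation is that $n_i = 2^i n_{i-1}$ with $n_1 = q$, so each lifting step $i$ inflates the array size by the factor $2^i$ and raises the coding gain from $2^{i-1}$ to $2^i$; iterating from $i=2$ to $r$ gives $n_r = 2^{2+3+\cdots+r}q = 2^{r(r+1)/2-1}q$, matching the hypothesis on $n$.

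For the base case $i=1$, I would invoke Lemma~\ref{lm:diagPDA} to obtain the $2$-regular $(q,q,1,q(q-1)/2)$ PDA, whose parameters are exactly $(n_1,n_1,Z_1,S_1)$ with gain $2^1$. Assume inductively that a $2^{i-1}$-regular $(n_{i-1},n_{i-1},Z_{i-1},S_{i-1})$ PDA $P_b$ exists. To perform step $i$, I would set $g=2^{i-1}$ in Construction C2 to obtain a set $\mathcal{P}$ of $g=2^{i-1}$ two-regular $(2^i,2^i,1,2^{i-1}(2^i-1))$ PDAs that are Blackburn-compatible w.r.t.\ $P_*=I_{2^i}(t)$; crucially $|\mathcal{P}|=2^{i-1}$ equals the integer frequency $g_b$ of $P_b$, as Corollary~\ref{cor:genreg} demands. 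Since $P_*=I_{2^i}(t)$ is $2^i$-regular, each $P_i\in\mathcal{P}$ has $e=1$ $\x$ per column, and each integer of a C2-set occurs $2g=2^i$ times across its $2^{i-1}$ PDAs, all hypotheses of Corollary~\ref{cor:genreg} hold. Applying it (the corollary itself forms the $S_{i-1}$ integer-disjoint copies of $\mathcal{P}$, one per distinct integer of $P_b$, consistent with Lemma~\ref{lm:bcReplicate}) yields a $2^i$-regular PDA of size $2^i n_{i-1}=n_i$.

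It then remains to verify the two recurrences directly from Corollary~\ref{cor:genreg}. For the per-column $\x$-count, the corollary gives $Z_i = Z_{i-1}Z_* + (n_{i-1}-Z_{i-1})e$ with $Z_*=2^i-1$ and $e=1$, which simplifies to $(2^i-2)Z_{i-1}+n_{i-1}$, the claimed $Z_i$. For the integer count I would count contributions at integer-cells and $\x$-cells of $P_b$ separately: each of the $S_{i-1}$ distinct integers of $P_b$ is replaced by a fresh C2-set contributing $2^{i-1}(2^i-1)=\tfrac{2^i(2^i-1)}{2}$ distinct integers, while each of the $n_{i-1}Z_{i-1}$ many $\x$-cells is replaced by an integer-disjoint copy of $I_{2^i}(t)$ contributing one new integer, giving $S_i = \tfrac{2^i(2^i-1)}{2}S_{i-1} + n_{i-1}Z_{i-1}$. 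One may cross-check this against the global count $\tfrac{K_b m(f_b n - Z_i)}{g}$ using $S_{i-1}=n_{i-1}(n_{i-1}-Z_{i-1})/2^{i-1}$, which agrees.

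The main obstacle is the bookkeeping in the $S_i$ recurrence. The coefficient $\tfrac{2^i(2^i-1)}{2}$ is strictly smaller than the $2^{i-1}\cdot 2^{i-1}(2^i-1)$ one would obtain by naively using integer-disjoint copies of every constituent PDA; this saving is exactly the consequence of Construction C2 reusing the same $2^{i-1}(2^i-1)$ integers across all $2^{i-1}$ of its PDAs rather than fresh integers in each. The crux is therefore to confirm that this integer reuse still satisfies the Blackburn-compatibility of $\mathcal{P}$ w.r.t.\ $I_{2^i}(t)$ and the ``each integer occurs $2^i$ times across all $P_i$'s'' hypothesis of Corollary~\ref{cor:genreg}; once that is in place, both recurrences follow by routine substitution.
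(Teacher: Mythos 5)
Your proposal is correct and follows essentially the same route as the paper: a base $2$-PDA from Lemma~\ref{lm:diagPDA}, followed by repeated general regular lifting (Corollary~\ref{co:regularTiling}) with the sets of $2^{i-1}$ Blackburn-compatible $(2^i,2^i,1,2^{i-1}(2^i-1))$ $2$-PDAs from Construction C2, verifying the $Z_i$ recursion exactly as the paper does. Your explicit accounting for the $S_i$ recursion (fresh C2-sets at integer cells, disjoint copies of $I_{2^i}(t)$ at $\x$-cells) is a detail the paper leaves implicit, and it is correct.
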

\begin{proof}
Let $P_1$ be a $2$-regular $(q,q,1,\frac{q(q-1)}{2})$ PDA constructed using Lemma~\ref{lm:diagPDA}. For $i>1$, let $L_i(\cdot)$ denote the lifting using Corollary \ref{co:regularTiling} of a $2^{i-1}$-regular PDA to a $2^i$-regular PDA using the set of $2^{i-1}$ $(2^i,2^i,1,2^{i-1}(2^i-1))$ PDAs obtained using C2. 

Let $P_i=L_i(P_{i-1})))$. Now, $P_{i-1}$ is an $(n_{i-1},n_{i-1},Z_{i-1},S_{i-1})$ PDA with $n_1=q$. Each integer in $P_{i-1}$ is replaced by a $(2^i,2^i,1,2^{i-1}(2^i-1))$ PDA (1 $\x$ per column) and each $\x$ is replaced by $I_{2^i}$ ($2^i-1$ $\x$s per columns). So, in one column of the lifted PDA, there are $Z_{i-1}(2^i-1)+(n_{i-1}-Z_{i-1})=(2^i-2)Z_{i-1}+n_{i-1}$. Clearly, $n_i=2^i n_{i-1}=2^{2+\cdots+(i-1)}q=2^{\frac{i(i-1)}{2}-1}q$.

Then $P_r$ is a $2^r$-regular $(n,n,Z_r,S_r)$ PDA as claimed. 
\end{proof}
The above theorem helps to obtain integer-dense PDAs since the base PDA and constituent PDAs are 2-PDAs with a single $\x$ per column. It can be used to obtain $8$-regular PDAs when $n$ is divisible by $32$, $16$-regular PDAs when $n$ is divisible by 512 and so on.

The memory-rate and memory-subpacketization tradeoff for $K=64$ using the lifting schemes BW2, C2 and $2^r$-lifting is compared with some known schemes in Figs.~\ref{fig:K64} and \ref{fig:K64sub}. The line between the achievable memory-rate pairs in Fig.~\ref{fig:K64} is obtained using memory sharing. In addition, memory-rate tradeoff of lifting schemes for $K=256$ is shown in Fig. \ref{fig:K256}. 
Minimum values of $Z$ obtained for each coding gain $g$ are highlighted in red and are labelled using $(Z,g)$.
For the points highlighted in solid red, the constructions are provided in Table~\ref{tab:samplecon}.

\begin{figure}[htb]
\begin{subfigure}[b]{0.4\textwidth}
    	\definecolor{cadmiumgreen}{rgb}{0.0, 0.42, 0.24}
	\definecolor{mycolor1}{rgb}{0.80000,0.94700,0.99100}%
    \definecolor{wine}{HTML}{882255}
\begin{tikzpicture}[scale=1.0]

\begin{axis}[%
width=3in,
height=3in,
at={(0.758in,0.481in)},
scale only axis,
xmin=0,
xmax=0.9,
xlabel style={font=\color{white!15!black}},
xlabel={Normalized memory},
ymin=0,
ymax=40,
ylabel style={font=\color{white!15!black}},
ylabel={Rate},
axis background/.style={fill=white},
legend style={legend cell align=left, align=left, draw=white!15!black}
]
\addplot [color=red,very thick]
  table[row sep=crcr]{%
0	64\\
0.015625	31.5\\
0.1875	13\\
0.5	4\\
0.75	1\\
0.890625	0.21875\\
0.984375	0.015625\\
};
\addlegendentry{BW2, C2 and $2^r$-lifting}
\addplot[only marks, mark=*, mark options={}, mark size=1.000pt, draw=mycolor1, fill=mycolor1] table[row sep=crcr]{%
x	y\\
0	64\\
0.015625	31.5\\
0.03125	31\\
0.0625	30\\
0.125	28\\
0.15625	27\\
0.171875	26.5\\
0.1875	26\\
0.1875	13\\
0.21875	25\\
0.21875	12.5\\
0.234375	24.5\\
0.234375	12.25\\
0.25	24\\
0.25	12\\
0.265625	23.5\\
0.28125	23\\
0.28125	11.5\\
0.296875	22.5\\
0.296875	11.25\\
0.3125	22\\
0.3125	11\\
0.328125	21.5\\
0.328125	10.75\\
0.34375	21\\
0.34375	10.5\\
0.359375	20.5\\
0.375	20\\
0.375	10\\
0.390625	19.5\\
0.40625	19\\
0.421875	18.5\\
0.421875	9.25\\
0.4375	18\\
0.4375	9\\
0.453125	17.5\\
0.46875	17\\
0.46875	8.5\\
0.484375	16.5\\
0.5	16\\
0.5	8\\
0.5	4\\
0.515625	7.75\\
0.53125	7.5\\
0.53125	3.75\\
0.546875	7.25\\
0.546875	3.625\\
0.5625	7\\
0.5625	3.5\\
0.578125	3.375\\
0.59375	3.25\\
0.609375	6.25\\
0.625	6\\
0.625	3\\
0.65625	5.5\\
0.65625	2.75\\
0.671875	2.625\\
0.6875	2.5\\
0.703125	4.75\\
0.71875	2.25\\
0.734375	4.25\\
0.734375	2.125\\
0.75	4\\
0.75	2\\
0.75	1\\
0.765625	1.875\\
0.765625	0.9375\\
0.78125	1.75\\
0.796875	0.8125\\
0.8125	1.5\\
0.8125	0.75\\
0.828125	1.375\\
0.84375	0.625\\
0.859375	0.5625\\
0.875	1\\
0.875	0.5\\
0.890625	0.4375\\
0.890625	0.21875\\
0.90625	0.375\\
0.90625	0.1875\\
0.921875	0.3125\\
0.921875	0.15625\\
0.9375	0.25\\
0.9375	0.125\\
0.953125	0.09375\\
0.96875	0.0625\\
0.984375	0.015625\\
1	0\\
};
\addlegendentry{Other proposed constructions}
    \addlegendimage{no markers,thick,green}
    \addlegendentry{M-N scheme \cite{maddah2014fundamental}}

\addplot [color=magenta,thick]
  table[row sep=crcr]{%
0	64.2413162705667\\
0.0313253012048193	30.9689213893967\\
0.0626506024096386	19.963436928702\\
0.125301204819277	11.1334552102377\\
0.250602409638554	5.37477148080439\\
0.501204819277108	1.91956124314442\\
0.860240963855422	0.255941499085923\\
0.921686746987952	0.127970749542962\\
0.953012048192771	0.0895795246800789\\
0.968674698795181	0.0511882998171817\\
1	0\\
};
\addlegendentry{Tang \textit{et al.} \cite{tang2018coded}}
    \addlegendimage{no markers,blue,thick}
    \addlegendentry{Grouping, c=8 \cite{shanmugam2016finite}}
    \addlegendimage{no markers,wine,thick}
    \addlegendentry{Grouping, c=4 \cite{shanmugam2016finite}}
	\draw[green,thick] (0,64) foreach \x [evaluate=\x as \r using (64-\x)/(1+\x)] in {1,...,64} { -- (\x/64,\r) };
	\draw[blue,thick] (0,64) foreach \x [evaluate=\x as \r using 8*(8-\x)/(1+\x)] in {1,...,8} { -- (\x/8,\r) };
	\draw[wine,thick] (0,64) foreach \x [evaluate=\x as \r using 4*(16-\x)/(1+\x)] in {1,...,16} { -- (\x/16,\r) };
\addplot [color=cadmiumgreen,only marks,thick]
  table[row sep=crcr]{%
0.875	0.88889\\
};
\addlegendentry{$(64,8,1)$-BIBD \cite{agrawal2019coded}}
\node[label={[outer sep=-2pt]45:\tiny{$(1,2)$}}] at (axis cs: 0.015625,31.5) {} ;
\node[label={[outer sep=-2pt]180:\tiny{$(12,4)$}}] at (axis cs: 0.1875,13) {} ;
\path[<->, draw] (axis cs: 0.5,4) to[out = 100, in = 300]
        (axis cs: 0.49, 20) node[above] {\tiny{$(32,8)$}};
\path[<->, draw] (axis cs: 0.75,1) to[out = 160, in = 265]
        (axis cs: 0.75, 20) node[above] {\tiny{$(48,16)$}};
\path[<->, draw] (axis cs: 0.890625,0.21875) to[out = 90, in = 265]
        (axis cs: 0.85, 15) node[above] {\tiny{$(57,32)$}};
\path[<->, draw] (axis cs: 0.984375,0.015625) to[out = 80, in = 280]
        (axis cs: .95, 8) node[above] {\tiny{$(63,64)$}};
\addplot[only marks, mark options={solid,draw=red,fill=red}] table[row sep=crcr]{%
x	y\\
0	64\\
0.015625	31.5\\
0.1875	13\\
0.5	4\\
0.75	1\\
0.890625	0.21875\\
0.984375	0.015625\\
};

\end{axis}
\end{tikzpicture}%
    \caption{Rate vs memory, $K=64$.}
    \label{fig:K64}
\end{subfigure}\hfill
\begin{subfigure}[b]{0.4\textwidth}
\definecolor{mycolor1}{rgb}{0.80000,0.94700,0.99100}%
\begin{tikzpicture}[scale=1.0]

\begin{axis}[%
clip mode=individual,
clip=true,
width=3in,
height=3.6in,
at={(0.758in,0.481in)},
xmin=0,
xmax=1.02,
ymin=0,
ymax=150,
axis background/.style={fill=white},
xlabel = {Normalized memory},
ylabel = {Rate},
legend style={legend cell align=left, align=left, draw=white!15!black}
]
    \addplot [domain=0:256, samples=257, black,dashed]({x/256},{(256-x)/(1+x)});
    \addlegendentry{MN-scheme}
    \addplot[black,dotted] coordinates {(0,256) (1,0)};
    \addlegendentry{Uncoded caching}
\addplot[only marks, mark=*, mark options={}, mark size=1.000pt, draw=mycolor1, fill=mycolor1] table[row sep=crcr]{%
x	y\\
0.00390625	127.5\\
0.0078125	127\\
0.015625	126\\
0.03125	124\\
0.0625	120\\
0.0703125	119\\
0.078125	118\\
0.08984375	116.5\\
0.09375	116\\
0.1015625	115\\
0.1015625	57.5\\
0.109375	114\\
0.1171875	113\\
0.1171875	56.5\\
0.12109375	56.25\\
0.125	112\\
0.125	56\\
0.12890625	111.5\\
0.1328125	111\\
0.13671875	110.5\\
0.140625	110\\
0.140625	55\\
0.1484375	109\\
0.15234375	54.25\\
0.15625	108\\
0.15625	54\\
0.1640625	107\\
0.1640625	53.5\\
0.171875	106\\
0.171875	53\\
0.17578125	105.5\\
0.17578125	52.75\\
0.1796875	52.5\\
0.1875	104\\
0.1875	52\\
0.19140625	103.5\\
0.1953125	51.5\\
0.19921875	102.5\\
0.203125	102\\
0.203125	51\\
0.2109375	101\\
0.2109375	50.5\\
0.21875	100\\
0.21875	50\\
0.22265625	99.5\\
0.22265625	49.75\\
0.2265625	99\\
0.234375	98\\
0.234375	49\\
0.2421875	97\\
0.24609375	96.5\\
0.24609375	48.25\\
0.25	96\\
0.25	48\\
0.25390625	95.5\\
0.2578125	95\\
0.2578125	47.5\\
0.26171875	94.5\\
0.26171875	47.25\\
0.265625	94\\
0.265625	47\\
0.26953125	93.5\\
0.2734375	93\\
0.2734375	46.5\\
0.28125	92\\
0.28125	46\\
0.28515625	91.5\\
0.2890625	91\\
0.29296875	90.5\\
0.296875	90\\
0.296875	45\\
0.30078125	89.5\\
0.30078125	44.75\\
0.3046875	89\\
0.30859375	88.5\\
0.3125	88\\
0.3125	44\\
0.3125	22\\
0.31640625	87.5\\
0.31640625	43.75\\
0.3203125	87\\
0.32421875	86.5\\
0.328125	86\\
0.328125	43\\
0.33203125	85.5\\
0.3359375	85\\
0.33984375	84.5\\
0.34375	84\\
0.34375	42\\
0.34375	21\\
0.34765625	83.5\\
0.3515625	83\\
0.3515625	41.5\\
0.35546875	82.5\\
0.359375	82\\
0.359375	20.5\\
0.36328125	81.5\\
0.3671875	81\\
0.3671875	20.25\\
0.37109375	80.5\\
0.375	80\\
0.375	40\\
0.37890625	79.5\\
0.3828125	79\\
0.3828125	39.5\\
0.3828125	19.75\\
0.38671875	78.5\\
0.38671875	39.25\\
0.390625	78\\
0.390625	19.5\\
0.39453125	77.5\\
0.3984375	77\\
0.3984375	38.5\\
0.3984375	19.25\\
0.40234375	76.5\\
0.40625	76\\
0.40625	19\\
0.41015625	75.5\\
0.41015625	37.75\\
0.4140625	75\\
0.4140625	18.75\\
0.41796875	74.5\\
0.41796875	18.625\\
0.421875	74\\
0.421875	37\\
0.42578125	73.5\\
0.42578125	18.375\\
0.4296875	73\\
0.43359375	72.5\\
0.43359375	36.25\\
0.4375	72\\
0.4375	36\\
0.4375	18\\
0.44140625	71.5\\
0.4453125	71\\
0.4453125	35.5\\
0.44921875	70.5\\
0.453125	70\\
0.453125	17.5\\
0.45703125	69.5\\
0.45703125	34.75\\
0.4609375	69\\
0.4609375	17.25\\
0.46484375	68.5\\
0.46484375	34.25\\
0.46875	68\\
0.46875	34\\
0.46875	17\\
0.47265625	67.5\\
0.47265625	16.875\\
0.4765625	67\\
0.4765625	16.75\\
0.48046875	66.5\\
0.484375	66\\
0.484375	33\\
0.484375	16.5\\
0.48828125	65.5\\
0.4921875	65\\
0.4921875	32.5\\
0.49609375	64.5\\
0.49609375	16.125\\
0.5	64\\
0.5	32\\
0.5	16\\
0.50390625	31.75\\
0.5078125	31.5\\
0.5078125	15.75\\
0.51171875	15.625\\
0.515625	31\\
0.515625	15.5\\
0.51953125	15.375\\
0.5234375	15.25\\
0.52734375	30.25\\
0.53125	30\\
0.53125	15\\
0.53515625	14.875\\
0.5390625	29.5\\
0.546875	29\\
0.546875	14.5\\
0.55078125	28.75\\
0.5625	28\\
0.5625	14\\
0.56640625	13.875\\
0.5703125	13.75\\
0.57421875	27.25\\
0.578125	13.5\\
0.5859375	26.5\\
0.58984375	13.125\\
0.59375	13\\
0.59765625	25.75\\
0.6015625	12.75\\
0.60546875	25.25\\
0.609375	25\\
0.609375	12.5\\
0.6171875	12.25\\
0.62109375	24.25\\
0.625	24\\
0.625	12\\
0.625	6\\
0.62890625	11.875\\
0.6328125	23.5\\
0.6328125	11.75\\
0.63671875	11.625\\
0.640625	11.5\\
0.64453125	22.75\\
0.64453125	11.375\\
0.64453125	5.6875\\
0.6484375	11.25\\
0.6484375	5.625\\
0.65625	22\\
0.65625	11\\
0.66015625	10.875\\
0.66015625	5.4375\\
0.6640625	10.75\\
0.6640625	5.375\\
0.66796875	21.25\\
0.671875	10.5\\
0.671875	5.25\\
0.67578125	10.375\\
0.6796875	20.5\\
0.68359375	10.125\\
0.68359375	5.0625\\
0.6875	20\\
0.6875	10\\
0.69140625	19.75\\
0.69140625	9.875\\
0.6953125	4.875\\
0.69921875	19.25\\
0.69921875	9.625\\
0.703125	19\\
0.703125	9.5\\
0.703125	4.75\\
0.70703125	9.375\\
0.70703125	4.6875\\
0.7109375	18.5\\
0.7109375	9.25\\
0.71484375	18.25\\
0.71875	18\\
0.71875	9\\
0.71875	4.5\\
0.72265625	17.75\\
0.72265625	8.875\\
0.7265625	17.5\\
0.7265625	4.375\\
0.73046875	17.25\\
0.734375	17\\
0.734375	8.5\\
0.734375	4.25\\
0.73828125	16.75\\
0.73828125	8.375\\
0.7421875	8.25\\
0.7421875	4.125\\
0.74609375	8.125\\
0.75	16\\
0.75	8\\
0.75	4\\
0.75390625	7.875\\
0.75390625	3.9375\\
0.7578125	7.75\\
0.76171875	7.625\\
0.76171875	3.8125\\
0.765625	7.5\\
0.765625	3.75\\
0.76953125	7.375\\
0.7734375	7.25\\
0.7734375	3.625\\
0.77734375	7.125\\
0.78125	7\\
0.78125	3.5\\
0.78515625	3.4375\\
0.7890625	6.75\\
0.7890625	3.375\\
0.79296875	6.625\\
0.796875	3.25\\
0.80078125	6.375\\
0.80078125	3.1875\\
0.8046875	6.25\\
0.8046875	3.125\\
0.80859375	3.0625\\
0.8125	6\\
0.8125	3\\
0.8125	1.5\\
0.81640625	5.875\\
0.8203125	5.75\\
0.8203125	2.875\\
0.82421875	2.8125\\
0.82421875	1.40625\\
0.828125	5.5\\
0.828125	2.75\\
0.83203125	2.6875\\
0.83203125	1.34375\\
0.8359375	2.625\\
0.8359375	1.3125\\
0.83984375	5.125\\
0.84375	2.5\\
0.84765625	4.875\\
0.84765625	1.21875\\
0.8515625	4.75\\
0.8515625	1.1875\\
0.85546875	4.625\\
0.85546875	2.3125\\
0.859375	4.5\\
0.859375	2.25\\
0.859375	1.125\\
0.86328125	4.375\\
0.86328125	1.09375\\
0.8671875	4.25\\
0.8671875	2.125\\
0.87109375	2.0625\\
0.875	4\\
0.875	2\\
0.875	1\\
0.87890625	1.9375\\
0.8828125	1.875\\
0.8828125	0.9375\\
0.88671875	1.8125\\
0.88671875	0.90625\\
0.890625	1.75\\
0.890625	0.875\\
0.89453125	0.84375\\
0.8984375	1.625\\
0.8984375	0.8125\\
0.90234375	1.5625\\
0.90625	1.5\\
0.90625	0.75\\
0.91015625	1.4375\\
0.9140625	1.375\\
0.9140625	0.6875\\
0.9140625	0.34375\\
0.91796875	0.65625\\
0.91796875	0.328125\\
0.921875	1.25\\
0.921875	0.625\\
0.92578125	1.1875\\
0.92578125	0.296875\\
0.9296875	0.5625\\
0.9296875	0.28125\\
0.93359375	1.0625\\
0.93359375	0.53125\\
0.9375	1\\
0.9375	0.5\\
0.9375	0.25\\
0.94140625	0.46875\\
0.94140625	0.234375\\
0.9453125	0.4375\\
0.9453125	0.21875\\
0.94921875	0.203125\\
0.953125	0.375\\
0.953125	0.1875\\
0.95703125	0.34375\\
0.9609375	0.3125\\
0.9609375	0.15625\\
0.96484375	0.140625\\
0.96484375	0.0703125\\
0.96875	0.25\\
0.96875	0.125\\
0.96875	0.0625\\
0.97265625	0.109375\\
0.97265625	0.0546875\\
0.9765625	0.09375\\
0.9765625	0.046875\\
0.98046875	0.078125\\
0.98046875	0.0390625\\
0.984375	0.0625\\
0.984375	0.03125\\
0.98828125	0.0234375\\
0.9921875	0.015625\\
0.99609375	0.00390625\\
1	0\\
};
\node[label={[outer sep=-2pt]45:\tiny{$(1,2)$}}] at (axis cs: 0.00390625,127.5) {} ;
\node[label={[outer sep=-2pt]90:\tiny{$(24,4)$}}] at (axis cs: 0.09375, 58) {} ;
\node[label={[outer sep=-2pt]180:\tiny{$(80,8)$}}] at (axis cs: 0.3125,22) {} ;
\path[->, draw] (axis cs: 0.625,6) to[out = 100, in = 260]
        (axis cs: 0.5, 75) node[above] {\tiny{$(160,16)$}};
\path[->, draw] (axis cs: 0.8125,1.5) to[out = 90, in = 280]
        (axis cs: 0.6, 55) node[above] {\tiny{$(208,32)$}};
\path[->, draw] (axis cs: 0.9140625,0.34375) to[out = 100, in = 280]
        (axis cs: 0.65, 90) node[above] {\tiny{$(234,64)$}};
\path[->, draw] (axis cs: 0.96484375,0.0703125) to[out = 90, in = 280]
        (axis cs: 0.85, 55) node[above] {\tiny{$(247,128)$}};
\path[->, draw] (axis cs: 0.99609375,0.00390625) to[out = 85, in = 300]
        (axis cs: 0.95, 40) node[above] {\tiny{$(255,256)$}};
\addplot[only marks, mark options={solid,draw=red2,fill=red2}] table[row sep=crcr]{%
x	y\\
0.00390625	127.5\\
0.09375	58\\
0.3125	22\\
0.625	6\\
0.8125	1.5\\
0.9140625	0.34375\\
0.96484375	0.0703125\\
0.99609375	0.00390625\\%
    };
    \addplot[only marks, mark options={solid,draw=red,fill=red}] coordinates {(0.3125,22) (0.96484375, 0.0703125)};
\end{axis}
\end{tikzpicture}%
\caption{Rate vs memory, $K=256$.}
\label{fig:K256}
\end{subfigure}

\begin{subfigure}[b]{0.4\textwidth}
	\definecolor{cadmiumgreen}{rgb}{0.0, 0.42, 0.24}
	\definecolor{mycolor1}{rgb}{0.80000,0.94700,0.99100}%
    \definecolor{wine}{HTML}{882255}
\begin{tikzpicture}[scale=1.0]

\begin{axis}[%
width=3in,
height=3in,
at={(0.758in,0.481in)},
scale only axis,
xmin=0,
xmax=1,
xlabel style={font=\color{white!15!black}},
xlabel={Normalized memory},
ymode=log,
ymin=1,
ymax=1e+10,
yminorticks=true,
ylabel style={font=\color{white!15!black}},
ylabel={Subpacketization},
axis background/.style={fill=white},
legend style={at={(0.5,0.98)},anchor=north,legend cell align=left, align=left, draw=white!15!black}
]
\addplot [only marks, mark=*, mark size=1.500pt, mark options={}, draw=red, fill=red,very thick]
  table[row sep=crcr]{%
x	y\\
0.015625	64\\
0.1875	64\\
0.5	64\\
0.75	64\\
0.890625	64\\
0.984375	64\\
};
\addlegendentry{BW2, C2 and $2^r$-lifting}

\addplot [only marks, mark=*, mark size=0.500pt, mark options={}, draw=mycolor1, fill=mycolor1,very thick]
  table[row sep=crcr]{%
x	y\\
0.015625	64\\
0.03125	64\\
0.0625	64\\
0.125	64\\
0.15625	64\\
0.171875	64\\
0.1875	64\\
0.1875	64\\
0.21875	64\\
0.21875	64\\
0.234375	64\\
0.234375	64\\
0.25	64\\
0.25	64\\
0.265625	64\\
0.28125	64\\
0.28125	64\\
0.296875	64\\
0.296875	64\\
0.3125	64\\
0.3125	64\\
0.328125	64\\
0.328125	64\\
0.34375	64\\
0.34375	64\\
0.359375	64\\
0.375	64\\
0.375	64\\
0.390625	64\\
0.40625	64\\
0.421875	64\\
0.421875	64\\
0.4375	64\\
0.4375	64\\
0.453125	64\\
0.46875	64\\
0.46875	64\\
0.484375	64\\
0.5	64\\
0.5	64\\
0.5	64\\
0.515625	64\\
0.53125	64\\
0.53125	64\\
0.546875	64\\
0.546875	64\\
0.5625	64\\
0.5625	64\\
0.578125	64\\
0.59375	64\\
0.609375	64\\
0.625	64\\
0.625	64\\
0.65625	64\\
0.65625	64\\
0.671875	64\\
0.6875	64\\
0.703125	64\\
0.71875	64\\
0.734375	64\\
0.734375	64\\
0.75	64\\
0.75	64\\
0.75	64\\
0.765625	64\\
0.765625	64\\
0.78125	64\\
0.796875	64\\
0.8125	64\\
0.8125	64\\
0.828125	64\\
0.84375	64\\
0.859375	64\\
0.875	64\\
0.875	64\\
0.890625	64\\
0.890625	64\\
0.90625	64\\
0.90625	64\\
0.921875	64\\
0.921875	64\\
0.9375	64\\
0.9375	64\\
0.953125	64\\
0.96875	64\\
0.984375	64\\
};
\addlegendentry{Other proposed constructions}

\addplot [only marks, mark=triangle*, mark size=1.500pt, mark options={}, draw=green, fill=green,thick]
  table[row sep=crcr]{%
0.015625	64\\
0.03125	2016\\
0.046875	41664\\
0.0625	635376\\
0.078125	7624512\\
0.09375	74974368\\
0.109375	621216192\\
0.125	4426165368\\
0.140625	27540584512\\
0.15625	151473214816\\
0.171875	743595781824\\
0.1875	3284214703056\\
0.203125	13136858812224\\
0.21875	47855699958816\\
0.234375	159518999862720\\
0.25	488526937079580\\
0.265625	1.37937017528352e+15\\
0.28125	3.60168879101808e+15\\
0.296875	8.71987812562272e+15\\
0.3125	1.96197257826511e+16\\
0.328125	4.11079968779356e+16\\
0.34375	8.03474484432379e+16\\
0.359375	1.46721427592e+17\\
0.375	2.50649105469666e+17\\
0.390625	4.01038568751466e+17\\
0.40625	6.01557853127199e+17\\
0.421875	8.46636978475316e+17\\
0.4375	1.11877029298524e+18\\
0.453125	1.3888182947403e+18\\
0.46875	1.62028801053035e+18\\
0.484375	1.77709007606554e+18\\
0.5	1.83262414094259e+18\\
0.515625	1.77709007606554e+18\\
0.53125	1.62028801053035e+18\\
0.546875	1.3888182947403e+18\\
0.5625	1.11877029298524e+18\\
0.578125	8.46636978475316e+17\\
0.59375	6.01557853127199e+17\\
0.609375	4.01038568751466e+17\\
0.625	2.50649105469666e+17\\
0.640625	1.46721427592e+17\\
0.65625	8.03474484432379e+16\\
0.671875	4.11079968779356e+16\\
0.6875	1.96197257826511e+16\\
0.703125	8.71987812562272e+15\\
0.71875	3.60168879101808e+15\\
0.734375	1.37937017528352e+15\\
0.75	488526937079580\\
0.765625	159518999862720\\
0.78125	47855699958816\\
0.796875	13136858812224\\
0.8125	3284214703056\\
0.828125	743595781824\\
0.84375	151473214816\\
0.859375	27540584512\\
0.875	4426165368\\
0.890625	621216192\\
0.90625	74974368\\
0.921875	7624512\\
0.9375	635376\\
0.953125	41664\\
0.96875	2016\\
0.984375	64\\
};
\addlegendentry{M-N scheme}

\addplot [only marks, mark=pentagon*,
mark size=1.500pt, mark options={}, draw=magenta, fill=magenta,thick]
  table[row sep=crcr]{%
0.0313253012048193	31.5562888865655\\
0.0626506024096386	258.918275433181\\
0.125301204819277	4166.23783430122\\
0.250602409638554	67038.6733532926\\
0.501204819277108	67038.6733532926\\
0.860240963855422	392780.490908326\\
0.921686746987952	56649.9158052173\\
0.953012048192771	7510.78893144068\\
0.968674698795181	1952.88386314809\\
};
\addlegendentry{Tang \textit{et al.}}

\addplot [only marks, mark=square*, mark size=1.500pt, mark options={}, draw=blue, fill=blue,thick]
  table[row sep=crcr]{%
0.125	8\\
0.25	28\\
0.375	56\\
0.5	70\\
0.625	56\\
0.75	28\\
0.875	8\\
};
\addlegendentry{Grouping, c=8}

\addplot [only marks, mark=diamond*, mark size=1.500pt, mark options={}, draw=wine, fill=wine, thick]
  table[row sep=crcr]{%
0.0625	16\\
0.125	120\\
0.1875	560\\
0.25	1820\\
0.3125	4368\\
0.375	8008\\
0.4375	11440\\
0.5	12870\\
0.5625	11440\\
0.625	8008\\
0.6875	4368\\
0.75	1820\\
0.8125	560\\
0.875	120\\
0.9375	16\\
};
\addlegendentry{Grouping, c=4}

\addplot [only marks, mark=*, mark size=1.500pt, mark options={}, draw=cadmiumgreen, fill=cadmiumgreen,thick,only marks]
  table[row sep=crcr]{%
0.875	72\\
};
\addlegendentry{$(64,8,1)$-BIBD}

\addplot [only marks, mark=*, mark size=1.500pt, mark options={}, draw=red, fill=red,very thick]
  table[row sep=crcr]{%
x	y\\
0.015625	64\\
0.1875	64\\
0.5	64\\
0.75	64\\
0.890625	64\\
0.984375	64\\
};

\end{axis}
\end{tikzpicture}%
\caption{Subpacketization, $K=64$.}
\label{fig:K64sub}
\end{subfigure}
    \caption{Results: $K$ is a power of 2.}
    \label{fig:comparisonPlot}
\end{figure}
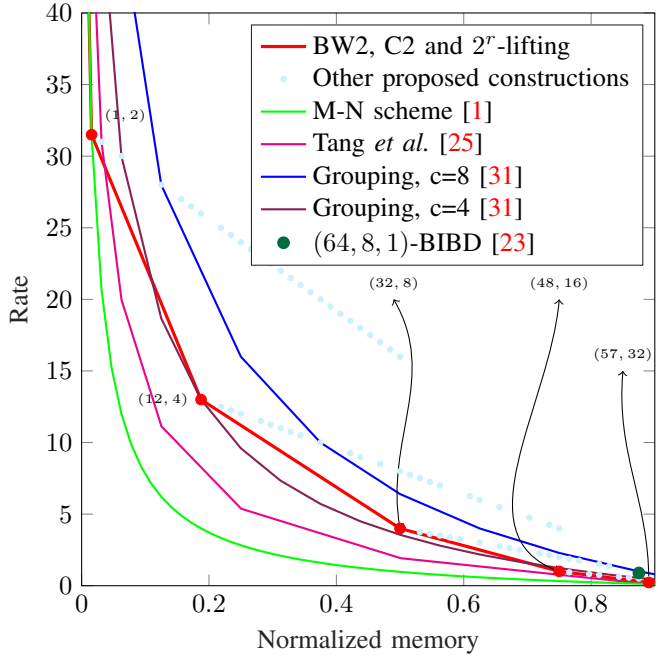
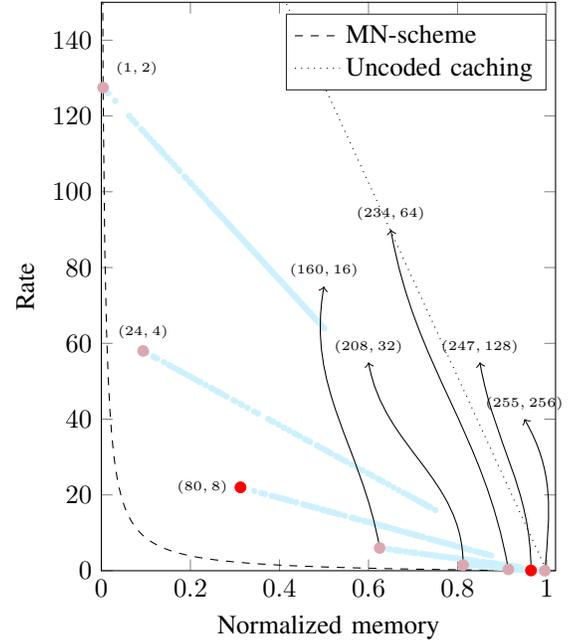
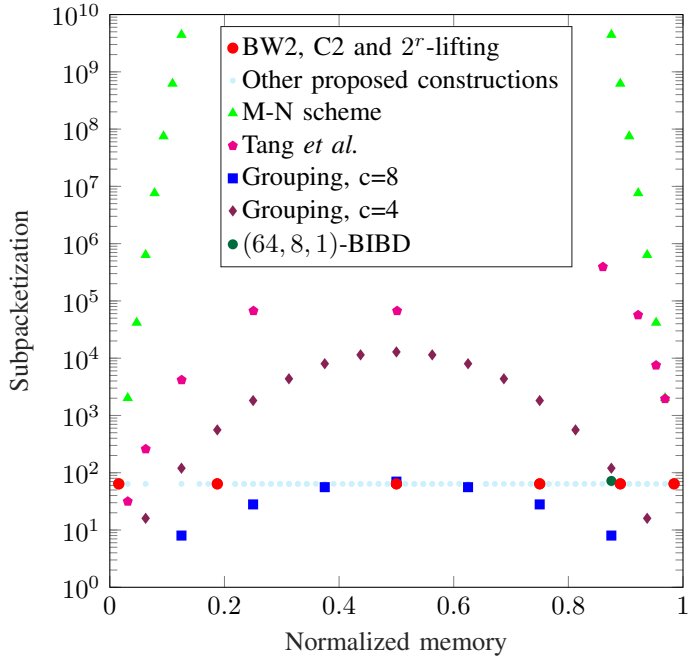
We see that the lifting scheme has better rate than the grouping scheme with $c=8$ for all memory. For memory ratios $0.71<M/N<1$, lifting has better rate than grouping with $c=4$. In Fig.~\ref{fig:K64sub}, we see that for most values of cache memory ratio, the subpacketization of lifting schemes is noticeably better than other comparable schemes, except for the grouping scheme with $c=8$.
Also, our schemes are flexible to provide a wide range of intermediate points without an increase in subpacketization.

\subsection{$K$ with many divisors}
When $K$ has many small divisors, lifting can be applied in multiple ways. A good approach is to consider as many possibilities of lifting as possible and find constructions that achieve the best tradeoff between cache memory and rate. In this section, we present such results for $K=24, 60, 240, 250$. The same procedures can be employed for other such values of $K$ to generate PDAs. 

The memory-rate tradeoffs of various different lifting constructions for $K=24,60,240,250$ are shown in Fig.~\ref{fig:lifting24}.
\begin{figure}[htb]
\begin{subfigure}{0.5\textwidth}
    \centering
    \definecolor{mycolor1}{rgb}{0.80000,0.94700,0.99100}%
\begin{tikzpicture}[scale=1.0]

\begin{axis}[%
clip mode=individual,
clip=false,
width=3in,
height=3in,
at={(0.758in,0.481in)},
xmin=0,
xmax=1.02,
ymin=0,
ymax=25,
ytick = {6,12,18,24},
axis background/.style={fill=white},
xlabel = {$\frac{M}{N}$},
ylabel = {R},
legend style={legend cell align=left, align=left, draw=white!15!black}
]
    \addplot [domain=0:24, samples=25, black,dashed]({x/24},{(24-x)/(1+x)});
    \addlegendentry{MN-scheme}
    \addplot[black,dotted] coordinates {(0,24) (1,0)};
    \addlegendentry{Uncoded caching}
\addplot[only marks, mark=*, mark options={}, mark size=1.000pt, draw=mycolor1,fill=mycolor1] table[row sep=crcr]{%
x	y\\
0	24\\
0.0416666666666667	11.5\\
0.0833333333333333	11\\
0.125	10.5\\
0.166666666666667	10\\
0.208333333333333	9.5\\
0.25	9\\
0.291666666666667	8.5\\
0.291666666666667	4.25\\
0.333333333333333	8\\
0.333333333333333	5.33333333333333\\
0.333333333333333	4\\
0.375	7.5\\
0.375	5\\
0.375	3.75\\
0.416666666666667	7\\
0.416666666666667	4.66666666666667\\
0.416666666666667	3.5\\
0.458333333333333	6.5\\
0.5	6\\
0.5	4\\
0.5	3\\
0.541666666666667	2.75\\
0.583333333333333	3.33333333333333\\
0.583333333333333	2.5\\
0.625	2.25\\
0.625	1.5\\
0.625	1.125\\
0.666666666666667	2.66666666666667\\
0.666666666666667	2\\
0.666666666666667	1\\
0.708333333333333	1.75\\
0.708333333333333	1.16666666666667\\
0.708333333333333	0.875\\
0.75	1.5\\
0.75	1\\
0.75	0.75\\
0.791666666666667	0.833333333333333\\
0.791666666666667	0.625\\
0.833333333333333	0.666666666666667\\
0.833333333333333	0.5\\
0.833333333333333	0.333333333333333\\
0.875	0.375\\
0.875	0.25\\
0.916666666666667	0.166666666666667\\
0.916666666666667	0.125\\
0.958333333333333	0.0416666666666667\\
0.302083333333333	5.58333333333333\\
0.354166666666667	5.16666666666667\\
0.364583333333333	7.625\\
0.458333333333333	4.33333333333333\\
0.5	6\\
0.59375	1.625\\
0.614583333333333	2.3125\\
0.677083333333333	1.29166666666667\\
0.729166666666667	1.08333333333333\\
0.75	1.5\\
0.864583333333333	0.270833333333333\\
1	0\\
};
\node[label={[outer sep=-2pt]45:\tiny{$(1,2)$}}] at (axis cs: 0.0416666666666667,11.5) {} ;
\node[label={[outer sep=-2pt]140:\tiny{$(9,3)$}}] at (axis cs: 0.302083333333333, 5.58333333333333) {} ;
\node[label={[outer sep=-2pt]180:\tiny{$(7,4)$}}] at (axis cs: 0.29, 4.25) {} ;
\path[->, draw] (axis cs: 0.59375, 1.625) to[out = 100, in = 280]
        (axis cs: 0.5, 7) node[above] {\tiny{$(15,6)$}};
\path[->, draw] (axis cs: 0.625, 1.125) to[out = 40, in = 270]
        (axis cs: 0.6, 8) node[above] {\tiny{$(15,8)$}};
\path[->, draw] (axis cs: 0.8333, 0.3333) to[out = 100, in = 280]
        (axis cs: 0.7, 5) node[above] {\tiny{$(20,12)$}};
\path[->, draw] (axis cs: 0.916666666666667, 0.125) to[out = 80, in = 260]
        (axis cs: 0.8, 7) node[above] {\tiny{$(22,16)$}};
\path[->, draw] (axis cs: 0.958333333333333, 0.0416666666666667) to[out = 70, in = 280]
        (axis cs: 0.9, 4.5) node[above] {\tiny{$(23,24)$}};
\addplot[only marks, mark options={solid,draw=red2,fill=red2}] table[row sep=crcr]{%
x	y\\
0	24\\
0.0416666666666667	11.5\\
0.302083333333333	5.58333333333333\\
0.291666666666667	4.25\\
0.59375	1.625\\
0.625	1.125\\
0.833333333333333	0.333333333333333\\
0.916666666666667	0.125\\
0.958333333333333	0.0416666666666667\\ 
    };
    \addplot[only marks, mark options={solid,draw=red,fill=red}] coordinates {(0.29, 4.25) (0.625, 1.125)};
\end{axis}
\end{tikzpicture}%
    \caption{$K=24$}
    \label{fig:first}
\end{subfigure}%
~
\begin{subfigure}{0.5\textwidth}
    \centering
    \definecolor{mycolor1}{rgb}{0.80000,0.94700,0.99100}%
\begin{tikzpicture}[scale=1.0]

\begin{axis}[%
clip mode=individual,
clip=false,
width=3in,
height=3in,
at={(0.758in,0.481in)},
xmin=0,
xmax=1.02,
ymin=0,
ymax=67,
axis background/.style={fill=white},
xlabel = {$\frac{M}{N}$},
ylabel = {R},
legend style={legend cell align=left, align=left, draw=white!15!black}
]
    \addplot [domain=0:60, samples=61, black,dashed]({x/60},{(60-x)/(1+x)});
    \addplot[black,dotted] coordinates {(0,60) (1,0)};
\addplot[only marks, mark=*, mark options={}, mark size=1.000pt, draw=mycolor1, fill=mycolor1] table[row sep=crcr]{%
x	y\\
0	60\\
0.0166666666666667	29.5\\
0.0333333333333333	29\\
0.05	28.5\\
0.0666666666666667	28\\
0.0833333333333333	27.5\\
0.1	27\\
0.133333333333333	26\\
0.166666666666667	25\\
0.183333333333333	12.25\\
0.2	24\\
0.216666666666667	11.75\\
0.233333333333333	23\\
0.233333333333333	15.3333333333333\\
0.25	22.5\\
0.25	11.25\\
0.266666666666667	22\\
0.266666666666667	14.6666666666667\\
0.266666666666667	11\\
0.283333333333333	10.75\\
0.3	21\\
0.3	10.5\\
0.333333333333333	20\\
0.333333333333333	10\\
0.35	19.5\\
0.35	13\\
0.35	9.75\\
0.366666666666667	19\\
0.366666666666667	12.6666666666667\\
0.366666666666667	9.5\\
0.383333333333333	18.5\\
0.4	18\\
0.4	12\\
0.4	9\\
0.416666666666667	17.5\\
0.416666666666667	11.6666666666667\\
0.416666666666667	8.75\\
0.433333333333333	17\\
0.433333333333333	11.3333333333333\\
0.45	16.5\\
0.466666666666667	16\\
0.466666666666667	10.6666666666667\\
0.466666666666667	8\\
0.483333333333333	15.5\\
0.5	15\\
0.5	10\\
0.5	7.5\\
0.516666666666667	7.25\\
0.533333333333333	9.33333333333333\\
0.533333333333333	7\\
0.533333333333333	4.66666666666667\\
0.533333333333333	3.5\\
0.55	6.75\\
0.55	4.5\\
0.566666666666667	5.2\\
0.566666666666667	3.25\\
0.583333333333333	6.25\\
0.6	6\\
0.6	4.8\\
0.6	4\\
0.6	3\\
0.616666666666667	7.66666666666667\\
0.616666666666667	5.75\\
0.616666666666667	4.6\\
0.633333333333333	7.33333333333333\\
0.633333333333333	5.5\\
0.633333333333333	4.4\\
0.633333333333333	2.75\\
0.65	4.2\\
0.65	3.5\\
0.666666666666667	6.66666666666667\\
0.666666666666667	5\\
0.666666666666667	4\\
0.666666666666667	2.5\\
0.683333333333333	4.75\\
0.683333333333333	3.8\\
0.683333333333333	3.16666666666667\\
0.7	3.6\\
0.7	3\\
0.7	2.25\\
0.716666666666667	4.25\\
0.733333333333333	4\\
0.733333333333333	3.2\\
0.733333333333333	2.66666666666667\\
0.733333333333333	2\\
0.75	3.75\\
0.75	3\\
0.75	2.5\\
0.75	1.5\\
0.766666666666667	2.33333333333333\\
0.766666666666667	1.75\\
0.766666666666667	1.4\\
0.383333333333333	18.5\\
0.383333333333333	12.3333333333333\\
0.783333333333333	1.08333333333333\\
0.8	2.4\\
0.8	1.5\\
0.8	1.2\\
0.8	1\\
0.4	18\\
0.816666666666667	1.83333333333333\\
0.816666666666667	1.1\\
0.833333333333333	1.66666666666667\\
0.833333333333333	1.25\\
0.833333333333333	1\\
0.416666666666667	17.5\\
0.85	0.9\\
0.85	0.75\\
0.425	17.25\\
0.866666666666667	0.8\\
0.866666666666667	0.666666666666667\\
0.866666666666667	0.5\\
0.433333333333333	11.3333333333333\\
0.883333333333333	0.7\\
0.883333333333333	0.583333333333333\\
0.883333333333333	0.466666666666667\\
0.883333333333333	0.35\\
0.9	0.6\\
0.9	0.4\\
0.9	0.3\\
0.45	8.25\\
0.916666666666667	0.416666666666667\\
0.916666666666667	0.25\\
0.933333333333333	0.266666666666667\\
0.933333333333333	0.2\\
0.933333333333333	0.166666666666667\\
0.466666666666667	10.6666666666667\\
0.95	0.15\\
0.95	0.1\\
0.966666666666667	0.0666666666666667\\
0.966666666666667	0.05\\
0.483333333333333	10.3333333333333\\
0.983333333333333	0.0166666666666667\\
0.5	15\\
0.5	7.5\\
0.525	7.125\\
0.525	5.7\\
0.270833333333333	14.5833333333333\\
0.55	5.4\\
0.283333333333333	21.5\\
0.575	5.1\\
0.291666666666667	14.1666666666667\\
0.6	4.8\\
0.608333333333333	7.83333333333333\\
0.616666666666667	7.66666666666667\\
0.616666666666667	4.6\\
0.625	4.5\\
0.633333333333333	4.4\\
0.65	4.2\\
0.666666666666667	6.66666666666667\\
0.666666666666667	4\\
0.333333333333333	20\\
0.333333333333333	13.3333333333333\\
0.675	3.9\\
0.7	3.6\\
0.7	1.8\\
0.708333333333333	1.75\\
0.354166666666667	12.9166666666667\\
0.716666666666667	2.83333333333333\\
0.716666666666667	1.7\\
0.725	3.3\\
0.733333333333333	2.66666666666667\\
0.75	3.75\\
0.75	1.5\\
0.766666666666667	1.4\\
0.775	1.35\\
0.8	1.2\\
0.4	18\\
0.816666666666667	1.1\\
0.825	1.05\\
0.833333333333333	1.66666666666667\\
0.833333333333333	1\\
0.416666666666667	17.5\\
0.85	0.9\\
0.858333333333333	0.566666666666667\\
0.433333333333333	11.3333333333333\\
0.883333333333333	0.466666666666667\\
0.883333333333333	0.35\\
0.9	0.4\\
0.9	0.3\\
0.908333333333333	0.366666666666667\\
0.916666666666667	0.25\\
0.458333333333333	10.8333333333333\\
0.925	0.225\\
0.466666666666667	10.6666666666667\\
0.941666666666667	0.116666666666667\\
0.95	0.1\\
0.483333333333333	5.16666666666667\\
0.5	15\\
0.5	7.5\\
0.5	5\\
0.5125	4.875\\
0.566666666666667	4.33333333333333\\
0.583333333333333	6.25\\
0.616666666666667	7.66666666666667\\
0.616666666666667	4.6\\
0.633333333333333	4.4\\
0.6375	4.35\\
0.645833333333333	3.54166666666667\\
0.65	4.2\\
0.666666666666667	6.66666666666667\\
0.666666666666667	4\\
0.666666666666667	3.33333333333333\\
0.670833333333333	3.95\\
0.691666666666667	3.7\\
0.7	3.6\\
0.729166666666667	2.70833333333333\\
0.733333333333333	2.66666666666667\\
0.75	3.75\\
0.75	1.5\\
0.766666666666667	1.4\\
0.783333333333333	1.08333333333333\\
0.8	2.4\\
0.8	1.2\\
0.816666666666667	1.1\\
0.833333333333333	1.66666666666667\\
0.833333333333333	1\\
0.833333333333333	0.833333333333333\\
0.845833333333333	0.925\\
0.85	0.9\\
0.870833333333333	0.516666666666667\\
0.883333333333333	0.466666666666667\\
0.891666666666667	0.433333333333333\\
0.9	0.6\\
0.9	0.4\\
0.9	0.3\\
0.916666666666667	0.25\\
0.945833333333333	0.108333333333333\\
0.95	0.1\\
0.5	15\\
0.60625	4.725\\
0.6375	4.35\\
0.81875	1.0875\\
1	0\\
};
\node[label={[outer sep=-2pt]45:\tiny{$(1,2)$}}] at (axis cs: 0.01666667, 29.5) {} ;
\node[label={[outer sep=-2pt]180:\tiny{$(14,3)$}}] at (axis cs: 0.2333333, 15.33333) {} ;
\node[label={[outer sep=-2pt]180:\tiny{$(11,4)$}}] at (axis cs: 0.1833333, 12.25) {} ;
\path[->, draw] (axis cs: 0.525, 5.7) to[out = 160, in = 0]
        (axis cs: 0.2, 9) node[left] {\tiny{$(\frac{63}{120},5)$}};
\path[->, draw] (axis cs: 0.4833333, 5.166667) to[out = 180, in = 0]
        (axis cs: 0.2, 5.5) node[left] {\tiny{$(\frac{116}{240},6)$}};
\path[->, draw] (axis cs: 0.5333333, 3.5) to[out = 180, in = 0]
        (axis cs: 0.2, 2.5) node[left] {\tiny{$(32,8)$}};
\path[->, draw] (axis cs: 0.7, 1.8) to[out = 160, in = 310]
        (axis cs: 0.25, 25) node[above] {\tiny{$(\frac{84}{120},10)$}};
\path[->, draw] (axis cs: 0.7833333, 1.083333) to[out = 150, in = 270]
        (axis cs: 0.35, 30) node[above] {\tiny{$(47,12)$}};
\path[->, draw] (axis cs: 0.8583333, 0.566667) to[out = 150, in = 270]
        (axis cs: 0.5, 25) node[above] {\tiny{$(\frac{103}{120},15)$}};
\path[->, draw] (axis cs: 0.866667, 0.5) to[out = 120, in = 290]
        (axis cs: 0.62, 15) node[above] {\tiny{$(52,16)$}};
\path[->, draw] (axis cs: 0.8833333, 0.35) to[out = 95, in = 300]
        (axis cs: 0.75, 10) node[above] {\tiny{$(53,20)$}};
\path[->, draw] (axis cs: 0.9333333, 0.166667) to[out = 90, in = 270]
        (axis cs: .65, 40) node[above] {\tiny{$(56,24)$}};
\path[->, draw] (axis cs: 0.941667, 0.116667) to[out = 90, in = 270]
        (axis cs: .75, 35) node[above] {\tiny{$(\frac{113}{120},30)$}};
\path[->, draw] (axis cs: 0.966667, 0.05) to[out = 90, in = 270]
        (axis cs: .85, 30) node[above] {\tiny{$(58,40)$}};
\path[->, draw] (axis cs: 0.9833333, 0.0166667) to[out = 90, in = 270]
        (axis cs: .95, 25) node[above] {\tiny{$(59,60)$}};
\addplot[only marks, mark options={solid,draw=red2,fill=red2}] table[row sep=crcr]{%
x	y\\
0	60\\
0.0166667	29.5\\
0.2333333	15.33333\\
0.1833333	12.25\\
0.525	5.7\\
0.4833333	5.166667\\
0.5333333	3.5\\
0.7	1.8\\
0.7833333	1.083333\\
0.8583333	0.566667\\
0.866667	0.5\\
0.8833333	0.35\\
0.9333333	0.166667\\
0.941667	0.116667\\
0.966667	0.05\\
0.9833333	0.0166667\\
};
    \addplot[only marks, mark options={solid,draw=red,fill=red}] coordinates {(0.7833333, 1.083333) (0.8833333, 0.35)};
\end{axis}
\end{tikzpicture}%
    \caption{$K=60$}
    \label{fig:second_a}
\end{subfigure}\\
\begin{subfigure}{0.5\textwidth}
    \centering
    \input{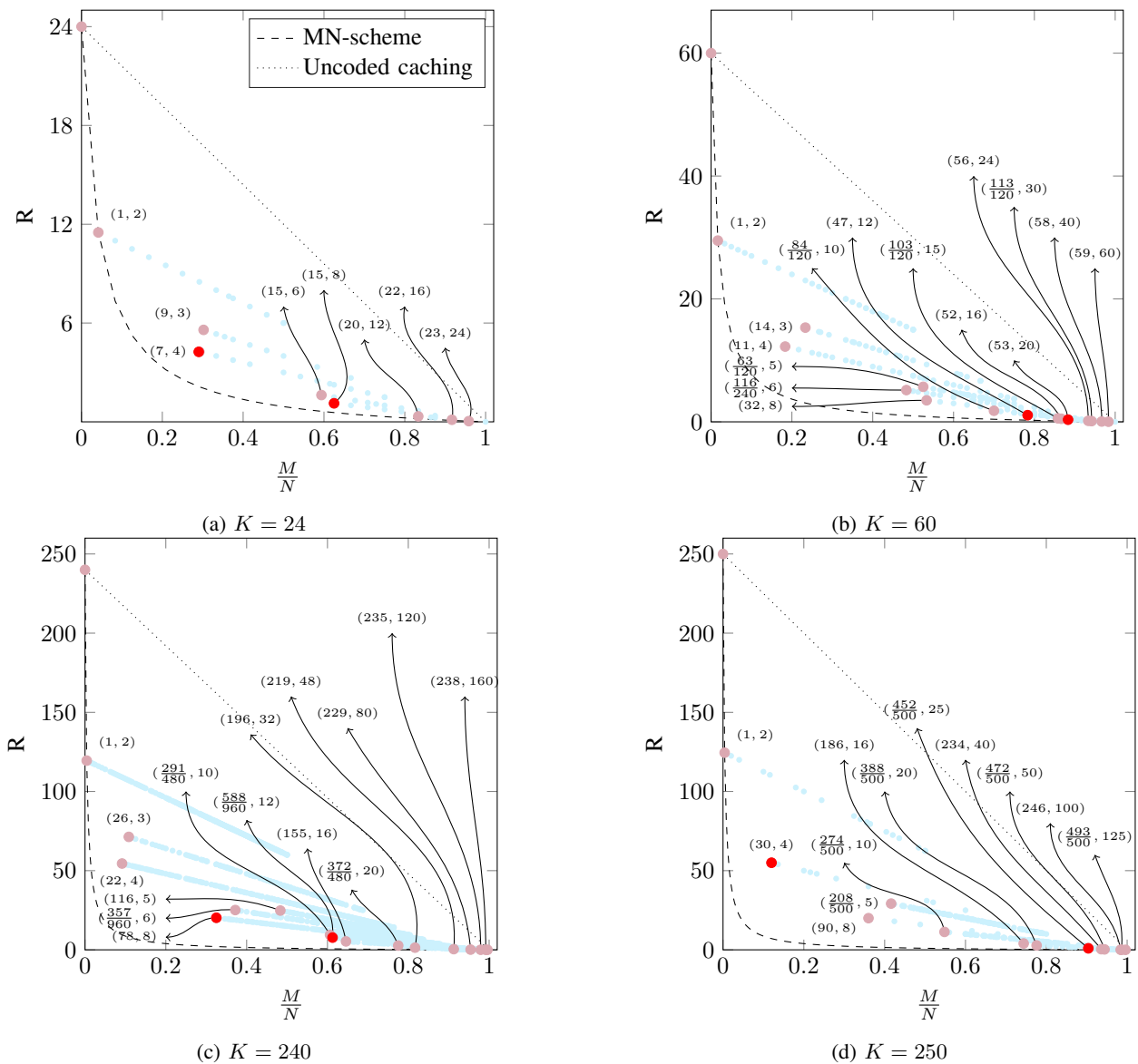}
    \caption{$K=240$}
    \label{fig:third_a}
\end{subfigure}
~
\begin{subfigure}{0.5\textwidth}
    \centering
    \definecolor{mycolor1}{rgb}{0.80000,0.94700,0.99100}%
\begin{tikzpicture}[scale=1.0]

\begin{axis}[%
clip mode=individual,
clip=false,
width=3in,
height=3in,
at={(0.758in,0.481in)},
xmin=0,
xmax=1.02,
ymin=0,
ymax=260,
axis background/.style={fill=white},
xlabel = {$\frac{M}{N}$},
ylabel = {R},
legend style={legend cell align=left, align=left, draw=white!15!black}
]
    \addplot [domain=0:250, samples=251, black,dashed]({x/250},{(250-x)/(1+x)});
    \addplot[black,dotted] coordinates {(0,250) (1,0)};
\addplot[only marks, mark=*, mark options={}, mark size=1.000pt, draw=mycolor1, fill=mycolor1] table[row sep=crcr]{%
x	y\\
0	250\\
0.004	124.5\\
0.008	124\\
0.02	122.5\\
0.04	120\\
0.1	112.5\\
0.116	110.5\\
0.12	55\\
0.136	54\\
0.18	102.5\\
0.2	100\\
0.2	50\\
0.216	49\\
0.232	48\\
0.244	94.5\\
0.28	45\\
0.34	82.5\\
0.356	80.5\\
0.36	40\\
0.36	20\\
0.404	74.5\\
0.42	72.5\\
0.424	18\\
0.436	70.5\\
0.448	27.6\\
0.48	26\\
0.484	64.5\\
0.488	16\\
0.496	25.2\\
0.5	62.5\\
0.504	31\\
0.512	24.4\\
0.52	30\\
0.544	22.8\\
0.56	22\\
0.576	21.2\\
0.592	20.4\\
0.6	25\\
0.6	10\\
0.604	19.8\\
0.608	19.6\\
0.612	9.7\\
0.616	12\\
0.62	19\\
0.64	18\\
0.644	17.8\\
0.648	8.8\\
0.324	84.5\\
0.656	17.2\\
0.66	8.5\\
0.672	16.4\\
0.676	8.1\\
0.68	10\\
0.68	8\\
0.692	15.4\\
0.696	7.6\\
0.7	15\\
0.704	14.8\\
0.712	7.2\\
0.356	80.5\\
0.72	14\\
0.364	79.5\\
0.736	13.2\\
0.74	13\\
0.74	6.5\\
0.744	4\\
0.752	12.4\\
0.756	6.1\\
0.76	6\\
0.768	11.6\\
0.772	5.7\\
0.776	5.6\\
0.784	10.8\\
0.788	10.6\\
0.792	10.4\\
0.792	5.2\\
0.792	2.6\\
0.8	10\\
0.804	4.9\\
0.808	4.8\\
0.808	2.4\\
0.82	4.5\\
0.416	29.2\\
0.836	4.1\\
0.84	4\\
0.84	2\\
0.424	28.8\\
0.852	3.7\\
0.856	1.8\\
0.872	1.6\\
0.88	3\\
0.884	2.9\\
0.896	2.6\\
0.448	27.6\\
0.9	2.5\\
0.904	1.2\\
0.912	0.88\\
0.92	1\\
0.924	0.76\\
0.928	0.72\\
0.936	0.64\\
0.936	0.4\\
0.94	0.6\\
0.944	0.56\\
0.472	26.4\\
0.948	0.52\\
0.948	0.26\\
0.952	0.24\\
0.96	0.4\\
0.964	0.18\\
0.968	0.16\\
0.972	0.14\\
0.98	0.1\\
0.984	0.04\\
0.492	25.4\\
0.988	0.024\\
0.992	0.016\\
0.496	25.2\\
0.996	0.004\\
0.5	62.5\\
0.504	24.8\\
0.506	24.7\\
0.512	24.4\\
0.52	24\\
0.528	23.6\\
0.53	23.5\\
0.534	23.3\\
0.544	22.8\\
0.548	22.6\\
0.548	11.3\\
0.56	22\\
0.56	11\\
0.576	21.2\\
0.584	20.8\\
0.59	20.5\\
0.592	20.4\\
0.594	20.3\\
0.6	10\\
0.604	19.8\\
0.604	9.9\\
0.608	19.6\\
0.612	9.7\\
0.616	9.6\\
0.618	19.1\\
0.62	19\\
0.624	18.8\\
0.624	9.4\\
0.628	9.3\\
0.632	18.4\\
0.64	18\\
0.644	17.8\\
0.648	8.8\\
0.65	17.5\\
0.656	17.2\\
0.66	17\\
0.664	16.8\\
0.664	8.4\\
0.668	16.6\\
0.672	16.4\\
0.672	8.2\\
0.676	8.1\\
0.68	16\\
0.692	15.4\\
0.694	7.65\\
0.696	7.6\\
0.698	15.1\\
0.7	15\\
0.704	14.8\\
0.712	7.2\\
0.356	80.5\\
0.72	7\\
0.724	6.9\\
0.728	6.8\\
0.364	79.5\\
0.74	13\\
0.744	12.8\\
0.744	6.4\\
0.752	12.4\\
0.756	6.1\\
0.76	12\\
0.76	6\\
0.764	5.9\\
0.768	5.8\\
0.772	5.7\\
0.774	5.65\\
0.776	11.2\\
0.776	2.8\\
0.78	5.5\\
0.784	10.8\\
0.786	10.7\\
0.788	10.6\\
0.79	10.5\\
0.792	5.2\\
0.798	5.05\\
0.804	4.9\\
0.808	2.4\\
0.82	4.5\\
0.824	2.2\\
0.83	4.25\\
0.852	3.7\\
0.856	1.8\\
0.864	1.7\\
0.888	2.8\\
0.444	27.8\\
0.892	2.7\\
0.448	27.6\\
0.904	0.96\\
0.906	0.94\\
0.912	0.88\\
0.918	0.82\\
0.924	0.76\\
0.928	0.72\\
0.464	26.8\\
0.93	0.7\\
0.932	0.68\\
0.936	0.64\\
0.94	0.6\\
0.942	0.58\\
0.471	26.45\\
0.944	0.56\\
0.944	0.28\\
0.472	26.4\\
0.948	0.52\\
0.95	0.5\\
0.952	0.24\\
0.956	0.22\\
0.96	0.2\\
0.964	0.18\\
0.966	0.17\\
0.972	0.14\\
0.98	0.1\\
0.492	25.4\\
0.986	0.028\\
0.988	0.024\\
0.99	0.02\\
0.992	0.016\\
0.496	25.2\\
0.5	62.5\\
0.512	24.4\\
0.513	24.35\\
0.528	23.6\\
0.534	23.3\\
0.548	22.6\\
0.555	22.25\\
0.56	22\\
0.562	21.9\\
0.576	21.2\\
0.588	20.6\\
0.59	20.5\\
0.594	20.3\\
0.6	10\\
0.604	19.8\\
0.608	19.6\\
0.612	9.7\\
0.618	19.1\\
0.62	19\\
0.628	9.3\\
0.631	18.45\\
0.632	18.4\\
0.64	18\\
0.644	17.8\\
0.644	8.9\\
0.648	8.8\\
0.656	17.2\\
0.66	17\\
0.664	16.8\\
0.668	16.6\\
0.672	16.4\\
0.672	8.2\\
0.676	8.1\\
0.683	15.85\\
0.692	15.4\\
0.694	7.65\\
0.695	15.25\\
0.696	15.2\\
0.696	7.6\\
0.698	15.1\\
0.7	15\\
0.712	7.2\\
0.728	6.8\\
0.735	13.25\\
0.746	6.35\\
0.752	12.4\\
0.756	6.1\\
0.76	6\\
0.764	5.9\\
0.765	5.875\\
0.768	11.6\\
0.772	11.4\\
0.774	5.65\\
0.776	11.2\\
0.78	11\\
0.781	5.475\\
0.784	10.8\\
0.786	10.7\\
0.788	10.6\\
0.788	5.3\\
0.792	10.4\\
0.792	5.2\\
0.798	5.05\\
0.804	4.9\\
0.808	2.4\\
0.82	4.5\\
0.825	4.375\\
0.83	4.25\\
0.848	3.8\\
0.856	1.8\\
0.86	1.75\\
0.864	1.7\\
0.885	2.875\\
0.444	27.8\\
0.892	2.7\\
0.911	0.89\\
0.912	0.88\\
0.918	0.82\\
0.924	0.76\\
0.928	0.72\\
0.464	26.8\\
0.932	0.68\\
0.936	0.64\\
0.939	0.61\\
0.94	0.6\\
0.471	26.45\\
0.472	26.4\\
0.946	0.54\\
0.947	0.53\\
0.952	0.24\\
0.96	0.4\\
0.964	0.18\\
0.965	0.175\\
0.966	0.17\\
0.968	0.16\\
0.973	0.135\\
0.4885	25.575\\
0.978	0.11\\
0.98	0.1\\
0.492	25.4\\
0.988	0.024\\
0.992	0.016\\
0.5	62.5\\
0.513	24.35\\
0.534	23.3\\
0.548	22.6\\
0.555	22.25\\
0.562	21.9\\
0.576	21.2\\
0.59	20.5\\
0.6145	19.275\\
0.618	19.1\\
0.644	17.8\\
0.644	8.9\\
0.656	17.2\\
0.657	17.15\\
0.672	8.2\\
0.683	15.85\\
0.6895	15.525\\
0.692	15.4\\
0.695	15.25\\
0.698	15.1\\
0.735	6.625\\
0.746	6.35\\
0.765	5.875\\
0.772	11.4\\
0.7725	5.6875\\
0.774	5.65\\
0.776	11.2\\
0.78	11\\
0.781	5.475\\
0.784	10.8\\
0.911	0.89\\
0.918	0.82\\
0.9355	0.645\\
0.936	0.64\\
0.939	0.61\\
0.472	26.4\\
0.4885	25.575\\
0.492	25.4\\
0.5	62.5\\
0.513	24.35\\
0.534	23.3\\
0.548	22.6\\
0.562	21.9\\
0.59	20.5\\
0.6145	19.275\\
0.644	17.8\\
0.672	8.2\\
0.683	15.85\\
0.68625	15.6875\\
0.6895	15.525\\
0.765	5.875\\
0.7725	5.6875\\
0.774	5.65\\
0.776	11.2\\
0.781	5.475\\
0.784	10.8\\
0.918	0.82\\
0.939	0.61\\
0.4885	25.575\\
0.513	24.35\\
0.513	24.35\\
1	0\\
};
\node[label={[outer sep=-2pt]45:\tiny{$(1,2)$}}] at (axis cs: 0.004,124.5) {} ;
\node[label={[outer sep=-2pt]90:\tiny{$(30,4)$}}] at (axis cs: 0.12,55) {} ;
\node[label={[outer sep=-2pt]180:\tiny{$(\frac{208}{500},5)$}}] at (axis cs: 0.416,29.2) {} ;
\node[label={[outer sep=-2pt]190:\tiny{$(90,8)$}}] at (axis cs: 0.36, 20) {} ;
\path[->, draw] (axis cs: 0.548,11.3) to[out = 100, in = 300]
        (axis cs: 0.3, 55) node[above] {\tiny{$(\frac{274}{500},10)$}};
\path[->, draw] (axis cs: 0.744,4) to[out = 130, in = 270]
        (axis cs: 0.3, 120) node[above] {\tiny{$(186,16)$}};
\path[->, draw] (axis cs: 0.776,2.8) to[out = 110, in = 280]
        (axis cs: 0.4, 100) node[above] {\tiny{$(\frac{388}{500},20)$}};
\path[->, draw] (axis cs: 0.904,0.96) to[out = 140, in = 290]
        (axis cs: 0.48, 140) node[above] {\tiny{$(\frac{452}{500},25)$}};
\path[->, draw] (axis cs: 0.936,0.4) to[out = 130, in = 280]
        (axis cs: 0.6, 120) node[above] {\tiny{$(234,40)$}};
\path[->, draw] (axis cs: 0.944,0.28) to[out = 110, in = 270]
        (axis cs: 0.71, 100) node[above] {\tiny{$(\frac{472}{500},50)$}};
\path[->, draw] (axis cs: 0.984,0.04) to[out = 100, in = 270]
        (axis cs: 0.81, 80) node[above] {\tiny{$(246,100)$}};
\path[->, draw] (axis cs: 0.986,0.028) to[out = 85, in = 290]
        (axis cs: 0.92, 60) node[above] {\tiny{$(\frac{493}{500},125)$}};
\addplot[only marks, mark options={solid,draw=red2,fill=red2}] table[row sep=crcr]{%
x	y\\
0	250\\
0.004	124.5\\
0.12	55\\
0.416	29.2\\
0.36	20\\
0.548	11.3\\
0.744	4\\
0.776	2.8\\
0.904	0.96\\
0.936	0.4\\
0.944	0.28\\
0.984	0.04\\
0.986	0.028\\
0.996	0.004\\
    };
    \addplot[only marks, mark options={solid,draw=red,fill=red}] coordinates {(0.12,55) (0.904,0.96)};
\end{axis}
\end{tikzpicture}%
    \caption{$K=250$}
    \label{fig:third}
\end{subfigure}
\caption{Memory-rate tradeoff for $K=24,64,240,250$.}
\label{fig:lifting24}
\end{figure}
For $K=24$, memory-rate tradeoff of the coded caching schemes obtained from our constructions is compared with uncoded and Maddah-Ali-Niesen (MN) schemes. 
Our PDA-based schemes are close to the MN scheme at different parts of the rate versus memory trade-off curve.
Some selected points in the plots are highlighted in darker shade of red. For these selected points, we provide a complete description of the lifting construction in the table below. The notation $(K_b,f_b)_{Z_b}^{g_b}\xrightarrow[X]{(m,n)_{Z_c,Z_*}} (K,f)_{Z}^{g}$ denotes the lifting of a $g_b$-regular $(K_b,f_b,Z_b,\frac{K_b(f_b-Z_b)}{g_b})$ PDA to a $g$-regular $(K,f,Z,\frac{K(f-Z)}{g})$ PDA using a set of $m\times n$ Blackburn-compatible PDAs having $Z_c$ $\x$'s per column and $P_*$ having $Z_*$ $\x$'s per column. $X$ denotes the construction method.

\medskip
\begin{table}[]
    \centering
    \caption{Lifting constructions of some of the PDAs from Fig.~\ref{fig:comparisonPlot} and \ref{fig:lifting24}.}
    \label{tab:samplecon}
\begin{tabular}{|c|l|l|}
\hline
    $K$ & Example 1 & Example 2 \\
    \hline
    24  & $(4,4)_{1}^{2}\xrightarrow[BW3]{(6,6)_{1,4}} (24,24)_{7}^{4}$ & $(3,3)_{1}^{2}\xrightarrow[2^r \mathrm{ Lifting}]{(8,8)_{4,7}} (24,24)_{15}^{8}$\\
    \hline
    60  & $(5,5)_{1}^{2}\xrightarrow[\mathrm{Randomized}]{(12,12)_{9,11}} (60,60)_{47}^{12}$ & $(3,3)_{1}^{2}\xrightarrow[C2]{(4,4)_{1,3}} (12,12)_{5}^{4}\xrightarrow[\mathrm{Basic Lifting}]{(5,5)_{4,5}} (60,60)_{53}^{20}$\\
    \hline
    64  & $(8,8)_{1}^{2}\xrightarrow[BW2]{(8,8)_{1,5}} (64,64)_{12}^{4}$ & $(2,2)_{1}^{2}\xrightarrow[C2]{(4,4)_{1}} (8,8)_{4}^{4}\xrightarrow[C2]{(8,8)_{1}} (64,64)_{32}^{8}$\\
    \hline
    240 & $(5,5)_{1}^{2}\xrightarrow[BW3]{(6,6)_{1,4}} (30,30)_{8}^{4}\xrightarrow[C2]{(8,8)_{1,7}} (240,240)_{78}^{8}$ & $(4,4)_{1}^{2}\xrightarrow[\mathrm{Eq. \scriptsize{\eqref{eq:12by3}}}]{(3,12)_{3,8}} (12,48)_{17}^{3}\xrightarrow[\mathrm{Basic Lifting}]{(4,4)_{1,4}}(48,192)_{99}^{6}$\\
    &&\multicolumn{1}{r|}{$\xrightarrow[\mathrm{Basic Lifting}]{(5,5)_{1,5}} (240,960)_{588}^{12}$}\\
    \hline
    250 & $(5,5)_{1}^{2}\xrightarrow[BW3]{(50,50)_{1,26}} (250,250)_{30}^{4}$ & $(2,2)_{1}^{2}\xrightarrow[\mathrm{Eq. \scriptsize{\eqref{eq:10by5}}}]{(5,10)_{5,8}} (10,20)_{13}^{5}\xrightarrow[\mathrm{Tiling}]{(25,25)_{20,24}} (250,500)_{452}^{25}$\\
    \hline
    256 & $(4,4)_{1}^{2}\xrightarrow[BW2]{(8,8)_{1,5}} (32,32)_{8}^{4}\xrightarrow[C2]{(8,8)_{1,7}} (256,256)_{80}^{8}$ & $(2,2)_{1}^{2}\xrightarrow[2^r \mathrm{ Lifting}]{(128,128)_{120,127}} (256,256)_{247}^{128}$\\
    \hline
\end{tabular}
\end{table}

Table~\ref{tab:samplecon} provides a sample of how a multitude of lifting sequences are possible when $K$ has many small divisors. For instance, in Example 1 for $K=240$, a $2$-regular $5\times 5$ base PDA is first lifted to a $4$-regular $30\times 30$ PDA, which is in turn lifted to an $8$-regular $240 \times 240$ PDA. The two liftings use Blackburn-compatible PDAs from Constructions BW3 and C2, respectively.

\subsection{Randomized construction: $f=K$, $f=2K$, $f=4K$}
To obtain subpacketization as a small multiple of the number of users, randomized construction of Blackburn-compatible PDAs shown in Algorithm~\ref{alg:random} can be used. For $K=250$ and $K=256$, Table~\ref{tab:random} shows some of the resulting lifting constructions. 
\begin{table}[htb]
    \centering
    \caption{Randomized construction.}\label{tab:random}
\begin{tabu}{ccccccc}
    \toprule
    $K$ & $f$ & Gain & Construction && $M/N$ & $R$\\
    \midrule 
250 & 250 & 125 & $(2,2)_{1}^{2}\xrightarrow{(125,125)_{124,124}}(250,250)_{248}^{125}$ && 0.992 & 0.016 \\ 
250 & 250 & 25 & $(2,2)_{1}^{2}\xrightarrow{(5,5)_{3,4}}(10,10)_{7}^{5}\xrightarrow{(25,25)_{22,24}}(250,250)_{234}^{25}$ && 0.936 & 0.64 \\ 
250 & 250 & 10 & $(5,5)_{1}^{2}\xrightarrow{(5,5)_{3,4}}(25,25)_{16}^{5}\xrightarrow{(10,10)_{6,9}}(250,250)_{198}^{10}$ && 0.792 & 5.2 \\ 
250 & 250 & 5 & \makecell[c]{$(2,2)_{1}^{2}\xrightarrow{(5,5)_{3,4}}(10,10)_{7}^{5}\xrightarrow{(5,5)_{0,4}}$\phantom{000000}\\\phantom{000000}$(50,50)_{28}^{5}\xrightarrow{(5,5)_{0,4}}(250,250)_{112}^{5}$} && 0.448 & 27.6 \\ 
256 & 256 & 4 & $(64,64)_{2}^{2}\xrightarrow{(4,4)_{1,3}}(256,256)_{66}^4$ && 0.2578 & 47.5\\
250 & 500 & 125 & $(2,4)_{2}^{2}\xrightarrow{(125,125)_{124,124}}(250,500)_{496}^{125}$ && 0.992 & 0.016 \\ 
250 & 500 & 25 & $(2,2)_{1}^{2}\xrightarrow{(5,10)_{6,8}}(10,20)_{14}^{5}\xrightarrow{(25,25)_{22,24}}(250,500)_{468}^{25}$ && 0.936 & 0.64 \\ 
250 & 500 & 10 & $(5,5)_{4}^{5}\xrightarrow{(5,5)_{0,4}}(25,25)_{16}^{5}\xrightarrow{(10,20)_{11,18}}(250,500)_{387}^{10}$ && 0.774 & 5.65 \\ 
250 & 500 & 5 & \makecell[c]{$(2,2)_{1}^{2}\xrightarrow{(5,10)_{6,8}}(10,20)_{14}^{5}\xrightarrow{(5,5)_{0,4}}$\phantom{000000}\\\phantom{000000}$(50,100)_{56}^{5}\xrightarrow{(5,5)_{0,4}}(250,500)_{224}^{5}$} && 0.448 & 27.6 \\ 
256 & 512 & 4 & $(64,128)_{2}^{2}\xrightarrow{(4,4)_{1,3}}(256,512)_{132}^4$ && 0.2578 & 47.5\\
250 & 1000 & 125 & $(2,8)_{4}^{2}\xrightarrow{(125,125)_{124,124}}(250,1000)_{992}^{125}$ && 0.992 & 0.016 \\ 
250 & 1000 & 25 & $(2,4)_{2}^{2}\xrightarrow{(5,10)_{6,8}}(10,40)_{28}^{5}\xrightarrow{(25,25)_{22,24}}(250,1000)_{936}^{25}$ && 0.936 & 0.64 \\ 
250 & 1000 & 10 & $(5,5)_{4}^{5}\xrightarrow{(5,5)_{0,4}}(25,25)_{16}^{5}\xrightarrow{(10,40)_{21,36}}(250,1000)_{765}^{10}$ && 0.765 & 5.875 \\ 
250 & 1000 & 5 & \makecell[c]{$(2,4)_{2}^{2}\xrightarrow{(5,10)_{6,8}}(10,40)_{28}^{5}\xrightarrow{(5,5)_{0,4}}$\phantom{000000}\\\phantom{000000}$(50,200)_{112}^{5}\xrightarrow{(5,5)_{0,4}}(250,1000)_{448}^{5}$} && 0.448 & 27.6 \\ 
256 & 1024 & 4 & $(64,256)_{2}^{2}\xrightarrow{(4,4)_{1,3}}(256,1024)_{264}^4$ && 0.2578 & 47.5\\ 
\bottomrule
    \end{tabu}
\end{table}
The memory-rate tradeoffs are shown in Fig.~\ref{fig:random}.

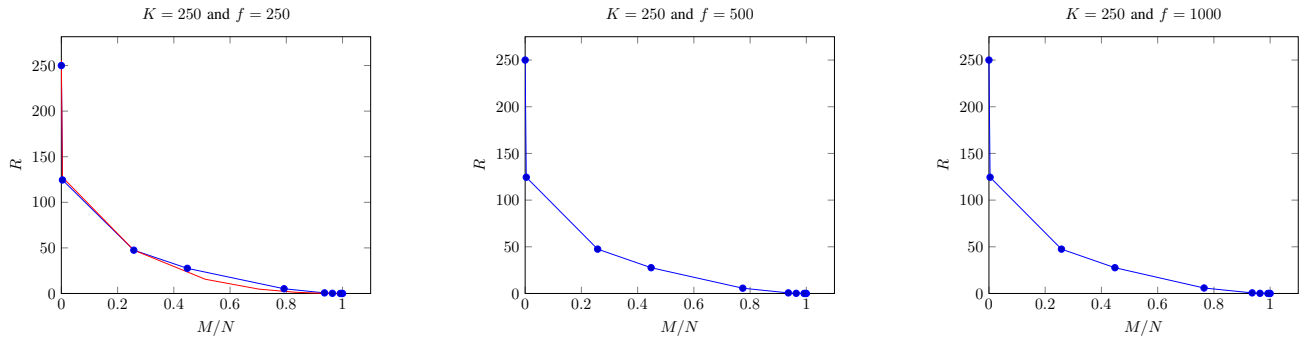
\begin{figure}[!htb]
\minipage{0.32\textwidth}
    \begin{tikzpicture}[scale=0.6]
      \begin{axis}[
        title={$K=250$ and $f=250$},
        xmin=0,
        ymin=0,
        ylabel=$R$,
        xlabel=$M/N$,]
        \addplot coordinates {(1,0) (0.996, 0.004) (0.992, 0.016) (0.964, 0.18) (0.936, 0.64) (0.792, 5.2) (0.448, 27.6) (0.2578125, 47.5) (0.004, 124.5) (0, 250)};
\addplot [color=red]
  table[row sep=crcr]{%
0	256\\
0.00390625	127.5\\
0.2578125	47.5\\
0.51171875	15.625\\
0.703125	4.75\\
0.83203125	1.34375\\
0.9140625	0.34375\\
0.96484375	0.0703125\\
0.99609375	0.00390625\\
1	0\\
};
      \end{axis}
    \end{tikzpicture}
\endminipage\hfill
\minipage{0.32\textwidth}
    \begin{tikzpicture}[scale=0.6]
      \begin{axis}[
        title={$K=250$ and $f=500$},
        xmin=0,
        ymin=0,
        ylabel=$R$,
        xlabel=$M/N$,]
        \addplot coordinates {(1,0) (0.996, 0.004) (0.992, 0.016) (0.964, 0.18) (0.936, 0.64) (0.774, 5.65) (0.448, 27.6) (0.2578125, 47.5) (0.004, 124.5) (0, 250)};
      \end{axis}
    \end{tikzpicture}
\endminipage\hfill
\minipage{0.32\textwidth}%
    \begin{tikzpicture}[scale=0.6]
      \begin{axis}[
        title={$K=250$ and $f=1000$},
        xmin=0,
        ymin=0,
        ylabel=$R$,
        xlabel=$M/N$,]
        \addplot coordinates {(1,0) (0.996, 0.004) (0.992, 0.016) (0.964, 0.18) (0.936, 0.64) (0.765, 5.875) (0.448, 27.6) (0.2578125, 47.5) (0.004, 124.5) (0, 250)};
      \end{axis}
    \end{tikzpicture}
\endminipage
\caption{Memory-rate tradeoff for coded caching schemes from PDAs lifted using randomly constructed Blackburn compatible PDAs. The red curve added for comparison in the first plot is the MR tradeoff obtained using $2^r$-lifting for $K=f=256$.}\label{fig:random}
\end{figure}
We observe that a wide variety of memory vs rate tradeoffs are obtained by lifting with Blackburn compatible PDAs obtained using the randomized algorithm. For $K=f=250$, we have added the memory-rate tradeoff obtained by deterministic $2^r$-lifting for $K=f=256$ (red line) for comparison. We see that the randomized method provides tradeoffs that are comparable with the deterministic one. 

To summarize, in this results section, we have clearly demonstrated the versatility of the lifting construction. For a given number of users, we have shown how the idea of lifting can readily provide multiple lifting constructions for PDAs offering a range of tradeoffs between cache memory size and rate at very low subpacketization.

\section{Conclusion}\label{sec:conc}
We propose several constructions for coded caching schemes with subpacketization linear with the number of users using the framework of placement delivery arrays. We presented a general scheme to construct PDAs with a coding gain of 2. 
We introduced the notion of Blackburn compatibility of PDAs and used this concept for a several lifting constructions of PDAs with higher coding gains. We showed that Blackburn-compatible PDAs can be built from existing sets of Blackburn compatible PDAs through our blockwise and recursive constructions. We also proposed an algorithm to randomly construct Blackburn compatible PDAs for any arbitrary setting. In many regimes, our lifting constructions are shown to perform better compared to other existing schemes for lower subpacketization.


\end{document}